\documentclass[12pt]{article}

\usepackage[english]{babel}

\usepackage[margin=1in]{geometry}  
\usepackage{setspace}              
\usepackage{xcolor}                 
\usepackage{times}                 
\usepackage[authoryear]{natbib}

\usepackage{graphicx}
\usepackage[colorlinks=true, allcolors=blue]{hyperref}
\usepackage{float}
\usepackage{verbatim}
\usepackage{multirow}
\usepackage{graphicx} 
\usepackage{booktabs} 
\usepackage{amsmath,amssymb, amsbsy}
\usepackage{mathtools}
\usepackage{tikz-cd}
\usepackage{bbm}
\usepackage{tikz}
\usepackage{changepage}
\usepackage{makecell}
\usepackage{hyperref}
\usepackage{algorithm,algpseudocode, caption, needspace}

\usepackage{amsthm}
\usepackage[shortlabels]{enumitem}
\newtheoremstyle{roman}
  {\topsep}   
  {\topsep}   
  {\normalfont} 
  {}          
  {\normalfont\bfseries} 
  {.}         
  { }         
  {}          

\newtheorem{assump}{Assumption}
\newtheorem{lemma}{Lemma}
\newtheorem{prop}{Proposition}
\newtheorem{corr}{Corollary}

\theoremstyle{remark}

\theoremstyle{definition}
\newtheorem{definition}{Definition}

\DeclareMathOperator*{\argmin}{argmin}

\DeclareMathOperator*{\ep}{\mathbb{E}}

\newcommand{\vvv}{\vspace{.3cm}}

\newcommand{\norm}[1]{\left\lVert#1\right\rVert}

\newcommand{\R}{\mathbb{R}}

\newcommand{\AR}{\operatorname{AR}}
\newcommand{\cset}{\mathcal C_{n,\mathrm{pAR}}}
\newcommand{\BI}{B_I}
\newcommand{\ghat}{\widehat\gamma_n}
\newcommand{\dhat}{\widehat\delta_n}
\newcommand{\pihat}{\widehat\pi_n}
\newcommand{\sighat}{\widehat\sigma_n}
\newcommand{\gdag}{\gamma^{\dagger}}
\newcommand{\Gammao}{\Gamma_0}
\newcommand{\Qn}{W_n}
\newcommand{\Dn}{D_n}
\newcommand{\epsb}{\epsilon_b}
\newcommand{\Op}{O_p}

\definecolor{styleblue}{RGB}{0,70,140}

\definecolor{recoveredgreen}{RGB}{0,105,75}

\definecolor{evergreen}{RGB}{0,102,78}

\title{Profiled Anderson--Rubin Test: Robust Inference Allowing for Direct Effects of Instruments}
\author{Jung Hyub Lee \footnote{Graduate School of Economics, University of Tokyo, 7-3-1 Hongo, Bunkyo-ku, Tokyo, Japan 113-0033, Email: \href{mailto:jhlee@e.u-tokyo.ac.jp}{jhlee@e.u-tokyo.ac.jp}.}  }
\date{\today}
\begin{document}

\maketitle
\vspace{-2em}
\begin{center}
\small
\end{center}
\begin{abstract}
Instrumental variable analyses often rely on the assumption that instruments affect the outcome only through the endogenous regressor. In many applications, researchers can defend only a plausible range for direct effects of instruments, while conventional sensitivity analyses may be unreliable when instruments are weak. This paper proposes the profiled Anderson--Rubin (pAR) test, which considers all direct effects within a prespecified range and retains a candidate effect whenever at least one admissible direct effect is consistent with the data. Under the maintained sampling assumptions, the procedure controls false rejection for each compatible candidate without requiring strong instruments. The paper provides practical methods for constructing confidence sets and distinguishes substantive bounds from bounds tied to the realized instrument design. Simulations and applications to retirement saving and returns to schooling show that the procedure resembles conventional sensitivity analysis when instruments are strong but preserves substantially more uncertainty when identification is weak. \\
\end{abstract}

\noindent\textbf{Keywords:} Instrumental variables; Weak instruments; Exclusion restriction; Sensitivity analysis; Anderson--Rubin test; Partial identification.\\

\noindent\textbf{JEL Classification:} C12, C26, C61.
\newpage
\section{Introduction}

Instrumental variable designs use excluded variables to isolate variation in an endogenous regressor. Their credibility rests on the exclusion restriction, which requires the instruments to affect the outcome only through the endogenous regressor. However, this requirement is often difficult to defend. In many applications, substantive knowledge instead supports a limited range of direct effects of instruments.

\citet{conley2012plausibly} (hereafter, CHR) formalize this view by replacing exact exclusion with a researcher-specified range or distribution for direct effects of instruments. Although their framework makes the sensitivity assumption visible, the resulting inference still depends on how each candidate effect is evaluated. Conventional Wald intervals can be unreliable when the instruments are weak. Furthermore, a grid search can also leave gaps between the values that are checked. The main focus of this paper is how to preserve the sensitivity interpretation in the spirit of CHR without requiring strong instruments or a discrete approximation.

This paper proposes the profiled Anderson--Rubin (pAR) test. For any candidate treatment effect, the observed relationships among the instruments, treatment, and outcome determine the direct effect needed to reconcile that candidate with the data. The pAR procedure considers every direct effect in the prespecified range and retains the candidate whenever the Anderson--Rubin test accepts the candidate for at least one admissible direct effect. Instead of only using a grid of selected values, therefore, inversion uses every admissible direct effect.

Under the paper's conditional Gaussian model, the Anderson--Rubin test evaluated at the compatible direct effect of instruments has its usual reference distribution. Searching over the admissible range cannot make rejection more likely. Hence, the pAR test rejects no more often than the stated level for each fixed candidate that is compatible with the range, regardless of instrument strength. 

The interpretation depends on how the range of direct effects is chosen. A range stated in substantive units yields a population compatibility analysis. A range scaled by the realized instrument design yields exact conditional inference for that design and can simplify computation. Furthermore, its connection to a stable population restriction emerges as the sample grows. Also, the method can be extended to accommodate exogenous control variables.

The paper also studies how the method behaves as the sample grows. It shows that the test rejects fixed incompatible candidates with probability approaching one, describes when the inverted confidence set is bounded, and establishes that the set approaches the population range of compatible effects under regular conditions. It also characterizes rejection near the boundary of that range. These results explain why allowing a positive range of direct effects makes the test conservative, especially when several instruments contribute to the reference cutoff.

Computation is part of the contribution. The search over admissible direct effects has a structured form that permits fast evaluation. The paper develops complementary algorithms that reuse the same preliminary work across candidate effects, while a common normalization gives a direct formula. These methods make construction of confidence set practical without relying on a grid.

The simulations examine the exact-exclusion benchmark, candidates inside and outside the admissible range, behavior near the boundary, complete inversion of confidence set, and computation. Within the maintained simulation design, the results validate the theoretical results. Exact exclusion tracks the nominal benchmark, positive ranges produce conservativeness for compatible candidates, and rejection increases as a candidate becomes incompatible with the maintained range. The alternative implementations agree numerically, and reusable preprocessing improves repeated evaluation.

The empirical illustrations show why both the sensitivity range and the component test matter. In the retirement-saving application of CHR, eligibility provides strong identifying variation, and the pAR and Wald sensitivity sets are nearly indistinguishable. In the quarter-of-birth application of \citet{angrist1991does}, the richer instrument specification provides much weaker identifying variation. The pAR confidence sets are then substantially wider than the corresponding Wald unions and expand more rapidly as the allowed direct effects increase. These comparisons suggest that changing the component test matters little with strong instruments but can significantly change the reported uncertainty with weak instruments. 


The paper proceeds as follows. Section \ref{sec:model} defines the model and establishes the conditional pAR result. Section \ref{sec:asymptotics} studies large-sample behavior. Section \ref{sec:estimation} develops the computational methods. Section \ref{sec:montecarlo} reports Monte Carlo evidence on the theoretical results. Section \ref{sec:empirical} provides empirical illustrations. The appendix provides proofs and implementation algorithms.

\subsection{Related literature}

\paragraph{Inference with weak instruments.}
Standard instrumental variable estimators and Wald tests can have severe size distortions when the instruments are weak (\citet{nelson1990distribution,bound1995problems,staiger1997instrumental}). More generally, valid confidence sets may need to be unbounded near nonidentification (\citet{gleser1987nonexistence,dufour1997some}). The Anderson--Rubin test provides an exact benchmark in the classical Gaussian model (\citet{anderson1949estimation}). The K/LM and CLR procedures can improve power by using score or conditioning information (\citet{kleibergen2002pivotal,moreira2003conditional}), and later work studies optimality, extensions, inversion, and  endogeneity parameters (\citet{kleibergen2007generalizing, moreira2009tests,mikusheva2010robust,magnusson2010inference, doko2014identification}). The pAR procedure retains the Anderson--Rubin component because its null reference law does not depend on instrument strength.

\paragraph{Violation of exclusion restriction.}
Research on exclusion violations follows several approaches. \citet{van2018beyond} use a subsample in which the instrument does not predict the endogenous regressor to inform the direct effect. Other papers obtain identification by restricting the relation between instrument strength and direct effects or by assuming that a sufficient fraction of instruments is valid (\citet{kolesar2015identification,kang2016instrumental}). Selection methods have also been adapted to shift-share designs (\citet{apfel2024relaxing}), while related work uses substantive causal pathways or independence restrictions to assess instrument validity (\citet{mellon2025rain,burauel2023evaluating}). The pAR procedure neither selects valid instruments nor assigns a prior distribution to their direct effects. It holds the admissible range fixed and changes the component test and the method used to evaluate that range.

\paragraph{Closely related literature.}

This paper builds on the support restriction approach of CHR. They relax exact exclusion by specifying admissible direct effects of instruments and taking the union of confidence intervals over that support. This paper retain this idea and replace the conventional Wald components in their implementation with Anderson–Rubin tests. \cite{masten2021salvaging} further study how exclusion restrictions can be relaxed when an instrumental variable model is falsified. The contribution of this paper relative to CHR and \citet{masten2021salvaging} is the theoretical and computational analysis of this Anderson–Rubin implementation. The paper characterizes the geometry and tail behavior of its inverted confidence set, establish convergence to the population identified interval under regular identification, and derive local rejection probabilities that explain conservativeness at regular boundaries. For computation, the paper expresses profiling as a quadratic projection onto an ellipsoid, reduce the binding profiling problem at positive radii to a scalar equation, and obtain a closed form under the aligned sample normalization. These results show how the geometry of admissible direct effects and instrument strength determine the behavior and computation of sensitivity confidence sets, while preserving the support-union logic of CHR.

\cite{wang2018sensitivity} develop sensitivity analysis based on the Anderson--Rubin test that remains valid with weak instruments and derive power calculations. Their main procedure bounds instrument invalidity in units of the structural error standard deviation and compares the unadjusted Anderson–Rubin statistic with a cutoff determined by the noncentral $F$-distribution and the admissible range. This paper instead impose a prespecified support directly on the vector of direct instrument effects and minimize the Anderson–Rubin statistic over that support while retaining a cutoff determined by the central $F$-distribution.


\section{Population membership and conditional pAR inference}
\label{sec:model}
 
 
The model has one scalar endogenous regressor and \(r\geq1\) instruments. The vector of direct instrument effects is the nuisance parameter. Exact exclusion sets this vector to zero, whereas the general analysis restricts it to a support prespecified the researcher. This formulation separates instrument relevance from the exclusion restriction.
 
\subsection{Population reduced form and identified set}
\label{sec:partially_identified}

Consider the structural and first-stage equations
$$
    \begin{aligned}    
    y_i &= x_i\beta + z_i'\gamma + \epsilon_i, \\
    x_i &= z_i'\pi + v_i,
    \end{aligned}
$$
where $x_i\in\mathbb R$ is the endogenous regressor, $z_i\in\mathbb R^r$ is the instrument vector, and $\gamma\in\mathbb R^r$ collects the direct effects of $z_i$. Stacking the observations gives
$$
    \begin{aligned}
        Y &= X\beta + Z\gamma + \epsilon, \\
        X &= Z\pi + V,
    \end{aligned}
$$
where $X\in\mathbb R^n$ and $Z\in\mathbb R^{n\times r}$. Substitution of the first stage into the outcome equation yields the reduced form
$$
    \begin{aligned}
        Y &= Z\delta + U, \\
        X &= Z\pi + V,
    \end{aligned}
$$
where $\delta=\pi\beta+\gamma$ and $U=V\beta+\epsilon$. Let $P$ denote the observable distribution of $(y_i,x_i,z_i)$, and define
\[
    W_{ZZ}(P)=E_P[z_i z_i'],
    \qquad
    \pi(P)=W_{ZZ}(P)^{-1}E_P[z_i x_i],
    \qquad
    \delta(P)=W_{ZZ}(P)^{-1}E_P[z_i y_i].
\]
Assume $W_{ZZ}(P)\succ0$, where $A\succ0$ means that $A$ is positive definite. For each candidate value $b\in\mathbb R$, define the vector of compatible direct effects
\begin{equation}       \label{eq:compatible_gamma}
    \gamma^\dagger(P,b)
    =
    \delta(P)-\pi(P)b.
\end{equation}
For any $\gamma\in\mathbb R^r$, the corresponding population moment is
\[
    E_P\!\left[z_i\{y_i-x_i b-z_i'\gamma\}\right]
    =
    W_{ZZ}(P)\{\gamma^\dagger(P,b)-\gamma\}.
\]
Since $W_{ZZ}(P)$ is nonsingular, $\gamma^\dagger(P,b)$ is the unique nuisance value that satisfies the population moment condition at $b$. Equivalently,
\[
    E_P\!\left[z_i\{y_i-x_i b-z_i'\gamma\}\right]=0_r
    \quad\Longleftrightarrow\quad
    \delta(P)=\pi(P)b+\gamma.
\]
Thus, the reduced form identifies the combination $\pi(P)b+\gamma$, but it does not identify $b$ and $\gamma$ separately.

The admissible set of direct effects (i.e., the support of $\gamma$) provides the additional restriction on $\gamma$. Let $\Gamma_0\subseteq\mathbb R^r$ be a population support prespecified by the researcher. The following definitions use this support to characterize the observationally compatible values of $b$.

\begin{definition}[Population identified set]
\label{def:pop_ident_set}
\[
    B_I(P;\Gamma_0)
    =
    \{b\in\mathbb R:\gamma^\dagger(P,b)\in\Gamma_0\}.
\]
\end{definition}

\begin{definition}[Membership null hypothesis]
\label{def:membership_null}
For a fixed candidate $b \in \mathbb R$, 
\[
    H_{0,\mathrm{pop}}^I(b;\Gamma_0):
    \gamma^\dagger(P,b)\in\Gamma_0.
\]
\end{definition}

Definitions \ref{def:pop_ident_set} and \ref{def:membership_null} depend only on the observable distribution $P$ and do not presume that $P$ determines a unique structural pair $(b,\gamma)$. The following equivalences make the membership interpretation explicit:
\[
    \begin{aligned}
    \gamma^\dagger(P,b)\in\Gamma_0
    &\iff
    b\in B_I(P;\Gamma_0) \\
    &\iff \exists\,\gamma\in\Gamma_0
    \text{ such that }
    E_P\!\left[z_i\{y_i-x_i b-z_i'\gamma\}\right]=0_r.
    \end{aligned}
\]

For the population geometry and computation below, consider the ellipsoidal support
$$    
    \Gamma_0(g,W_0)
    =
    \{\gamma\in\mathbb R^r:\gamma'W_0\gamma\le g^2\},
    \qquad
    g\ge0,
    \quad
    W_0 \succ 0,
$$
where $g$ and $W_0$ are prespecified and do not vary with $n$. The radius $g$ controls the amount of direct effect allowed, whereas $W_0$ controls the relative cost of different directions in instrument space. When $g=0$, the support is $\Gamma_0(0,W_0)=\{0_r\}$ and exact exclusion is imposed.

The scale of $W_0$ must be specified together with $g$ because for every $c>0$,
\[
    \Gamma_0(g,cW_0)=\Gamma_0(g/\sqrt c,W_0).
\]
Thus, the radius has no interpretation apart from its metric $W_0$. The membership definition and the validity result below apply more generally to any nonempty admissible set.

For the ellipsoidal support, define the squared required radius
\[
    m_{W_0}(b)
    =
    \gamma^\dagger(P,b)'W_0\gamma^\dagger(P,b)
    =
    \{\delta(P)-\pi(P)b\}'W_0\{\delta(P)-\pi(P)b\}.
\]
Therefore, the population identified set is
\[
    B_I\{P;\Gamma_0(g,W_0)\}
    =
    \{b:m_{W_0}(b)\le g^2\}.
\]
For any $a\in\mathbb R^r$, define the $W_0$ norm by $\|a\|_{W_0}=(a'W_0a)^{1/2}$, and write
\[
    a_{W_0}=\pi(P)'W_0\pi(P),
    \qquad
    b_{W_0}=\pi(P)'W_0\delta(P),
    \qquad
    d_{W_0}=\delta(P)'W_0\delta(P).
\]
If the first-stage strength is $\pi(P)=0_r$, then $m_{W_0}(b)=d_{W_0}$ for every $b$. Therefore,
\[
    B_I\{P;\Gamma_0(g,W_0)\}
    =
    \begin{cases}
        \mathbb R, & d_{W_0}\le g^2,\\[0.1cm]
        \varnothing, & d_{W_0}>g^2.
    \end{cases}
\]
This is the no-identification case. If $d_{W_0}\le g^2$, an admissible direct effect rationalizes the reduced-form outcome coefficient for every value of $b$. If $d_{W_0}>g^2$, the support restriction is falsified because changing $b$ cannot change the compatible direct effect when $\pi(P)=0_r$.

Suppose further that $a_{W_0}>0$, and define
\[
    b_{W_0}^{\circ}=\frac{b_{W_0}}{a_{W_0}},
    \qquad
    g_*
    =
    \sqrt{d_{W_0}-\frac{b_{W_0}^2}{a_{W_0}}}.
\]
Then we have
\[
    m_{W_0}(b)
    =
    a_{W_0}(b-b_{W_0}^{\circ})^2+g_*^2,
\]
and
\[
B_I\{P;\Gamma_0(g,W_0)\}
=
\begin{cases}
\varnothing, & g<g_*,\\[0.15cm]
\{b_{W_0}^{\circ}\}, & g=g_*,\\[0.15cm]
\displaystyle
\left[
 b_{W_0}^{\circ}-\frac{\sqrt{g^2-g_*^2}}{\sqrt{a_{W_0}}},\;
 b_{W_0}^{\circ}+\frac{\sqrt{g^2-g_*^2}}{\sqrt{a_{W_0}}}
\right], & g>g_*.
\end{cases}
\]
Thus, the population identified set is empty below $g_*$, a singleton at $g_*$, and a nondegenerate interval above $g_*$.

The falsification threshold $g_*$ is the smallest radius that makes the observable reduced form compatible with any value of $b$:
$$
    g_* = \min_{\widetilde b\in\mathbb R}
    \|\delta(P)-\pi(P)\widetilde b\|_{W_0}.
$$
A radius below $g_*$ falsifies the maintained support restriction. When $g>g_*$, the identified interval has length $2\sqrt{g^2-g_*^2}/\sqrt{a_{W_0}}$. The curvature $a_{W_0}$ measures how quickly the compatible direct effect changes as $b$ moves away from $b_{W_0}^{\circ}$. The interval therefore expands as $\pi(P)$ becomes small in the $W_0$ metric and reaches the no-identification case at $\pi(P)=0_r$.

The single instrument case makes this comparison transparent. Let $r=1$, $W_0>0$, and $\pi(P)\ne0$. Then $g_*=0$ and
\[
    B_I\{P;\Gamma_0(g,W_0)\}
    =
    \left[
       \frac{\delta(P)}{\pi(P)}-\frac{g}{|\pi(P)|\sqrt{W_0}},\;
       \frac{\delta(P)}{\pi(P)}+\frac{g}{|\pi(P)|\sqrt{W_0}}
    \right].
\]
Its length is $2g/\{|\pi(P)|\sqrt{W_0}\}$. A larger admissible direct effect widens the set, whereas a stronger first stage narrows it.

It is worth noting that \citet{masten2021salvaging} introduce the falsification adaptive set collects the parameter values compatible with models on the falsification frontier, which consists of the smallest nonfalsified relaxations.  The compatible radius in this paper can be interpreted as a falsification point within a prespecified family of ellipsoidal restrictions on direct effects. The main difference concerns the inferential target. Rather than aggregate identified sets along the frontier, we take the admissible set of direct effects as given and test whether each candidate treatment effect is compatible with that restriction.  


\subsection{Conditional Gaussian reduced-form model}
\label{sec:conditional_ar}

This subsection states the finite-sample argument in reduced-form terms and conditions on the realized instrument matrix $Z$. Let $Z\in\mathbb R^{n\times r}$ satisfy $\operatorname{rank}(Z)=r<n$. Conditional on $Z$, the sample obeys
$$
    \begin{aligned}
        Y&=Z\delta(P)+U, \\
        X&=Z\pi(P)+V.
    \end{aligned}
$$

\begin{assump}[Conditional Gaussian reduced form]
\label{ass:dist}
Conditional on $Z$, the rows satisfy
\[
    \begin{pmatrix}U_i\\V_i\end{pmatrix}
    \stackrel{\mathrm{ind}}{\sim}
    N\!\left[
       \begin{pmatrix}0\\0\end{pmatrix},
       \Sigma_{UV}
    \right],
    \qquad
    \Sigma_{UV} \succ 0,
    \qquad i=1,\ldots,n.
\]
\end{assump}

\begin{assump}[Fixed-support conditional asymptotics]
\label{ass:asymptotic}
Fix the reduced-form law \(P\), the number of instruments \(r\), the
covariance matrix \(\Sigma_{UV}\), the support
\(\Gamma_0(g,W_0)\), and the nominal level \(\alpha\in(0,1)\). For every \(n\), let \(Z_n\in\mathbb R^{n\times r}\) have rank \(r\), and suppose that
\(n-r\ge2\) for all sufficiently large \(n\). Along the realized
design sequence,
\[
    W_n(Z)=\frac{Z'Z}{n}
    \longrightarrow W_{ZZ}(P)\succ0.
\]
Conditional on \(Z\), the rows of \((U_n,V_n)\) are independent
Gaussian vectors with covariance matrix \(\Sigma_{UV}\succ0\).
All probability limits and distributional limits in this section
are taken under the conditional laws \(P(\,\cdot\mid Z)\) along
design sequences satisfying the displayed convergence.
\end{assump}

Assumption \ref{ass:dist} is deliberately strong because it delivers the exact finite-sample reference distribution used in this paper. Conditional homoskedastic Gaussian errors and full rank of \(Z\) yield the central \(F\) distribution at the compatible nuisance value. The unrestricted covariance matrix permits endogeneity through correlation between the reduced-form outcome and first-stage errors, and the result imposes no lower bound on \(\pi(P)\). Assumption \ref{ass:asymptotic} is used for the fixed-support limits and for relating sample-normalized and population supports. 


For a fixed candidate $b\in\mathbb R$, define the compatible reduced-form residual
\begin{equation*}
    \epsilon_b=U-Vb.
    \label{eq:eta_b}
\end{equation*}
For any candidate nuisance value $\gamma\in\mathbb R^r$, the regression residual is
\begin{equation*}
\begin{aligned}
    R_b(\gamma)
    &={Y-Xb-Z\gamma}\\
    &=\epsilon_b+Z\{\gamma^\dagger(P,b)-\gamma\}.
\end{aligned}
\label{eq:compatible_residual}
\end{equation*}
This identity holds for every tested value $b$ and does not require $b$ to equal a designated structural coefficient. Define
\[
    K=Z'Z,
    \qquad
    P_Z=ZK^{-1}Z',
    \qquad
    M_Z=I_n-P_Z,
\]
and let the sample projection coefficient be
\begin{equation}
    \begin{aligned}
    \widehat\gamma_n(b)
    &=
    K^{-1}Z'(Y-Xb) \\
    &=\gamma^\dagger(P,b)+K^{-1}Z'\epsilon_b,
    \label{eq:gammahat_compatible}
    \end{aligned}
\end{equation}
where $\epsilon_b \mid Z\sim N(0,\sigma_b^2 I_n)$ with $\sigma_b^2>0$ under Assumption \ref{ass:dist}.

For a fixed nuisance value $\gamma$, define the fixed-$\gamma$ AR statistic
$$
    AR(b;\gamma)
    =
    \frac{R_b(\gamma)'P_ZR_b(\gamma)}
    {R_b(\gamma)'M_ZR_b(\gamma)/(n-r)},
$$
where the numerator and denominator use the orthogonal decomposition
$$
    R_b(\gamma)'R_b(\gamma)
    =
    R_b(\gamma)'P_ZR_b(\gamma)
    +
    R_b(\gamma)'M_ZR_b(\gamma).
$$
The numerator is the $K$-metric discrepancy between $\widehat\gamma_n(b)$ and $\gamma$:
\begin{equation*}
\begin{aligned}
    N_n(b;\gamma)
    &={R_b(\gamma)'P_ZR_b(\gamma)}\\
    &=\{\widehat\gamma_n(b)-\gamma\}'K
      \{\widehat\gamma_n(b)-\gamma\}.
\end{aligned}
\label{eq:AR_numerator_projection}
\end{equation*}
Using \eqref{eq:gammahat_compatible},
\[
    \widehat\gamma_n(b)-\gamma\mid Z
    \sim
    N\!\left(
       \gamma^\dagger(P,b)-\gamma,
       \sigma_b^2K^{-1}
    \right),
\]
or, equivalently,
\[
    Z'R_b(\gamma)\mid Z
    \sim
    N\!\left(
       K\{\gamma^\dagger(P,b)-\gamma\},
       \sigma_b^2K
    \right).
\]
It follows that the numerator divided by $\sigma_b^2$ has a noncentral chi-squared distribution with noncentrality parameter $\lambda_b(\gamma)$:
\begin{equation}
    \frac{N_n(b;\gamma)}{\sigma_b^2}\biggm| Z
    \sim
    \chi_r^2\{\lambda_b(\gamma)\},
    \qquad
    \lambda_b(\gamma)
    =
    \frac{
      \{\gamma-\gamma^\dagger(P,b)\}'K
      \{\gamma-\gamma^\dagger(P,b)\}
    }{\sigma_b^2}.
    \label{eq:noncentrality_compatible}
\end{equation}
The denominator is
\begin{equation*}
    D_n(b)
    =
    \frac{R_b(\gamma)'M_ZR_b(\gamma)}{n-r}
    =
    \frac{\epsilon_b'M_Z\epsilon_b}{n-r}.    \label{eq:ar_denominator_compatible}
\end{equation*}
For future use, we define
$$
    \sighat (b) = D_n(b)^{1/2}.
$$
For each fixed $b\in\mathbb R$, the denominator is positive almost surely under Assumption \ref{ass:dist}. It does not depend on $\gamma$ because $M_ZZ=0$, and
\[
    \frac{(n-r)D_n(b)}{\sigma_b^2}\biggm| Z
    \sim
    \chi_{n-r}^2.
\]
The numerator depends on $P_Z\epsilon_b$, whereas the denominator depends on $M_Z\epsilon_b$. Conditional Gaussianity and $P_ZM_Z=0$ make these projections independent.

Consequently, conditional on $Z$, the $\gamma$-specific AR statistic has the noncentral $F$-distribution
\[
    AR(b;\gamma) \mid Z
    \sim
    rF_{r,n-r}\{\lambda_b(\gamma)\},
\]
where $\lambda_b(\gamma)$ is defined in \eqref{eq:noncentrality_compatible}. At the compatible nuisance value $\gamma^\dagger(P,b)$,
\begin{equation}
    \frac{AR\{b; \gamma^\dagger(P,b)\}}{r}
    \biggm| Z
    \sim
    F_{r,n-r}.
    \label{eq:central_compatible_ar}
\end{equation}
Therefore, the compatible nuisance gives the central reference statistic under the membership null $H_{0,\mathrm{pop}}^I(b;\Gamma_0)$. It is determined by the observable law and the tested value $b$, rather than by selecting an unknown structural direct effect. The exact conditional critical value is
$$
    c_{1-\alpha,n}=rF^{-1}_{r,n-r}(1-\alpha), \quad 0<\alpha<1.
$$
For fixed $r$, $rF_{r,n-r}$ converges to $\chi_r^2$ as $n$ increases. Thus, the chi-squared critical value is a large-sample approximation to the exact Gaussian reference value, and $c_{1-\alpha,n}\longrightarrow q_{r,1-\alpha}$, where $q_{r,1-\alpha}$ is the $(1-\alpha)$ quantile of $\chi_r^2$.

\subsection{Profiled Anderson--Rubin statistic}
\label{sec:uniform}
For a fixed candidate $b\in\mathbb R$ and any nonempty admissible set $\Gamma$, define the profiled Anderson--Rubin (pAR) statistic.

\begin{definition}[Profiled Anderson-Rubin statistic]
    For a fixed $b \in \mathbb R$ and a given $\Gamma$,
    \label{eq:par_definition}
    $$
        pAR(b;\Gamma) = \inf_{\gamma\in\Gamma}AR(b;\gamma).
    $$
\end{definition}
The formulation separates instrument relevance from exclusion restriction. An instrument can be informative about the endogenous regressor and still affect the outcome directly. Rather than discarding such an instrument, the pAR statistic asks whether the observed reduced form can be reconciled with a direct effect inside the maintained support. Then it reports the conclusions implied by an explicit bound on direct effects.

The pointwise inversion set based on the pAR statistic is
\begin{equation*}
    \mathcal C_{n,\mathrm{pAR}}(\Gamma)
    =
    \{b\in\mathbb R:pAR(b;\Gamma)\le c_{1-\alpha,n}\}.
    \label{eq:par_inversion}
\end{equation*}
For a fixed nuisance value $\gamma\in\mathbb R^r$, define the $\gamma$-specific AR acceptance set
\[
    \mathcal C_{\gamma,1-\alpha}^{\rm AR}
    =
    \{b:AR(b;\gamma)\le c_{1-\alpha,n}\}.
\]
When the infimum is attained, the following equivalences hold.
\begin{align*}
    b\in \mathcal C_{n,\mathrm{pAR}}(\Gamma)
    &\Longleftrightarrow
    \inf_{\gamma\in\Gamma}AR(b;\gamma)
    \le c_{1-\alpha,n}\\
    &\Longleftrightarrow
    \exists\,\gamma\in\Gamma
    \text{ such that }
    AR(b;\gamma)\le c_{1-\alpha,n}.
\end{align*}
Hence,
\begin{equation}
    \label{eq:continuous_support_union}
    \mathcal C_{n,\mathrm{pAR}}(\Gamma)
    =
    \bigcup_{\gamma\in\Gamma}
    \mathcal C_{\gamma,1-\alpha}^{\rm AR}.
\end{equation}

Equation \eqref{eq:continuous_support_union} gives the support-union interpretation of CHR. The pAR procedure replaces a conventional Wald component with an AR component and evaluates the continuous union by optimization. This change matters under weak identification because the Gaussian reference distribution does not depend on first-stage strength.


The validity argument has two steps. First, profiling cannot produce a statistic larger than the AR statistic at any admissible nuisance value. Second, under the membership null, the compatible nuisance value is admissible and gives the central conditional reference distribution in (\ref{eq:central_compatible_ar}).

\begin{lemma}[Profiling inequality]
\label{lem:par_minimum}
For every fixed $b \in \mathbb R$ and every $\widetilde\gamma\in\Gamma$,
\[
    pAR(b;\Gamma)
    \le
    AR(b;\widetilde\gamma).
\]
In particular, under $H_{0,\mathrm{pop}}^I(b;\Gamma_0)$,
\[
    pAR(b;\Gamma_0)
    \le
    AR\{b;\gamma^\dagger(P,b)\}.
\]
\end{lemma}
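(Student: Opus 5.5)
The first inequality follows directly from the definition of the infimum, and the second comes from applying it to the compatible nuisance value. I will first check that $AR(b;\gamma)$ is well defined for every $\gamma\in\mathbb R^r$. The denominator $D_n(b)=\epsilon_b'M_Z\epsilon_b/(n-r)$ does not depend on $\gamma$, because $M_ZZ=0$, and it is almost surely positive for fixed $b$ under Assumption \ref{ass:dist}. Hence, on this almost-sure event, the map $\gamma\mapsto AR(b;\gamma)$ is a real-valued function on $\Gamma$. Since $\Gamma$ is nonempty and the statistic is nonnegative, $pAR(b;\Gamma)=\inf_{\gamma\in\Gamma}AR(b;\gamma)$ is a well-defined element of $[0,\infty)$.

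For any $\widetilde\gamma\in\Gamma$, the value $AR(b;\widetilde\gamma)$ belongs to the set $\{AR(b;\gamma):\gamma\in\Gamma\}$. An infimum is a lower bound of its set, so $pAR(b;\Gamma)\le AR(b;\widetilde\gamma)$. This argument does not require the infimum to be attained, and it uses no distributional assumption beyond well-definedness of the statistic.

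For the second claim, I will take $\Gamma=\Gamma_0$. By Definition \ref{def:membership_null}, the null $H_{0,\mathrm{pop}}^I(b;\Gamma_0)$ states that $\gamma^\dagger(P,b)\in\Gamma_0$. The vector $\gamma^\dagger(P,b)=\delta(P)-\pi(P)b$ is a deterministic quantity fixed by $P$ and $b$, not a data-dependent choice. Substituting $\widetilde\gamma=\gamma^\dagger(P,b)$ into the first inequality then gives $pAR(b;\Gamma_0)\le AR\{b;\gamma^\dagger(P,b)\}$.

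No step here is a genuine obstacle. The only point needing care is that $\gamma^\dagger(P,b)$ is nonrandom, so that it can be plugged in as a fixed admissible value. This is also what later allows \eqref{eq:central_compatible_ar} to be applied to the right-hand side.
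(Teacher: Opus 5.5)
Your proposal is correct and matches the paper's proof. The first inequality holds because every $\widetilde\gamma\in\Gamma$ is feasible in the infimum, and the second follows by substituting the nonrandom compatible value $\gamma^\dagger(P,b)\in\Gamma_0$ under the membership null; your extra check that $D_n(b)>0$ almost surely is harmless, and the paper leaves it implicit.
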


\begin{prop}[Pointwise finite-sample validity]
\label{prop:pointwise_validity}
\label{prop:uniform_ci}
Fix $b\in\mathbb R$. Suppose Assumption \ref{ass:dist} holds and $\Gamma_0$ is prespecified. If
\[
    H_{0,\mathrm{pop}}^I(b;\Gamma_0):
    \gamma^\dagger(P,b)\in\Gamma_0,
\]
then, 
\begin{equation*}
    \Pr_P\!\left
      \{pAR(b;\Gamma_0)>c_{1-\alpha,n}\}
      \mid Z
    \right)
    \le \alpha.
    \label{eq:pointwise_level}
\end{equation*}
Equivalently,
\begin{equation*}
    \Pr_P\{b\in \mathcal C_{n,\mathrm{pAR}}(\Gamma_0)\mid Z\}
    \ge 1-\alpha.
    \label{eq:pointwise_coverage}
\end{equation*}
The same inequalities hold unconditionally.
\end{prop}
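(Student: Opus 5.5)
The argument is a direct combination of Lemma \ref{lem:par_minimum} with the exact conditional reference law \eqref{eq:central_compatible_ar}. Fix $b$ and assume $H_{0,\mathrm{pop}}^I(b;\Gamma_0)$, so that $\gamma^\dagger(P,b)\in\Gamma_0$. Since $\Gamma_0$ is prespecified and $\gamma^\dagger(P,b)$ is a nonrandom function of $P$ and $b$, the compatible nuisance is a fixed admissible element of $\Gamma_0$. It does not depend on the data. Applying Lemma \ref{lem:par_minimum} with $\widetilde\gamma=\gamma^\dagger(P,b)$ gives the pathwise inequality
\[
    pAR(b;\Gamma_0)\le AR\{b;\gamma^\dagger(P,b)\}
\]
on every sample point where both sides are defined. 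This yields the event inclusion $\{pAR(b;\Gamma_0)>c_{1-\alpha,n}\}\subseteq\{AR\{b;\gamma^\dagger(P,b)\}>c_{1-\alpha,n}\}$.

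Next, I would take conditional probabilities given $Z$ and use \eqref{eq:central_compatible_ar}: conditional on $Z$, $AR\{b;\gamma^\dagger(P,b)\}/r\sim F_{r,n-r}$. Since $c_{1-\alpha,n}=rF^{-1}_{r,n-r}(1-\alpha)$ and the $F$ distribution is continuous, the right-hand event has conditional probability exactly $\alpha$. Monotonicity of probability then gives the stated bound $\le\alpha$. The coverage statement is the complement, because $b\in\mathcal C_{n,\mathrm{pAR}}(\Gamma_0)$ is by definition the event $pAR(b;\Gamma_0)\le c_{1-\alpha,n}$. For the unconditional version, I would integrate the conditional bound over the distribution of $Z$. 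The conditional bound holds for every realization of $Z$ with rank $r<n$, and the law of iterated expectations preserves the inequality.

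The only step needing care is measurability and well-definedness, which I expect to be the main (minor) obstacle. The infimum over $\Gamma_0$ need not be attained, so I rely only on the inequality $\inf\le$ value at an admissible point, which needs no attainment. The denominator $D_n(b)$ is almost surely positive and independent of $\gamma$. Hence $pAR(b;\Gamma_0)=N$-profile$/D_n(b)$, where the numerator is $\inf_{\gamma\in\Gamma_0}\{\widehat\gamma_n(b)-\gamma\}'K\{\widehat\gamma_n(b)-\gamma\}$. This is a continuous function of $\widehat\gamma_n(b)$ (a distance to a set), so $pAR$ is a measurable random variable. Even without that, the event inclusion lets one bound the outer probability, so the conclusion is unaffected.
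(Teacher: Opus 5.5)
Your proposal is correct and follows the paper's proof essentially step for step. The paper also applies Lemma \ref{lem:par_minimum} at $\widetilde\gamma=\gamma^\dagger(P,b)$ to get the event inclusion, uses the central $F_{r,n-r}$ law of the compatible-nuisance AR statistic to get conditional probability exactly $\alpha$, and obtains the coverage and unconditional statements by complementation and iterated expectations. The only differences are that the paper re-derives the central $F$ law inside the proof from the independent $\chi^2_r$ and $\chi^2_{n-r}$ projections of $\epsilon_b$ rather than citing \eqref{eq:central_compatible_ar}, and that your measurability remark is a harmless addition the paper omits.
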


Proposition \ref{prop:pointwise_validity} gives a conservative pointwise test. The distribution of pAR statistic can depend on the size and shape of $\Gamma_0$ and on the location of $\gamma^\dagger(P,b)$ within that set. Also, the result is pointwise in the tested value $b$. Note that it is distinct from the simultaneous coverage of the population identified set:
\[
    \Pr_P\{B_I(P;\Gamma_0)\subseteq \mathcal C_{n,\mathrm{pAR}}(\Gamma_0)\mid Z\}
    \ge1-\alpha.
\]
The simultaneous statement would require a separate argument that is uniform in $b$. 

To state its uniformity across data-generating processes, define the following membership-null class.

\begin{definition}[Membership-null class] For a fixed $b \in \mathbb R$ and a given $\Gamma_0$,
\[
    \mathcal P_0^I(b,\Gamma_0)
    =
    \left\{
       P:\ P\text{ satisfies Assumption \ref{ass:dist} and }
       \gamma^\dagger(P,b)\in\Gamma_0
    \right\}.
\]
\end{definition}
\noindent
This notation separates uniformity across data-generating processes for a fixed tested value $b$ from simultaneous coverage across values of $b$. Proposition \ref{prop:pointwise_validity} holds for every $P\in\mathcal P_0^I(b,\Gamma_0)$ and therefore implies the unconditional bound
\[
    \sup_{P\in\mathcal P_0^I(b,\Gamma_0)}
    \Pr_P\! 
      \{pAR(b;\Gamma_0)>c_{1-\alpha,n}\}
    \le\alpha,
\]
The bound follows from the event containment in Lemma \ref{lem:par_minimum}:
$$
\{pAR(b;\Gamma_0)>c_{1-\alpha,n}\}
    \subseteq
    \left\{
       AR\{b;\gamma^\dagger(P,b)\}>c_{1-\alpha,n}
    \right\}.
$$
Under Assumption \ref{ass:dist}, the event on the right has conditional probability $\alpha$. Therefore, the pAR test is conservative in general.

\subsection{Exogenous control variables}
\label{sec:exogenous_control}
Although the formal analysis in the previous subsections omits exogenous controls for simplicity, the model can include control variables. Let $c_i\in\mathbb R^k$ contain an intercept and exogenous controls and $P_C$ denote the joint observable distribution. Consider the augmented equations
\[
    \begin{split}
        y_i&=x_i\beta+z_i'\gamma+c_i'\theta+\epsilon_i, \\
        x_i&=z_i'\pi+c_i'\kappa+v_i.
    \end{split}
\]
Suppose that $Q_{CC}(P_C)=E_{P_C}[c_i c_i']\succ0$. For each scalar $q_i\in\{y_i,x_i\}$, define the residual from the population linear projection on $c_i$ by
\[
    \widetilde q_i
    =
    q_i-E_{P_C}[q_i c_i']Q_{CC}(P_C)^{-1}c_i.
\]
The same definition is applied componentwise to form $\widetilde z_i$. By linearity of the projection,
\[
    \widetilde y_i
    =
    \widetilde x_i\beta+\widetilde z_i'\gamma+\widetilde\epsilon_i,
    \qquad
    \widetilde x_i
    =
    \widetilde z_i'\pi+\widetilde v_i,
\]
where $\widetilde\epsilon_i$ and $\widetilde v_i$ are defined by the same projection. Let
\[
    \begin{aligned}
    W_{ZZ\cdot C}(P_C)
    &=E_{P_C}[\widetilde z_i\widetilde z_i'],\\
    \pi_C(P_C)
    &=[W_{ZZ\cdot C}(P_C)]^{-1}E_{P_C}[\widetilde z_i\widetilde x_i],\\
    \delta_C(P_C)
    &=[W_{ZZ\cdot C}(P_C)]^{-1}E_{P_C}[\widetilde z_i\widetilde y_i].
    \end{aligned}
\]
If $W_{ZZ\cdot C}(P_C)\succ0$, the direct effect compatible with a candidate $b$ is
\[
    \gamma_C^\dagger(P_C,b)
    =
    \delta_C(P_C)-\pi_C(P_C)b,
\]
and
\[
    E_{P_C}\!\left[
       \widetilde z_i
       \{\widetilde y_i-\widetilde x_i b-\widetilde z_i'\gamma\}
    \right]
    =
    W_{ZZ\cdot C}(P_C)\{\gamma_C^\dagger(P_C,b)-\gamma\}.
\]
Therefore, the control-adjusted identified set is
\[
    B_{I,C}(P_C;\Gamma_0)
    =
    \{b\in\mathbb R:\gamma_C^\dagger(P_C,b)\in\Gamma_0\}.
\]
Thus, controls change the residualized reduced-form coefficients but not the identified-set membership argument. The preceding ellipsoidal geometry follows after replacing $\delta(P)$ and $\pi(P)$ with $\delta_C(P_C)$ and $\pi_C(P_C)$.

The sample calculation has the same form. Stack $c_i'$ in $C\in\mathbb R^{n\times k}$, and suppose that $\operatorname{rank}(C)=k$, $\operatorname{rank}(M_C Z)=r$, and $n>k+r$, where
\[
    M_C=I_n-C(C'C)^{-1}C',
    \qquad
    K_C=Z' M_C Z,
\]
\[
    P_{Z\mid C}=M_C Z K_C^{-1}Z' M_C,
    \qquad
    M_{[C,Z]}=M_C-P_{Z\mid C}.
\]
The control-adjusted compatible coefficient and denominator are
\[
    \widehat\gamma_C(b)=K_C^{-1}Z' M_C(Y-Xb)
\]
and
\[
    D_C(b)
    =
    \frac{(Y-Xb)'M_{[C,Z]}(Y-Xb)}{n-k-r}.
\]
Accordingly, the component and profiled statistics are obtained from the subsequent formulas by replacing $K$, $P_Z$, $M_Z$, $\widehat\gamma_n(b)$, and $n-r$ with $K_C$, $P_{Z\mid C}$, $M_{[C,Z]}$, $\widehat\gamma_C(b)$, and $n-k-r$. A sample-normalized support uses $W_{n,C}=K_C/n$. Hence, changing $C$ changes the residualized instrument metric and defines a different design-dependent sensitivity restriction.

The Gaussian argument requires one additional observation. Let $H_C\in\mathbb R^{n\times(n-k)}$ have orthonormal columns spanning the orthogonal complement of $C$, so that
\[
    H_C'H_C=I_{n-k},
    \qquad
    H_C H_C'=M_C.
\]
Premultiplication by $H_C'$ removes the controls and gives the no-control model for $H_C'Y$, $H_C'X$, and $H_C'Z$. In particular, $(H_C'Z)'(H_C'Z)=K_C$, and the transformed numerator and denominator equal the quadratic forms defined by $P_{Z\mid C}$ and $M_{[C,Z]}$. If the rows of $(U,V)$ are independent homoskedastic Gaussian vectors conditional on $(C,Z)$, then the rows of $(H_C'U,H_C'V)$ remain independent Gaussian vectors with the same covariance matrix. Thus, the exact critical value is $rF^{-1}_{r,n-k-r}(1-\alpha)$. This argument uses $H_C'$, rather than treating multiplication by $M_C$ as if it preserved independent observations.

For notational simplicity, all subsequent formal results maintain $k=0$ and use the no-control notation. Under the corresponding full-rank and Gaussian conditions, the finite-sample argument applies to the transformed variables. Under the analogous convergence condition for $K_C/n$, the fixed-support asymptotic analysis also carries over. 

\section{Asymptotic results}
\label{sec:asymptotics}
This section studies the fixed population support introduced in Section \ref{sec:partially_identified} while retaining the finite-sample model and notation of Section \ref{sec:conditional_ar}. Assumptions \ref{ass:dist} and \ref{ass:asymptotic} are maintained throughout, and the admissible set is the nonrandom ellipsoid \(\Gammao(g,W_0)\). The exact pointwise bound continues to hold for every first-stage coefficient. The asymptotic analysis is used to study consistency, inversion geometry, set convergence, and local behavior at a regular boundary. All limits are conditional on the displayed design sequence. 

Define the sample reduced-form projection coefficients
\[
    \dhat=K^{-1}Z'Y,
    \qquad
    \pihat=K^{-1}Z'X.
\]
For a candidate value $b \in \mathbb R$, the sample compatible coefficient is therefore
\begin{equation*}
    \ghat(b)
    =K^{-1}Z'(Y-Xb)
    =\dhat-\pihat b.
\end{equation*}
As in Section \ref{sec:conditional_ar}, the denominator of the fixed-$\gamma$ AR statistic is $D_n(b)$, and $\sighat(b)=\Dn(b)^{1/2}$. For $A\succ0$, define the squared $A$-metric distance
\begin{equation*}
    d_A^2(x,\Gamma)
    =\inf_{\gamma\in\Gamma}(x-\gamma)'A(x-\gamma).
\end{equation*}

\begin{lemma}[Distance representation]
\label{lem:distance-denominator}
The following statements hold.
\begin{enumerate}[label=(\roman*)]
\item For every $b \in \mathbb R$ such that \(\Dn(b)>0\),
\begin{equation}
\label{eq:distance-representation}
    pAR(b;\Gammao)
    =\frac{
      n\,d_{W_n(Z)}^2\{\ghat(b),\Gammao\}
    }{
      \Dn(b)
    }.
\end{equation}
\item If \(n-r\ge2\), then, conditional on \(Z\),
\begin{equation}
\label{eq:global-denominator-positive}
    \Pr_P\!\left\{
       \Dn(b)>0\text{ for every }b\in\R
       \mid Z
    \right\}=1.
\end{equation}
Consequently, inversion of the profiled Anderson--Rubin statistic is equivalent to
\begin{equation}
\label{eq:acceptance-inequality}
    \cset(\Gammao)
    =\left\{
        b\in\R:
        n\,d_{W_n(Z)}^2\{\ghat(b),\Gammao\}
        \le c_{1-\alpha,n}\Dn(b)
      \right\}.
\end{equation}
\end{enumerate}
\end{lemma}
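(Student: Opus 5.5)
For part (i), fix $b$ with $\Dn(b)>0$. The plan is to rewrite the fixed-$\gamma$ statistic so that $\gamma$ enters only through its numerator. We have already seen that $\Dn(b)$ does not depend on $\gamma$, because $M_ZZ=0$, and that
\[
N_n(b;\gamma)=\{\ghat(b)-\gamma\}'K\{\ghat(b)-\gamma\}.
\]
Substituting $K=nW_n(Z)$ gives
\[
AR(b;\gamma)=\frac{n\{\ghat(b)-\gamma\}'W_n(Z)\{\ghat(b)-\gamma\}}{\Dn(b)}.
\]
Since the denominator is a fixed positive constant in $\gamma$, the infimum over $\gamma\in\Gammao$ passes through to the numerator alone. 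The numerator infimum is by definition $n\,d^2_{W_n(Z)}\{\ghat(b),\Gammao\}$, which proves (\ref{eq:distance-representation}). This step is purely algebraic and needs no distributional input.

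For part (ii), I write $\Dn(b)=(Y-Xb)'M_Z(Y-Xb)/(n-r)$, which is a quadratic in $b$ built from the vectors $M_ZY$ and $M_ZX$. This quantity vanishes for some real $b$ if and only if $M_ZY=b\,M_ZX$ for that $b$. That can only happen if $M_ZY$ and $M_ZX$ are linearly dependent, so it suffices to show this dependence has conditional probability zero. Since $M_ZZ=0$, we have $M_ZY=M_ZU$ and $M_ZX=M_ZV$.

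To make the event tractable, choose $H\in\R^{n\times(n-r)}$ with orthonormal columns and $HH'=M_Z$, as in the control-variable discussion. Then the rows of $(H'U,H'V)$ are i.i.d.\ $N(0,\Sigma_{UV})$ with $n-r\ge2$ rows, and linear dependence of $M_ZU$ and $M_ZV$ is equivalent to linear dependence of $H'U$ and $H'V$. The $(n-r)\times 2$ Gaussian matrix $(H'U,H'V)$ has a density on $\R^{(n-r)\times2}$, because $\Sigma_{UV}\succ0$. Its rank-deficient set is the zero set of the nontrivial polynomial $\det\{(H'U,H'V)'(H'U,H'V)\}$, which has Lebesgue measure zero. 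Hence the matrix has rank $2$ almost surely, and (\ref{eq:global-denominator-positive}) follows.

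The main obstacle is this step, because the null set must be uniform in $b$ rather than pointwise. The rank argument handles that, since a single rank-2 event rules out every $b$ simultaneously. The assumption $n-r\ge2$ is exactly what is needed for rank $2$ to be possible.

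Finally, on the probability-one event of (ii), part (i) applies at every $b$. The inequality $pAR(b;\Gammao)\le c_{1-\alpha,n}$ is then equivalent to $n\,d^2_{W_n(Z)}\{\ghat(b),\Gammao\}\le c_{1-\alpha,n}\Dn(b)$, obtained by multiplying through by the positive $\Dn(b)$. This yields the inversion representation (\ref{eq:acceptance-inequality}), holding almost surely conditional on $Z$.
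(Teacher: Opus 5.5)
Your proposal is correct and follows the paper's proof essentially step for step. Part (i) pulls the $\gamma$-free denominator out of the infimum using $K=nW_n(Z)$. Part (ii) passes to $(H'U,H'V)$ with $HH'=M_Z$ and shows that collinearity is a Lebesgue-null algebraic event when $n-r\ge2$, which rules out $D_n(b)=0$ for all $b$ simultaneously; your Gram-determinant polynomial simply makes explicit the paper's ``proper algebraic subset'' claim.
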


Because \(K=n\Qn(Z)\), the distance metric in \eqref{eq:distance-representation} is sample dependent even though the admissible set remains the fixed support \(\Gammao(g,W_0)\). Hence, the distance representation evaluates a fixed-support null with the metric induced by the numerator of AR statistic. 

\subsection{Consistency against fixed alternatives}
\begin{prop}
\label{prop:fixed-consistency}
For every fixed \(b\in\R\),
\begin{equation*}
    \frac{1}{n} pAR(b;\Gammao)
    \overset{p}{\longrightarrow}
    \frac{
      d_{W_{ZZ}(P)}^2\{\gdag(P,b),\Gammao\}
    }{
      \sigma_b^2
    }.
\end{equation*}
Consequently,
\begin{enumerate}[label=(\roman*)]
\item If \(b\notin\BI(P;\Gammao)\), then
\[
    \frac{1}{n} pAR(b;\Gammao)
    \overset{p}{\longrightarrow}
    \frac{
      d_{W_{ZZ}(P)}^2\{\gdag(P,b),\Gammao\}
    }{
      \sigma_b^2
    }>0,
\]
and
\[
    \Pr_P\!\left\{
       pAR(b;\Gammao)>c_{1-\alpha,n}
       \mid Z
    \right\}\longrightarrow1.
\]
\item If \(\gdag(P,b)\in\operatorname{int}(\Gammao)\), then
\[
    \Pr_P\!\left\{
       pAR(b;\Gammao)=0
       \mid Z
    \right\}\longrightarrow1.
\]
\end{enumerate}
\end{prop}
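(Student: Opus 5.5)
The plan is to start from the distance representation in Lemma \ref{lem:distance-denominator}. For fixed \(b\), part (ii) gives \(\Dn(b)>0\) almost surely, so
\[
    \frac1n pAR(b;\Gammao)=\frac{d_{W_n(Z)}^2\{\ghat(b),\Gammao\}}{\Dn(b)}.
\]
The limit of the ratio then follows from three facts and the continuous mapping theorem:
\begin{enumerate}[label=(\alph*)]
\item \(\ghat(b)\overset{p}{\to}\gdag(P,b)\);
\item \(\Dn(b)\overset{p}{\to}\sigma_b^2\);
\item the map \((x,A)\mapsto d_A^2(x,\Gammao)\) is jointly continuous at \((\gdag(P,b),W_{ZZ}(P))\), combined with \(W_n(Z)\to W_{ZZ}(P)\) from Assumption \ref{ass:asymptotic}.
\end{enumerate}

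For (a), use \eqref{eq:gammahat_compatible}: conditional on \(Z\), the difference \(\ghat(b)-\gdag(P,b)\) is distributed as \(N(0,\sigma_b^2K^{-1})\). Since \(K^{-1}=n^{-1}W_n(Z)^{-1}\to0\), Chebyshev gives the convergence. For (b), \((n-r)\Dn(b)/\sigma_b^2\sim\chi^2_{n-r}\), which has mean \(n-r\) and variance \(2(n-r)\), so \(\Dn(b)/\sigma_b^2\overset{p}{\to}1\).

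For (c), compare two metrics for the same \(\gamma\). For \(A,B\succ0\),
\[
    (x-\gamma)'A(x-\gamma)\le \lambda_{\max}(B^{-1/2}AB^{-1/2})\,(x-\gamma)'B(x-\gamma),
\]
so \(d_A^2\le \lambda_{\max}(B^{-1/2}AB^{-1/2})\,d_B^2\), and symmetrically. This gives continuity in \(A\). In \(x\), the function \(d_A(x,\Gammao)\) is 1-Lipschitz in the \(A\)-norm, since it is a distance to a set. I expect this continuity step to be the only mildly delicate point, and the eigenvalue sandwich settles it. The displayed probability limit follows.

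For (i), \(b\notin\BI\) means \(\gdag(P,b)\notin\Gammao\). Because \(\Gammao\) is closed and \(W_{ZZ}(P)\succ0\), the limit distance is strictly positive. Hence \(pAR(b;\Gammao)\ge n\kappa/2\) with probability tending to one, where \(\kappa>0\) is the limit. Meanwhile \(c_{1-\alpha,n}\to q_{r,1-\alpha}<\infty\), so the rejection probability tends to one.

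For (ii), choose \(\eta>0\) such that the ball of radius \(\eta\) around \(\gdag(P,b)\) lies in \(\Gammao\); this exists because \(\gdag(P,b)\) is interior. By (a), \(\ghat(b)\) lies in that ball, and hence in \(\Gammao\), with probability tending to one. On that event \(d^2_{W_n(Z)}\{\ghat(b),\Gammao\}=0\), so \(pAR(b;\Gammao)=0\) by \eqref{eq:distance-representation}. The infimum is attained at \(\gamma=\ghat(b)\), where the numerator vanishes.
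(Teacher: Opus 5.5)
Your proposal is correct and follows the paper's proof essentially step for step: the distance representation, consistency of $\widehat\gamma_n(b)$ and of $D_n(b)$, and continuity of $(x,A)\mapsto d_A^2(x,\Gamma_0)$, followed by a positive limiting distance against a bounded critical value for (i) and an interior ball for (ii). The one difference is in the continuity step: you use an eigenvalue sandwich together with the Lipschitz property of the distance function, whereas the paper appeals to joint continuity of the quadratic form and compactness of $\Gamma_0$; your argument is slightly more explicit and does not need compactness.
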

The limit in Proposition \ref{prop:fixed-consistency} separates the population discrepancy from the disturbance variance. If a fixed candidate $b$ lies outside the population identified set, its compatible nuisance remains a positive distance from the support and the pAR statistic grows at rate $n$. At a strict interior point, the support eventually absorbs the sampling error in $\ghat(b)$ and the profiled numerator is zero. At any fixed membership-null point, including a boundary point, the profiling inequality gives
\[
    pAR(b;\Gammao)
    \le
    \AR\{b;\gdag(P,b)\}
    =\Op(1).
\]
Thus, fixed alternatives are rejected consistently. By contrast, membership-null statistics remain stochastically bounded. Proposition \ref{prop:boundary-local-power} below gives the sharper limit at regular endpoints.

\subsection{Conservativeness of the finite-sample size bound}
\begin{prop}
\label{prop:strict-conservativeness}
Fix \(n\), a realized full-rank \(Z\), and a candidate value \(b\).  Suppose 
\[
    \gdag(P,b)\in\Gamma,
\]
where \(\Gamma\subset\R^r\) is compact and nonempty.  Then:
\begin{enumerate}[label=(\roman*)]
\item If \(\Gamma=\{\gdag(P,b)\}\), then
\[
    \Pr_P\!\left\{
       pAR(b;\Gamma)>c_{1-\alpha,n}
       \mid Z
    \right\}=\alpha.
\]
\item If \(\Gamma\) contains a point distinct from \(\gdag(P,b)\), then
\[
    \Pr_P\!\left\{
       pAR(b;\Gamma)>c_{1-\alpha,n}
       \mid Z
    \right\}<\alpha.
\]
\end{enumerate}
For the centered ellipsoid \(\Gammao(g,W_0)\), the membership-null rejection probability is exactly \(\alpha\) when \(g=0\) and is strictly below \(\alpha\) when \(g>0\).
\end{prop}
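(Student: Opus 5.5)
Part (i) is immediate. When \(\Gamma=\{\gdag(P,b)\}\), the profiled statistic is \(pAR(b;\Gamma)=\AR\{b;\gdag(P,b)\}\). By \eqref{eq:central_compatible_ar}, this statistic divided by \(r\) is exactly \(F_{r,n-r}\) conditional on \(Z\). Because this law is continuous, the rejection probability equals \(\alpha\).

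For part (ii), write \(\gamma^\dagger=\gdag(P,b)\) and let \(\gamma_1\in\Gamma\) with \(\gamma_1\neq\gamma^\dagger\). Lemma \ref{lem:par_minimum} gives the containment
\[
\{pAR(b;\Gamma)>c_{1-\alpha,n}\}\subseteq\{\AR(b;\gamma^\dagger)>c_{1-\alpha,n}\}.
\]
The right-hand event has conditional probability \(\alpha\). It therefore suffices to show that the difference event
\[
E=\{\AR(b;\gamma^\dagger)>c_{1-\alpha,n}\}\cap\{pAR(b;\Gamma)\le c_{1-\alpha,n}\}
\]
has positive conditional probability.

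Since \(pAR(b;\Gamma)\le\AR(b;\gamma_1)\), the event \(E\) contains
\[
E'=\{\AR(b;\gamma^\dagger)>c\}\cap\{\AR(b;\gamma_1)\le c\},\qquad c=c_{1-\alpha,n}.
\]
The denominator \(D_n(b)\) is common to both statistics. Write \(\xi=\ghat(b)-\gamma^\dagger=K^{-1}Z'\epsilon_b\). Conditional on \(Z\), \(\xi\sim N(0,\sigma_b^2K^{-1})\), it has full support on \(\R^r\), and it is independent of \(D_n(b)\), which is positive almost surely. In terms of \(\xi\),
\[
E'=\bigl\{\xi'K\xi>cD_n(b)\bigr\}\cap\bigl\{(\xi-\Delta)'K(\xi-\Delta)\le cD_n(b)\bigr\},\qquad \Delta=\gamma_1-\gamma^\dagger\neq0.
\]
Condition on \(D_n(b)=d\) in a set of positive probability, say \(d\in[d_1,d_2]\) with \(0<d_1<d_2<\infty\).

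Choose the point \(\xi_0=t\Delta\) with scalar \(t\). At this point the two quadratic forms are \(t^2\Delta'K\Delta\) and \((t-1)^2\Delta'K\Delta\). We need \(t^2\Delta'K\Delta>cd\) and \((t-1)^2\Delta'K\Delta<cd\). Both hold when \(|t-1|<|t|\), meaning \(t>1/2\), and \(t^2\Delta'K\Delta>cd>(t-1)^2\Delta'K\Delta\). Such a \(t\) exists for every \(d>0\), because \(t^2-(t-1)^2=2t-1\) grows without bound, so the interval \(\bigl((t-1)^2,t^2\bigr)\Delta'K\Delta\) sweeps across all positive values as \(t\) increases.

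Both inequalities are strict and continuous in \((\xi,d)\), so they hold on an open neighborhood of \((\xi_0,d_0)\) for some fixed \(d_0\) in the support of \(D_n(b)\). The support of \(D_n(b)\) is \((0,\infty)\), since it is a scaled \(\chi^2_{n-r}\) variable. Independence and full support of \(\xi\) and \(D_n(b)\) then give \(\Pr(E'\mid Z)>0\). Hence the rejection probability equals \(\alpha-\Pr(E\mid Z)<\alpha\).

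The main point of care is this positivity step: one must note that the denominator cancels within a common event and that \(\xi\) and \(D_n(b)\) are independent with full support, so that an open set argument applies. Compactness of \(\Gamma\) is needed only to make \(pAR\) well defined and measurable; it is not needed for the strict inequality.

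For the centered ellipsoid, \(g=0\) gives \(\Gammao=\{0_r\}\). The membership null forces \(\gamma^\dagger=0_r\), so part (i) applies and the rejection probability is exactly \(\alpha\). When \(g>0\), the ellipsoid is compact and contains infinitely many points, so it contains a point distinct from \(\gamma^\dagger\), and part (ii) applies.
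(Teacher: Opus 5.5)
Your proof is correct and follows the same route as the paper. Both use the profiling containment, then exhibit an open set of $(\widehat\gamma_n(b),D_n(b))$ values on which the compatible-nuisance AR statistic rejects but the profiled statistic accepts; that set has positive probability because the two components are independent with full support, and the paper's choice is your $t=1$ with $d_0=\Delta'K\Delta/(2c_{1-\alpha,n})$. The only difference is that you bound $pAR(b;\Gamma)$ by the single continuous statistic $AR(b;\gamma_1)$ instead of invoking continuity of the profiled numerator over the compact $\Gamma$, which is slightly more direct and removes any need for compactness in the strict-inequality step.
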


Proposition \ref{prop:strict-conservativeness} shows that nontrivial profiling is strictly conservative pointwise. Equality in the rejection-probability bound occurs only when the admissible set is the singleton $\{\gdag(P,b)\}$. A positive-radius support contains additional nuisance values, and the profiled statistic is strictly smaller than the compatible-nuisance AR statistic on an event of positive probability. The amount of conservativeness depends on the location of $\gdag(P,b)$ within the support. At a strict interior point, the rejection probability converges to zero. Also, at a regular boundary point, Proposition \ref{prop:boundary-local-power} below gives a nondegenerate limit.

\subsection{Topology and tail behavior of the inverted set}

\begin{prop}
\label{prop:topology-boundedness}
Fix a sample realization satisfying $K \succ 0$. Define
\[
    s_{X,n}^2
    =\frac{X'M_ZX}{n-r}>0,
\]
and 
$$
    \cset(\Gammao)
    =\left\{
        b\in\R:
        n\,d_{\Qn(Z)}^2\{\ghat(b),\Gammao\}
        \le c_{1-\alpha,n}\Dn(b)
      \right\}.
$$
Then:
\begin{enumerate}[label=(\roman*)]
\item \(\cset(\Gammao)\) is a closed semialgebraic subset of \(\R\).  Therefore, it is a finite union of closed bounded intervals, singleton points, and closed rays.  The empty set and the entire real line are also possible.
\item Define the tail statistic
\begin{equation*}
    L_n
    =\frac{\pihat'K\pihat}{X'M_ZX/(n-r)}.
\end{equation*}
Then
\begin{equation*}
    \lim_{|b|\to\infty} pAR(b;\Gammao)=L_n.
\end{equation*}
Consequently,
\[
    L_n>c_{1-\alpha,n}
    \quad\Longrightarrow\quad
    \cset(\Gammao)\text{ is bounded},
\]
whereas
\[
    L_n<c_{1-\alpha,n}
    \quad\Longrightarrow\quad
    \cset(\Gammao)\text{ contains both tails}.
\]
When \(L_n=c_{1-\alpha,n}\), the leading terms do not classify the tails and lower-order terms must be examined.
\end{enumerate}
\end{prop}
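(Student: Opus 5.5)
The plan is to work directly with the acceptance inequality
\[
    n\,d_{\Qn(Z)}^2\{\ghat(b),\Gammao\}\le c_{1-\alpha,n}\Dn(b).
\]
This form needs no division, so it is meaningful for every \(b\) once the sample is fixed. I treat the two parts separately: part (i) is a structural statement about this inequality, and part (ii) is an asymptotic comparison of its two sides as \(|b|\to\infty\).

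For part (i), I first show semialgebraicity by writing the set as the projection onto \(b\) of
\[
    S=\bigl\{(b,\gamma)\in\R\times\R^r:\ \gamma'W_0\gamma\le g^2,\ (\dhat-\pihat b-\gamma)'K(\dhat-\pihat b-\gamma)\le c_{1-\alpha,n}\Dn(b)\bigr\}.
\]
Here \(\Dn(b)=(Y-Xb)'M_Z(Y-Xb)/(n-r)\) is a quadratic polynomial in \(b\). The infimum defining \(d^2\) is attained because \(\Gammao\) is compact, so \(b\) is accepted exactly when some \(\gamma\in\Gammao\) satisfies the second inequality. Hence \(\cset(\Gammao)\) is the projection of \(S\). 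Since \(S\) is defined by polynomial inequalities, the Tarski--Seidenberg theorem makes this projection semialgebraic.

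For closedness I use two facts. First, \(x\mapsto d_{\Qn}^2(x,\Gammao)\) is continuous, since the distance to a nonempty set in a fixed norm is Lipschitz. Second, \(b\mapsto\ghat(b)\) and \(b\mapsto\Dn(b)\) are continuous. The accepted set is therefore the preimage of \((-\infty,0]\) under a continuous function, hence closed. Finally, a semialgebraic subset of \(\R\) is a finite union of points and intervals, and closedness forces those intervals to be closed bounded intervals or closed rays. The empty set and \(\R\) are included as degenerate cases.

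For part (ii), let \(\rho=\max_{\gamma\in\Gammao}\|\gamma\|_{\Qn}<\infty\). The triangle inequality in the \(\Qn\)-norm gives, for every \(x\),
\[
    \max\{\|x\|_{\Qn}-\rho,0\}\le d_{\Qn}(x,\Gammao)\le\|x\|_{\Qn}+\rho.
\]
With \(x=\ghat(b)=\dhat-\pihat b\), this yields
\[
    n\,d_{\Qn}^2\{\ghat(b),\Gammao\}=b^2\,\pihat'K\pihat+O(|b|),
\]
and when \(\pihat=0\) the left side stays bounded. Similarly,
\[
    \Dn(b)=b^2 s_{X,n}^2+O(|b|),\qquad s_{X,n}^2>0.
\]
Consequently \(\Dn(b)>0\) for all large \(|b|\), so \(pAR(b;\Gammao)\) is given there by the ratio in Lemma \ref{lem:distance-denominator}(i). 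Dividing numerator and denominator by \(b^2\) gives \(pAR(b;\Gammao)\to\pihat'K\pihat/s_{X,n}^2=L_n\). This also covers \(\pihat=0\), where both the limit and \(L_n\) equal \(0\).

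The tail conclusions then follow from this limit. If \(L_n>c_{1-\alpha,n}\), there is \(B\) such that \(pAR>c_{1-\alpha,n}\) for all \(|b|>B\), so \(\cset(\Gammao)\subseteq[-B,B]\) is bounded. If \(L_n<c_{1-\alpha,n}\), both rays \(\{|b|>B\}\) are accepted. When \(L_n=c_{1-\alpha,n}\), the leading \(b^2\) terms cancel, so the \(O(|b|)\) terms decide the tails.

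The step needing the most care is the uniform two-sided bound on the distance, which controls the profiled numerator's growth independently of where the projection onto \(\Gammao\) lands. Once that bound is in place, the leading-coefficient comparison is routine.
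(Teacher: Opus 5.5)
Your proposal is correct and follows essentially the same route as the paper: part (i) via projecting the polynomially defined set $\mathcal S_n$ and applying Tarski--Seidenberg, and part (ii) via comparing the leading $b^2$ terms of the profiled numerator and $\Dn(b)$. The differences are cosmetic. You obtain closedness as the preimage of $(-\infty,0]$ under the continuous map $b\mapsto n\,d_{\Qn(Z)}^2\{\ghat(b),\Gammao\}-c_{1-\alpha,n}\Dn(b)$, whereas the paper uses a sequential-compactness argument. You also control the numerator with the two-sided triangle-inequality bound, whereas the paper uses $\sup_{\gamma\in\Gammao}\|(\dhat-\gamma)/b\|\to0$.
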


Proposition \ref{prop:topology-boundedness} describes the possible global shapes of $\cset(\Gammao)$ and classifies its tails. The tail statistic $L_n$ can be a useful first step because it distinguishes bounded inversion from two-sided unbounded inversion before the finite components are located. 

\begin{corr}[First-stage interpretation of the tail limit]\label{cor:tail-first-stage}
Conditional on \(Z\),
$$
    \frac{L_n}{r} \mid Z
    \sim
    F_{r,n-r}\!\left(\lambda_{\pi,n}\right),
    \qquad
    \lambda_{\pi,n} \mid Z
    =\frac{\pi(P)'K\pi(P)}{\sigma_V^2},
$$
where \(\sigma_V^2\) is the lower-right diagonal element of \(\Sigma_{UV}\).  

\begin{enumerate}[label=(\roman*)]

\item If \(\pi(P)\ne0_r\), then
\begin{equation*}
    \frac{L_n}{n} \mid Z
    \overset{p}{\longrightarrow}
    \frac{\pi(P)'W_{ZZ}\pi(P)}{\sigma_V^2}>0,
\end{equation*}
and
\[
    \Pr_P\!\left\{
       \cset(\Gammao)\text{ is bounded}
       \mid Z
    \right\}\longrightarrow1.
\]

\item If \(\pi(P)=0_r\), then
\[
    \Pr_P\!\left\{
       \cset(\Gammao)\text{ contains both tails}
       \mid Z
    \right\}=1-\alpha.
\]
\end{enumerate}
\end{corr}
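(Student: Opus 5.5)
The plan is to recognize $L_n$ as the Anderson--Rubin-type Wald statistic for testing $\pi(P)=0_r$ in the first-stage regression of $X$ on $Z$, and then read off its conditional law exactly as in the derivation of \eqref{eq:noncentrality_compatible}. Conditional on $Z$, Assumption \ref{ass:dist} gives $V\mid Z\sim N(0,\sigma_V^2 I_n)$, since the $V_i$ are the second coordinates of independent $N(0,\Sigma_{UV})$ rows. Next,
\[
\pihat=\pi(P)+K^{-1}Z'V,
\]
so $\pihat'K\pihat=X'P_ZX$, and $(Z'X)\mid Z\sim N(K\pi(P),\sigma_V^2K)$. Hence
\[
X'P_ZX/\sigma_V^2\mid Z\sim\chi_r^2(\lambda_{\pi,n}).
\]
Also $M_ZX=M_ZV$, so $X'M_ZX/\sigma_V^2\mid Z\sim\chi^2_{n-r}$. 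The two quadratic forms depend on $P_ZV$ and $M_ZV$ respectively, which are independent because $P_ZM_Z=0$ under Gaussianity. This yields $L_n/r\mid Z\sim F_{r,n-r}(\lambda_{\pi,n})$. It is the same computation as in Section \ref{sec:conditional_ar}, with $\epsilon_b$ replaced by $V$ and the mean shift $\gamma^\dagger-\gamma$ replaced by $\pi(P)$.

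For (i), I would write the ratio as
\[
\frac{L_n}{n}=\frac{\pihat'W_n(Z)\pihat}{X'M_ZX/(n-r)}.
\]
The denominator converges in probability to $\sigma_V^2$, since a $\sigma_V^2\chi^2_{n-r}/(n-r)$ variable concentrates by the law of large numbers or a Chebyshev bound. In the numerator, $\pihat-\pi(P)$ has covariance $\sigma_V^2K^{-1}=\sigma_V^2W_n(Z)^{-1}/n\to0$ because $W_n(Z)\to W_{ZZ}(P)\succ0$ under Assumption \ref{ass:asymptotic}. Hence $\pihat\to_p\pi(P)$ and $W_n(Z)\to W_{ZZ}$, and the continuous mapping theorem gives the stated positive limit. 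It follows that $L_n\to\infty$ in probability. Since $c_{1-\alpha,n}\to q_{r,1-\alpha}<\infty$, we get $\Pr(L_n>c_{1-\alpha,n}\mid Z)\to1$, and Proposition \ref{prop:topology-boundedness}(ii) then gives boundedness on that event.

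For (ii), when $\pi(P)=0_r$ we have $\lambda_{\pi,n}=0$, so $L_n/r$ is exactly central $F_{r,n-r}$ and $\Pr(L_n<c_{1-\alpha,n}\mid Z)=1-\alpha$ by the definition of $c_{1-\alpha,n}$ and continuity of the $F$ law. Proposition \ref{prop:topology-boundedness}(ii) shows that on this event the set contains both tails, which gives probability at least $1-\alpha$.

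The main obstacle is upgrading this to equality. The event $\{L_n=c_{1-\alpha,n}\}$ has probability zero, so the remaining task is to show that on $\{L_n>c_{1-\alpha,n}\}$ the set does not contain both tails. By Proposition \ref{prop:topology-boundedness}(ii), $L_n>c$ implies the set is bounded, so it cannot contain a tail. The two events therefore partition the sample space up to a null set, and the probability is exactly $1-\alpha$. I would check carefully that the boundedness implication in Proposition \ref{prop:topology-boundedness} holds for every realization, not just generically; it should, since it comes from the limit $pAR(b;\Gammao)\to L_n$ as $|b|\to\infty$.
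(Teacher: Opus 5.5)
Your proposal is correct and follows the paper's proof essentially step for step. It uses the same $\chi^2_r(\lambda_{\pi,n})$ and $\chi^2_{n-r}$ decomposition through the independent projections $P_ZV$ and $M_ZV$, the same consistency argument for $L_n/n$, and the same appeal to Proposition \ref{prop:topology-boundedness}(ii), with $\{L_n=c_{1-\alpha,n}\}$ a null event. Your explicit sandwich in (ii) — both tails on $\{L_n<c_{1-\alpha,n}\}$, and a bounded set with no tail on $\{L_n>c_{1-\alpha,n}\}$ — simply spells out what the paper compresses into ``Proposition \ref{prop:topology-boundedness} completes the proof.''
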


\subsection{Convergence to a regular population identified interval}
Retain the population quantities from Section \ref{sec:partially_identified},
\[
    a_{W_0}=\pi(P)'W_0\pi(P),
    \qquad
    b_{W_0}=\pi(P)'W_0\delta(P),
    \qquad
    d_{W_0}=\delta(P)'W_0\delta(P),
\]
\[
    b_{W_0}^{\circ}=\frac{b_{W_0}}{a_{W_0}},
    \qquad
    g_*=\left(
       d_{W_0}-\frac{b_{W_0}^2}{a_{W_0}}
    \right)^{1/2}.
\]
For nonempty compact sets $A,B\subset\R$, let $d_H(A,B)$ denote their Hausdorff distance. Define the extended Hausdorff distance
\[
    d_H^{\mathrm{ex}}(A,B)
    =
    \begin{cases}
      d_H(A,B),&A\text{ is nonempty and compact},\\
      +\infty,&\text{otherwise},
    \end{cases}
\]
where $d_H(A,B) = \max \{ \sup_{a \in A } d(a,B), \sup_{b \in B} d(A,b) \}$.

\begin{prop}
\label{prop:hausdorff}
Suppose \(\pi(P)\ne0_r\) and \(g>g_*\).  Then
\[
    \BI(P;\Gammao)=[b_L,b_U],
\]
where
\[
    b_L=b_{W_0}^{\circ}
       -\frac{\sqrt{g^2-g_*^2}}{\sqrt{a_{W_0}}},
    \qquad
    b_U=b_{W_0}^{\circ}
       +\frac{\sqrt{g^2-g_*^2}}{\sqrt{a_{W_0}}}.
\]
Moreover,
\[
    d_H^{\mathrm{ex}}\!\left(
      \cset(\Gammao),\BI(P;\Gammao)
    \right)
    =\Op(n^{-1/2}).
\]
In particular, \(\cset(\Gammao)\) is nonempty and compact with probability approaching one.
\end{prop}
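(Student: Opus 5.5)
The plan is to work with the distance representation of Lemma~\ref{lem:distance-denominator}. On the probability-one event of \eqref{eq:global-denominator-positive}, \eqref{eq:acceptance-inequality} says
\[
b\in\cset(\Gammao)\iff h_n(b)\le \{c_{1-\alpha,n}\Dn(b)/n\}^{1/2},
\qquad h_n(b)=d_{\Qn(Z)}\{\ghat(b),\Gammao\}.
\]
I would compare $h_n$ with the population function $h(b)=d_{W_{ZZ}(P)}\{\gdag(P,b),\Gammao\}$, which vanishes exactly on $\BI(P;\Gammao)=[b_L,b_U]$. The endpoint formula itself is just the ellipsoid computation from Section~\ref{sec:partially_identified} with $a_{W_0}>0$ and $g>g_*$.

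\textbf{Step 1 (stochastic inputs, uniform in $b$).}
\begin{itemize}
\item Write $\ghat(b)-\gdag(P,b)=(\dhat-\delta)-(\pihat-\pi)b$. Conditional on $Z$, $\dhat-\delta$ and $\pihat-\pi$ are Gaussian with covariance of order $K^{-1}=O(n^{-1})$. Hence
\[
\|\ghat(b)-\gdag(P,b)\|\le \xi_n(1+|b|)\quad\text{for all } b,\qquad \xi_n=\Op(n^{-1/2}).
\]
\item $\Dn(b)=S_{11}-2bS_{12}+b^2S_{22}$, where $S$ is the residual covariance matrix of $(Y,X)$ after projecting on $Z$. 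Since $S\overset{p}{\to}\Sigma_{UV}$, we have $\Dn(b)\le C(1+b^2)$ for all $b$ with probability approaching one.
\item Because $\Qn(Z)\to W_{ZZ}(P)\succ0$, the two metrics are uniformly equivalent: $\kappa_n^{-1}d_{W_{ZZ}}\le d_{\Qn}\le \kappa_n d_{W_{ZZ}}$ with $\kappa_n\to1$. This multiplicative comparison matters. The assumption gives only $\Qn\to W_{ZZ}$ with no rate, so additive errors in the metric would spoil the $n^{-1/2}$ rate, while multiplicative ones do not.
\item Together these give $h(b)\le \kappa_n h_n(b)+C\xi_n(1+|b|)$, and symmetrically with $h$ and $h_n$ exchanged.
\end{itemize}

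\textbf{Step 2 (geometric lower bound; the main obstacle).} I need a linear-growth bound
\[
h(b)\ge c_0\min\{\operatorname{dist}(b,[b_L,b_U]),1\}+c_1(|b|-R)_+ .
\]
\begin{itemize}
\item Far out, $\gdag(P,b)=\delta-\pi b$ leaves the compact ellipsoid at linear rate because $\pi\ne0_r$.
\item Near the endpoints the issue is transversality. The map $b\mapsto m_{W_0}(b)=a_{W_0}(b-b^\circ_{W_0})^2+g_*^2$ has derivative $\pm2\sqrt{a_{W_0}}\sqrt{g^2-g_*^2}\ne0$ at $b_L,b_U$, precisely because $g>g_*$. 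So the line $b\mapsto\gdag(P,b)$ crosses $\partial\Gammao$ non-tangentially.
\item The $W_0$-gauge distance from a point to the ellipsoid is locally comparable to $(\gamma'W_0\gamma)^{1/2}-g$. This yields the outside bound.
\item By the same argument, the inward depth satisfies $g-\|\gdag(P,b)\|_{W_0}\ge c_2\operatorname{dist}(b,\{b_L,b_U\})$ for $b\in[b_L,b_U]$ near the endpoints.
\end{itemize}
I expect this step to require the most care. My first attempt would be to use the explicit quadratic $m_{W_0}$ together with the Lipschitz equivalence of the $W_0$ and $W_{ZZ}$ norms.

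\textbf{Step 3 (both halves of the Hausdorff distance).}
\begin{itemize}
\item \emph{Points of $\cset(\Gammao)$ are close to $[b_L,b_U]$.} If $b\in\cset(\Gammao)$, Step 1 gives $h(b)\le \Op(n^{-1/2})(1+|b|)$ uniformly. Combining this with Step 2 first shows that $|b|$ is bounded with probability approaching one, which is consistent with Corollary~\ref{cor:tail-first-stage}. It then shows $\operatorname{dist}(b,[b_L,b_U])=\Op(n^{-1/2})$.
\item \emph{Points of $[b_L,b_U]$ are close to $\cset(\Gammao)$.} For $b\in[b_L,b_U]$, move inward to $b'$ with $|b-b'|\le Mn^{-1/2}$; for interior points far from the endpoints, take $b'=b$. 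By Step 2, $\gdag(P,b')$ lies at $W_0$-depth at least $c_2Mn^{-1/2}$ inside $\Gammao$. Since $\sup_{b'\in[b_L,b_U]}\|\ghat(b')-\gdag(P,b')\|=\Op(n^{-1/2})$, choosing $M$ large puts $\ghat(b')\in\Gammao$ with probability at least $1-\varepsilon$. Then $pAR(b';\Gammao)=0\le c_{1-\alpha,n}$, so $b'\in\cset(\Gammao)$.
\end{itemize}
This second half also gives nonemptiness with probability approaching one. Compactness follows because the set is closed by Proposition~\ref{prop:topology-boundedness}(i) and bounded by the first half. Hence $d_H^{\mathrm{ex}}$ is finite with probability approaching one and equals $\Op(n^{-1/2})$.
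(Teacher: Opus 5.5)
Your argument is correct. Its outer half follows the paper's route: compact containment, a uniform $\Op(n^{-1/2})$ approximation of $\ghat(b)$ by $\gdag(P,b)$, and a linear error bound at the endpoints that comes from transversality ($g>g_*$).

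The differences are mainly in packaging. You fold compact containment and the local error bound into a single global lower bound on $h(b)$. This bound is available in closed form. Since $\Gammao$ is a ball in the $W_0$ norm, $d_{W_0}(\gamma,\Gammao)=(\|\gamma\|_{W_0}-g)_+$ holds exactly, not just locally. Combined with the explicit quadratic $m_{W_0}$, this gives $h(b)\ge c\,\mathrm{dist}(b,[b_L,b_U])$ for every $b\in\R$. The paper instead bounds $|b|$ on an auxiliary event $\mathcal E_n$ and then argues by continuity on $[-M,M]$.

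For the inner half, the paper forms the plug-in set $\widehat B_{I,n}=\{b:\ghat(b)\in\Gammao\}$ and writes its endpoints as smooth functions of $(\widehat a_{W_0},\widehat b_{W_0},\widehat d_{W_0})$. It then applies the delta method and uses $\widehat B_{I,n}\subseteq\cset(\Gammao)$. You instead use an inward depth margin of order $Mn^{-1/2}$ together with the uniform estimation error on $[b_L,b_U]$. Both rest on the same fact: any $b$ with $\ghat(b)\in\Gammao$ has a zero profiled statistic and is accepted. Your version avoids the endpoint formula and the differentiability of the endpoint map. The paper's version yields, as a by-product, the $n^{-1/2}$ Hausdorff rate of the plug-in interval itself.

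You remark that the metric comparison must be multiplicative because $\Qn(Z)\to W_{ZZ}$ comes with no rate. That is exactly the role the lower-eigenvalue bound plays in the paper's passage to $d\{\gdag(P,b),\Gammao\}=\Op(n^{-1/2})$ uniformly over accepted $b$.
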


Proposition \ref{prop:hausdorff} assumes a fixed nonzero first-stage coefficient and a transverse intersection between the reduced-form line and the ellipsoid. Therefore, it is a regular identification result and is not uniform over weak first stage sequences such as $\pi_n=O(n^{-1/2})$. However, this restriction does not affect the finite-sample size bound, which remains valid for arbitrary first-stage strength. Note that Hausdorff convergence is distinct from simultaneous coverage of the population identified set. A population endpoint can be excluded with nonvanishing probability while the random set remains within an $O_p(n^{-1/2})$ Hausdorff neighborhood of the population interval. 

\subsection{Local rejection probabilities at a regular boundary}

The next lemma gives the local projection result used at a smooth ellipsoidal boundary.

\begin{lemma}[Convergence of local projection distances]\label{lem:tangent-distance}
Let \(W_0\succ0\), let \(\gamma_0'W_0\gamma_0=g^2\), and set \(a_0=W_0\gamma_0\ne0_r\).  Define
\[
    \mathcal T_n
    =\left\{
       v\in\R^r:
       2a_0'v+n^{-1/2}v'W_0v\le0
     \right\},
    \qquad
    \mathcal T
    =\{v\in\R^r:a_0'v\le0\}.
\]
If \(t_n\Rightarrow t\) and \(\Qn(Z)\to W_{ZZ}\succ0\), then
\[
    \inf_{v\in\mathcal T_n}
       (t_n-v)'\Qn(Z)(t_n-v)
    \Rightarrow
    \inf_{v\in\mathcal T}
       (t-v)'W_{ZZ}(t-v).
\]
\end{lemma}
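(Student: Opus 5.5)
The plan is to prove a deterministic continuous-convergence statement and then transfer it to weak convergence. Define
\[
    f_n(s,A)=\inf_{v\in\mathcal T_n}(s-v)'A(s-v),
    \qquad
    f(s,A)=\inf_{v\in\mathcal T}(s-v)'A(s-v).
\]
I will show that \(f_n(s_n,A_n)\to f(s,A)\) whenever \(s_n\to s\) and \(A_n\to A\succ0\). Since \(\Qn(Z)\to W_{ZZ}\) is nonrandom and \(t_n\Rightarrow t\), Slutsky's lemma gives \((t_n,\Qn(Z))\Rightarrow(t,W_{ZZ})\). The extended continuous mapping theorem (van der Vaart, Theorem 18.11) then yields the claim. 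Equivalently, one can pass to a Skorokhod representation with \(t_n\to t\) almost surely and apply the deterministic statement pathwise.

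First I record the structure of \(\mathcal T_n\). Expanding \((\gamma_0+v/\sqrt n)'W_0(\gamma_0+v/\sqrt n)\le g^2\) and multiplying by \(\sqrt n\) shows that \(\mathcal T_n=\sqrt n\{\Gammao(g,W_0)-\gamma_0\}\). Hence \(\mathcal T_n\) is closed, convex and contains \(0_r\). Because \(v'W_0v\ge0\), every \(v\in\mathcal T_n\) satisfies \(a_0'v\le0\), so \(\mathcal T_n\subseteq\mathcal T\).

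For the lower bound, the inclusion \(\mathcal T_n\subseteq\mathcal T\) gives \(f_n(s_n,A_n)\ge f(s_n,A_n)\). The map \((s,A)\mapsto f(s,A)\) is continuous at \((s,A)\) with \(A\succ0\). To see this, note that the infimum is attained on a compact set: the minimizer satisfies \(\|s-v\|\le\|s\|\cdot(\lambda_{\max}/\lambda_{\min})^{1/2}\), using \(0\in\mathcal T\), and these eigenvalue bounds are uniform near \(A\). On such a compact set the objective is jointly continuous. Therefore \(\liminf_n f_n(s_n,A_n)\ge f(s,A)\).

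For the upper bound, let \(v^*\) be the \(A\)-projection of \(s\) onto \(\mathcal T\). It suffices to build \(v_n\in\mathcal T_n\) with \(v_n\to v^*\), since then
\[
    f_n(s_n,A_n)\le(s_n-v_n)'A_n(s_n-v_n)\to f(s,A).
\]
There are two cases.
\begin{itemize}
\item If \(a_0'v^*<0\), then \(2a_0'v^*+n^{-1/2}v^{*\prime}W_0v^*<0\) for large \(n\), so \(v_n=v^*\) works.
\item If \(a_0'v^*=0\), which is the boundary case and the main obstacle, I would push the point inward at rate \(n^{-1/2}\). Set \(v_n=v^*-n^{-1/2}\kappa a_0\) with \(\kappa=(v^{*\prime}W_0v^*+1)/\|a_0\|^2\). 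Then
\[
    2a_0'v_n+n^{-1/2}v_n'W_0v_n
    =n^{-1/2}\bigl\{-2\kappa\|a_0\|^2+v^{*\prime}W_0v^*+o(1)\bigr\}
    =n^{-1/2}\bigl\{-v^{*\prime}W_0v^*-2+o(1)\bigr\},
\]
which is negative for large \(n\). Hence \(v_n\in\mathcal T_n\) eventually, and \(v_n\to v^*\).
\end{itemize}
Combining the two bounds gives continuous convergence, and the weak-convergence transfer described at the start completes the argument.
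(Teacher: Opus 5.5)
Your proof is correct and uses the same two geometric facts as the paper's argument: the inclusion $\mathcal T_n\subseteq\mathcal T$, which gives the lower bound, and an inward shift of order $n^{-1/2}$ along $a_0$ that moves a point of the tangent half-space into $\mathcal T_n$, which gives the upper bound. The difference is in how the argument is organized: the paper shifts every point of a bounded ball, localizes using tightness of the minimizers, and then argues through locally uniform convergence of the value functions and the ordinary continuous mapping theorem, whereas you prove continuous convergence of $f_n$ directly (so the lower bound needs no tightness and the shift is applied only to the limiting projection $v^*$) and then invoke the extended continuous mapping theorem, which is equivalent in substance and somewhat more explicit.
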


Let \(b_0\) be either endpoint of the nondegenerate identified interval in Proposition \ref{prop:hausdorff}.  For notational convenience, write
\[
    \gamma_0=\gdag(P,b_0),
    \qquad
    a_0=W_0\gamma_0,
    \qquad
    v_0=a_0'W_{ZZ}^{-1}a_0,
\]
\[
    \sigma_0^2=\sigma_{b_0}^2,
    \qquad
    \sigma_0=(\sigma_0^2)^{1/2}.
\]

\begin{prop}[Local rejection probabilities at a regular boundary]\label{prop:boundary-local-power}
For a fixed \(h\in\R\), consider the candidate sequence
\[
    b_n=b_0+\frac{h}{\sqrt n}
\]
and define
\[
    \kappa(h)
    =\frac{
       h\,a_0'\pi(P)
     }{
       \sigma_0\sqrt{v_0}
     }.
\]
Then
\begin{equation*}
    pAR(b_n;\Gammao)
    \Rightarrow
    \{G-\kappa(h)\}_+^2,
    \qquad G\sim N(0,1),
\end{equation*}
where \(x_+=\max\{x,0\}\). Then
\begin{equation*}
    \lim_{n\to\infty}
    \Pr_P\!\left\{
       pAR(b_n;\Gammao)>c_{1-\alpha,n}
       \mid Z
    \right\}
    =1-\Phi\!\left(
       \sqrt{q_{r,1-\alpha}}+\kappa(h)
     \right),
\end{equation*}
where \(q_{r,1-\alpha}\) is the \((1-\alpha)\)-quantile of \(\chi_r^2\). The candidate sequence moves locally outside the identified interval precisely when
\[
    h\,a_0'\pi(P)<0.
\]
At the boundary itself, \(h=0\),
\begin{equation*}
    \lim_{n\to\infty}
    \Pr_P\!\left\{
       pAR(b_0;\Gammao)>c_{1-\alpha,n}
       \mid Z
    \right\}
    =1-\Phi\!\left(\sqrt{q_{r,1-\alpha}}\right)
    \le\frac{\alpha}{2}.
\end{equation*}
\end{prop}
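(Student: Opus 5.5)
We will prove Proposition \ref{prop:boundary-local-power} by combining the distance representation of Lemma \ref{lem:distance-denominator} with the tangent-cone limit of Lemma \ref{lem:tangent-distance}. By \eqref{eq:distance-representation},
\[
pAR(b_n;\Gammao)=\frac{n\,d^2_{\Qn(Z)}\{\ghat(b_n),\Gammao\}}{\Dn(b_n)}.
\]
For the denominator, $\Dn(b_n)=(n-r)^{-1}(U-Vb_n)'M_Z(U-Vb_n)$. Because $b_n\to b_0$ and the sample second moments of $M_Z(U,V)$ converge, $\Dn(b_n)\to_p\sigma_0^2$. The numerator requires more work, and we recenter at $\gamma_0$. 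Write
\[
\ghat(b_n)=\gamma^\dagger(P,b_n)+K^{-1}Z'\epsilon_{b_n},\qquad \gamma^\dagger(P,b_n)=\gamma_0-\pi(P)h/\sqrt n .
\]
Define
\[
t_n=\sqrt n\{\ghat(b_n)-\gamma_0\}=-\pi(P)h+\Qn(Z)^{-1}n^{-1/2}Z'\epsilon_{b_n}.
\]
Conditionally on $Z$, this vector is exactly Gaussian with mean $-\pi(P)h$ and covariance $\sigma_{b_n}^2\Qn(Z)^{-1}$. Hence $t_n\Rightarrow t\sim N(-\pi(P)h,\sigma_0^2W_{ZZ}^{-1})$.

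Next we change variables to $\gamma=\gamma_0+v/\sqrt n$. Since $\gamma_0'W_0\gamma_0=g^2$, the constraint $\gamma'W_0\gamma\le g^2$ becomes $2a_0'v+n^{-1/2}v'W_0v\le0$, which is exactly $\mathcal T_n$. This gives
\[
n\,d^2_{\Qn(Z)}\{\ghat(b_n),\Gammao\}=\inf_{v\in\mathcal T_n}(t_n-v)'\Qn(Z)(t_n-v).
\]
Lemma \ref{lem:tangent-distance} then gives convergence in distribution to $\inf_{a_0'v\le0}(t-v)'W_{ZZ}(t-v)$. Projecting onto a half-space in the $W_{ZZ}$ metric gives the explicit value
\[
\frac{\{(a_0't)_+\}^2}{a_0'W_{ZZ}^{-1}a_0}=\frac{\{(a_0't)_+\}^2}{v_0}.
\]
The scalar $a_0't$ is $N(-h\,a_0'\pi(P),\,\sigma_0^2v_0)$. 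Therefore $a_0't/(\sigma_0\sqrt{v_0})$ is distributed as $G-\kappa(h)$ with $G\sim N(0,1)$. Slutsky's lemma applied to the ratio yields
\[
pAR(b_n;\Gammao)\Rightarrow\{G-\kappa(h)\}_+^2 .
\]

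The rejection probability follows from this limit. The limit law is continuous at every positive level, and $c_{1-\alpha,n}\to q_{r,1-\alpha}>0$. Hence
\[
\Pr\{pAR(b_n;\Gammao)>c_{1-\alpha,n}\mid Z\}\to\Pr\{G-\kappa(h)>\sqrt{q_{r,1-\alpha}}\}=1-\Phi(\sqrt{q_{r,1-\alpha}}+\kappa(h)).
\]
We next check the sign condition. The vector $\gamma^\dagger(P,b_n)$ leaves the ellipsoid to first order exactly when
\[
\frac{d}{db}\bigl\{\gamma^\dagger(P,b)'W_0\gamma^\dagger(P,b)\bigr\}\Big|_{b_0}\cdot h=-2a_0'\pi(P)\,h>0 .
\]
This derivative is nonzero at a regular endpoint because $g>g_*$ makes $m_{W_0}$ strictly monotone there. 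The condition is therefore equivalent to $h\,a_0'\pi(P)<0$. Finally, at $h=0$ we have $\kappa(0)=0$. Since $q_{r,1-\alpha}\ge q_{1,1-\alpha}=z_{1-\alpha/2}^2$, we get $1-\Phi(\sqrt{q_{r,1-\alpha}})\le1-\Phi(z_{1-\alpha/2})=\alpha/2$.

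The main obstacle is justifying the use of Lemma \ref{lem:tangent-distance}. The requirement $t_n\Rightarrow t$ must hold jointly with the deterministic convergence $\Qn(Z)\to W_{ZZ}$; this is immediate under the conditional Gaussian law. The denominator limit must also hold along the moving sequence $b_n$, which needs only the uniform continuity of $\Dn(\cdot)$ in $b$ on compacts. The remaining steps are mechanical. The only delicate algebra is checking that the half-space projection distance equals $(a_0't)_+^2/v_0$, which follows from the Lagrangian of a single linear constraint.
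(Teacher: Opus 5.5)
Your proposal is correct and follows essentially the same route as the paper's proof. Both arguments use the distance representation, recenter at $\gamma_0$ so that $t_n$ is exactly conditionally Gaussian with mean $-\pi(P)h$, rescale the ellipsoid to $\mathcal T_n$ and invoke Lemma \ref{lem:tangent-distance}, apply the Lagrangian half-space projection formula and Slutsky for the denominator, and finish with the quantile comparison $q_{r,1-\alpha}\ge q_{1,1-\alpha}=z_{1-\alpha/2}^2$ at $h=0$ (which the paper justifies by writing $\chi^2_r$ as $\chi^2_1$ plus an independent nonnegative $\chi^2_{r-1}$, a step you leave implicit).
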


Proposition \ref{prop:boundary-local-power} fixes the data-generating process and moves the tested candidate value $b_n$ across the population boundary at the root-\(n\) rate. Only the outward normal component of the Gaussian estimation error contributes to the first-order distance from the ellipsoid. Thus, the limit is the square of the positive part of a normal random variable rather than a full chi-squared variable. The common \(r\)-degree of freedom critical value preserves the finite-sample pointwise bound but is conservative at a regular boundary. The final bound is attained when \(r=1\) and is strict when \(r>1\).

\section{Computation of the pAR statistic}
\label{sec:estimation}
This section develops the calculations used to evaluate and invert the pAR test. Conditional on the realized instrument matrix, fixed and sample-dependent supports lead to the same deterministic projection problem, although the resulting null hypotheses have different interpretations. Section \ref{sec:kkt} derives the Karush--Kuhn--Tucker (KKT) characterization. Section \ref{subsec:spectral} gives a spectral representation for repeated inversion, and Section \ref{subsec:analytical} obtains a closed form under sample normalization. Throughout the section, write
\[
    \Gamma_n(g,W_n)
    =
    \{\gamma:\gamma'W_n\gamma\le g^2\},
    \qquad g\ge0,\quad W_n\succ0.
\]

\subsection{KKT characterization of the profiled AR statistic}
\label{sec:kkt}

Section \ref{sec:uniform} established pointwise finite-sample validity for the membership null. We now derive the profiling calculation. When $\widehat\sigma_n^2(b)=D_n(b)>0$, the fixed-$\gamma$ AR statistic is
$$
    AR(b;\gamma)
    =
    \frac{N_n(b;\gamma)}{\widehat\sigma^2_n(b)},
    \qquad
    N_n(b;\gamma)
    =
    \{\widehat\gamma_n(b)-\gamma\}'K
    \{\widehat\gamma_n(b)-\gamma\}.
$$
For a fixed candidate value $b\in\mathbb R$, the denominator does not depend on $\gamma$. The pAR calculation therefore solves
$$
    \min_{\gamma\in\mathbb R^r}
    N_n(b;\gamma)
    \quad\text{such that}\quad
    \gamma'W_n\gamma\le g^2.
$$
This convex program projects $\widehat\gamma_n(b)$ onto $\Gamma_n(g,W_n)$ in the metric induced by $K$.

Because $W_n\succ0$, the admissible set is nonempty and compact for every $g\ge0$, so the continuous objective attains its minimum. Because $K\succ0$, the objective is strictly convex and the minimizer is unique. Write $\widehat\gamma=\widehat\gamma_n(b)$. Proposition \ref{prop:kkt} characterizes this minimizer.

\begin{prop}[KKT characterization]
    \label{prop:kkt}
    For a fixed candidate value $b\in\mathbb R$, consider
    $$
        pAR\{b;\Gamma_n(g,W_n)\}
        =
        \inf_{\gamma\in\Gamma_n(g,W_n)}
        \frac{N_n(b;\gamma)}{\widehat\sigma^2_n(b)}.
    $$
    The unique minimizer $\gamma^\ast$ satisfies the following three cases.

    1. If $g=0$, then $\Gamma_n(0,W_n)=\{0\}$ and
    $$
        \gamma^\ast=0,
        \qquad
        pAR\{b;\Gamma_n(0,W_n)\}
        =
        \frac{\widehat\gamma'K\widehat\gamma}{\widehat\sigma^2_n(b)}
        =
        AR(b;0).
    $$
    If $\widehat\gamma\neq0$, no finite $\lambda$ solves
    $$
        \gamma(\lambda)'W_n\gamma(\lambda)=0,
        \qquad
        \gamma(\lambda)=(K+\lambda W_n)^{-1}K\widehat\gamma.
    $$
    Instead, $\gamma(\lambda)\to0$ only as $\lambda\to\infty$. If $\widehat\gamma=0$, any $
    \lambda \ge 0$ satisfies the KKT equation, and set $\lambda^\ast=0$ by convention.

    2. If $g>0$ and $\widehat\gamma\in\Gamma_n(g,W_n)$, then the unconstrained minimizer is feasible and
    $$
        \gamma^\ast=\widehat\gamma,
        \qquad
        \lambda^\ast=0,
        \qquad
        pAR\{b;\Gamma_n(g,W_n)\}=0.
    $$

    3. If $g>0$ and $\widehat\gamma\notin\Gamma_n(g,W_n)$, then the constraint binds. There is a unique finite multiplier $\lambda^\ast>0$ satisfying
    $$
        \phi(\lambda)
        =
        \gamma(\lambda)'W_n\gamma(\lambda)-g^2
        =
        0,
        \qquad
        \gamma(\lambda)=(K+\lambda W_n)^{-1}K\widehat\gamma.
    $$
    The minimizer is $\gamma^\ast=\gamma(\lambda^\ast)$, and
    $$
        pAR\{b;\Gamma_n(g,W_n)\}
        =
        \frac{\lambda^{\ast2}}{\widehat\sigma^2_n(b)}
        \widehat\gamma'
        W_n(K+\lambda^\ast W_n)^{-1}
        K
        (K+\lambda^\ast W_n)^{-1}
        W_n\widehat\gamma.
    $$
\end{prop}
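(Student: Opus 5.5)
Since $\widehat\sigma_n^2(b)>0$ is fixed for the given $b$, it suffices to analyze the convex program $\min_\gamma N_n(b;\gamma)$ subject to $\gamma'W_n\gamma\le g^2$. Existence and uniqueness of $\gamma^\ast$ are already noted above: the feasible set is nonempty and compact, and $N_n$ is strictly convex because $K\succ0$. Slater's condition holds when $g>0$ (take $\gamma=0$), so the KKT conditions are necessary and sufficient there. With multiplier $\lambda\ge0$, stationarity reads $K(\gamma-\widehat\gamma)+\lambda W_n\gamma=0$. Because $K+\lambda W_n\succ0$, this gives $\gamma=\gamma(\lambda)=(K+\lambda W_n)^{-1}K\widehat\gamma$. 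The KKT system is completed by complementary slackness, $\lambda\{\gamma'W_n\gamma-g^2\}=0$, and primal feasibility.

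\textbf{Case $g=0$.} Here $W_n\succ0$ forces $\Gamma_n(0,W_n)=\{0\}$, so $\gamma^\ast=0$ and the value is $\widehat\gamma'K\widehat\gamma/\widehat\sigma_n^2(b)=AR(b;0)$. For the remark on the multiplier, $\gamma(\lambda)'W_n\gamma(\lambda)=0$ requires $(K+\lambda W_n)^{-1}K\widehat\gamma=0$. Since $K$ and $K+\lambda W_n$ are invertible, this forces $\widehat\gamma=0$. If $\widehat\gamma\neq0$, no finite $\lambda$ works, and $\gamma(\lambda)=\lambda^{-1}(K/\lambda+W_n)^{-1}K\widehat\gamma\to0$ as $\lambda\to\infty$. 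If $\widehat\gamma=0$, then $\gamma(\lambda)=0$ for every $\lambda\ge0$, which justifies the convention $\lambda^\ast=0$.

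\textbf{Case $g>0$, $\widehat\gamma$ feasible.} The objective is nonnegative and vanishes at $\widehat\gamma$, so $\gamma^\ast=\widehat\gamma$ and the value is $0$. Uniqueness of the minimizer forces $\lambda^\ast=0$ via stationarity.

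\textbf{Case $g>0$, $\widehat\gamma$ infeasible.} If $\lambda^\ast=0$, stationarity would give $\gamma^\ast=\widehat\gamma$, which is infeasible; hence $\lambda^\ast>0$ and, by complementary slackness, $\phi(\lambda^\ast)=0$. For existence and uniqueness of the root, I would prove that $\phi$ is continuous and strictly decreasing on $[0,\infty)$, with $\phi(0)=\widehat\gamma'W_n\widehat\gamma-g^2>0$ and $\phi(\lambda)\to-g^2<0$ as $\lambda\to\infty$.

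Monotonicity is the main technical step. I would simultaneously diagonalize: write $W_n=L L'$ and take an eigendecomposition of $L^{-1}KL^{-\prime}=Q\,\mathrm{diag}(\mu_j)\,Q'$ with $\mu_j>0$. Setting $c=Q'L^{-1}K\widehat\gamma$ gives
$$
\phi(\lambda)=\sum_j c_j^2/(\mu_j+\lambda)^2-g^2 .
$$
Here $c\neq0$ because $\widehat\gamma\neq0$, so $\phi$ is strictly decreasing and convex. Alternatively, the derivative $\phi'(\lambda)=-2\gamma'W_n(K+\lambda W_n)^{-1}W_n\gamma<0$ gives the same conclusion.

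Finally, for the value formula, stationarity gives $\widehat\gamma-\gamma^\ast=\lambda^\ast K^{-1}W_n\gamma^\ast$, and also
$$
\widehat\gamma-\gamma^\ast=\{I-(K+\lambda^\ast W_n)^{-1}K\}\widehat\gamma=\lambda^\ast(K+\lambda^\ast W_n)^{-1}W_n\widehat\gamma .
$$
Substituting the second expression into $(\widehat\gamma-\gamma^\ast)'K(\widehat\gamma-\gamma^\ast)$ and dividing by $\widehat\sigma_n^2(b)$ yields the stated expression, using the symmetry of $(K+\lambda^\ast W_n)^{-1}$.
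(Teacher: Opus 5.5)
Your proposal is correct and follows essentially the same route as the paper's proof: direct treatment of the singleton case $g=0$, Slater's condition and KKT for $g>0$, stationarity giving $\gamma(\lambda)=(K+\lambda W_n)^{-1}K\widehat\gamma$, a sign change plus strict monotonicity of $\phi$, and substitution of $\widehat\gamma-\gamma(\lambda^\ast)=\lambda^\ast(K+\lambda^\ast W_n)^{-1}W_n\widehat\gamma$ into the numerator. The paper proves monotonicity with the derivative formula you give as your alternative; your simultaneous-diagonalization version is the computation the paper carries out in its spectral characterization, and your explicit limit $\gamma(\lambda)=\lambda^{-1}(K/\lambda+W_n)^{-1}K\widehat\gamma\to0$ fills in a step the paper only asserts.
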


Proposition \ref{prop:kkt} yields three computational branches. If $g=0$, the admissible set is the singleton $\{0\}$, so $\gamma^\ast=0$ and
$$
    pAR\{b;\Gamma_n(0,W_n)\}=AR(b;0).
$$
When $\widehat\gamma\ne0$, the path $\gamma(\lambda)$ approaches zero only as $\lambda\to\infty$. The exact-exclusion branch should therefore be evaluated directly.

If $g>0$ and $\widehat\gamma'W_n\widehat\gamma\le g^2$, the unconstrained minimizer is feasible. In this case, $\gamma^\ast=\widehat\gamma$ and the profiled statistic is zero. If $g>0$ and $\widehat\gamma'W_n\widehat\gamma>g^2$, the constraint binds and $\lambda^\ast$ is the unique positive root of $\phi(\lambda)$. Indeed,
$$
    \phi(0)=\widehat\gamma'W_n\widehat\gamma-g^2>0,
    \qquad
    \lim_{\lambda\to\infty}\phi(\lambda)=-g^2<0,
$$
and $\phi$ is strictly decreasing in the binding case. Thus, the KKT characterization reduces the $r$-dimensional projection to a scalar root search.

The implementation follows the same order. First compute $\widehat\gamma$ and $\widehat\sigma_n^2(b)$ and verify that the denominator is positive. Then handle $g=0$, check feasibility of $\widehat\gamma$ when $g>0$, and solve the secular equation only in the positive-radius binding case. Each evaluation of $\phi(\lambda)$ requires an $r\times r$ linear solve with $K+\lambda W_n$, so the computational cost is concentrated in that final branch.

The KKT conditions also give a dual representation. When $g>0$, Slater's condition holds because the ellipsoid has nonempty interior. When $g=0$, the feasible set is the singleton $\{0\}$ and the dual optimum may be approached only as $\lambda\to\infty$. Therefore, the exact-exclusion boundary must be handled separately.

\begin{corr}[Dual representation]
\label{cor:dual}
Define
\[
    V_n(b,g)
    =
    \min_{\gamma \in \Gamma_n(g,W_n)}
    N_n(b;\gamma)
\]
and, for \(\lambda\ge0\),
\[
    q(\lambda)
    =
    \widehat\gamma'K\widehat\gamma
    -
    \widehat\gamma'K(K+\lambda W_n)^{-1}K\widehat\gamma
    -
    \lambda g^2.
\]
Then \(q(\lambda)\le V_n(b,g)\) for every \(\lambda\ge0\).

If \(g>0\), strong duality holds and
\[
    V_n(b,g)
    =
    \max_{\lambda\ge0}q(\lambda)
    =
    q(\lambda^\ast),
\]
where \(\lambda^\ast\) is the multiplier in
Proposition~\ref{prop:kkt}. Moreover,
\[
    q'(\lambda)
    =
    \gamma(\lambda)'W_n\gamma(\lambda)-g^2.
\]

If \(g=0\), then
\[
    V_n(b,0)=\widehat\gamma'K\widehat\gamma
            =\sup_{\lambda\ge0}q(\lambda).
\]
When \(\widehat\gamma\neq0\), \(q\) is strictly increasing and the
supremum is not attained at any finite \(\lambda\). When
\(\widehat\gamma=0\), \(q(\lambda)=0\) for every \(\lambda\ge0\).
\end{corr}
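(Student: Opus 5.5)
The plan is to write down the Lagrangian, minimize it explicitly in $\gamma$ for each fixed $\lambda\ge0$, and then read off weak duality, the derivative formula, strong duality, and the behaviour at $g=0$. The Lagrangian is
\[
    \mathcal L(\gamma,\lambda)=(\widehat\gamma-\gamma)'K(\widehat\gamma-\gamma)+\lambda(\gamma'W_n\gamma-g^2).
\]
For fixed $\lambda\ge0$, $K+\lambda W_n\succ0$, so $\mathcal L(\cdot,\lambda)$ is strictly convex. Its unique minimizer solves $(K+\lambda W_n)\gamma=K\widehat\gamma$, which is $\gamma(\lambda)$.

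Substituting $\gamma(\lambda)$ back, expand $\mathcal L$ as
\[
    \widehat\gamma'K\widehat\gamma-2\gamma'K\widehat\gamma+\gamma'(K+\lambda W_n)\gamma-\lambda g^2 .
\]
Using $(K+\lambda W_n)\gamma(\lambda)=K\widehat\gamma$, this collapses to $q(\lambda)$. Weak duality follows at once: for any feasible $\gamma$ and $\lambda\ge0$,
\[
    q(\lambda)\le\mathcal L(\gamma,\lambda)\le N_n(b;\gamma).
\]
Minimizing over feasible $\gamma$ gives $q(\lambda)\le V_n(b,g)$. For the derivative, use
\[
    \frac{d}{d\lambda}(K+\lambda W_n)^{-1}=-(K+\lambda W_n)^{-1}W_n(K+\lambda W_n)^{-1}.
\]
Then
\[
    -\frac{d}{d\lambda}\,\widehat\gamma'K(K+\lambda W_n)^{-1}K\widehat\gamma=\gamma(\lambda)'W_n\gamma(\lambda),
\]
so $q'(\lambda)=\gamma(\lambda)'W_n\gamma(\lambda)-g^2$. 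This is the envelope theorem applied to the Lagrangian.

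For $g>0$ I would invoke Proposition~\ref{prop:kkt} rather than general Slater theory, which also covers attainment.
\begin{itemize}
\item In case 2, $\lambda^\ast=0$ and $q(0)=0=V_n$; together with weak duality this gives $\max_{\lambda\ge0}q=q(0)$.
\item In case 3, $\gamma^\ast=\gamma(\lambda^\ast)$ is feasible with $\gamma^{\ast\prime}W_n\gamma^\ast=g^2$. Hence
\[
    q(\lambda^\ast)=\mathcal L(\gamma^\ast,\lambda^\ast)=N_n(b;\gamma^\ast)=V_n(b,g),
\]
and weak duality makes $\lambda^\ast$ a maximizer.
\end{itemize}
Concavity of $q$ holds because it is an infimum of functions affine in $\lambda$. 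It is consistent with $q'(\lambda^\ast)=\phi(\lambda^\ast)=0$, though it is not needed.

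For $g=0$, $V_n=\widehat\gamma'K\widehat\gamma$ and $q'(\lambda)=\gamma(\lambda)'W_n\gamma(\lambda)$.
\begin{itemize}
\item If $\widehat\gamma\neq0$, then $K\widehat\gamma\ne0$, so $\gamma(\lambda)\neq0$ and $q'>0$; thus $q$ is strictly increasing. Since $q\le V_n$ by weak duality, it remains to show $q(\lambda)\to\widehat\gamma'K\widehat\gamma$, i.e.\ $\widehat\gamma'K(K+\lambda W_n)^{-1}K\widehat\gamma\to0$. This holds because $(K+\lambda W_n)^{-1}\preceq(\lambda W_n)^{-1}\to0$. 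Strict monotonicity then rules out attainment at any finite $\lambda$.
\item If $\widehat\gamma=0$, every term of $q$ vanishes, so $q\equiv0$.
\end{itemize}

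The main obstacle is the $g=0$ limit: the claim that the supremum equals $V_n$ and is not attained needs the explicit Loewner bound above, since Slater's condition fails there. The $g>0$ case should be routine given Proposition~\ref{prop:kkt}.
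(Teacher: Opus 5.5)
Your proposal is correct and follows the paper's route: minimize the Lagrangian in $\gamma$ in closed form to obtain $q$, get weak duality by evaluating at feasible $\gamma$, and differentiate $(K+\lambda W_n)^{-1}$ to get $q'$. The only departures are small refinements. For $g>0$ you certify strong duality directly from the explicit primal--dual pair $(\gamma(\lambda^\ast),\lambda^\ast)$ of Proposition~\ref{prop:kkt} rather than citing Slater's condition, and at $g=0$ your Loewner bound $(K+\lambda W_n)^{-1}\preceq(\lambda W_n)^{-1}$ supplies the limit $q(\lambda)\to\widehat\gamma'K\widehat\gamma$, which the paper simply asserts.
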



The dual representation provides a lower bound during test inversion. For $g>0$ and $\lambda\ge0$, $q(\lambda)$ is no larger than the minimized numerator $N_n(b;\gamma^\ast)$. Hence
$$
    \frac{q(\lambda)}{\widehat\sigma^2_n(b)}
$$
is a lower bound on $pAR\{b;\Gamma_n(g,W_n)\}$. If, for a candidate value $b$,
$$
    q(\lambda)>c_{1-\alpha,n}\widehat\sigma^2_n(b),
$$
then
$$
    pAR\{b;\Gamma_n(g,W_n)\}>c_{1-\alpha,n}.
$$
The candidate can therefore be rejected before the full optimization is completed. The comparison must use a common scale because $q(\lambda)$ is a numerator, whereas $c_{1-\alpha,n}$ is an AR critical value.

Strong duality gives a second numerical check. For $g>0$,
$$
    N_n(b;\gamma^\ast)=q(\lambda^\ast).
$$
In the unconstrained-feasible case, $\lambda^\ast=0$ and both sides equal zero. In the binding case, $\lambda^\ast>0$ is the unique finite root of the secular equation. At $g=0$, the appropriate check is the direct primal identity
$$
    N_n(b;\gamma^\ast)=N_n(b;0)=\widehat\gamma'K\widehat\gamma,
$$
not equality at a finite multiplier.

The dual multiplier also has a local sensitivity interpretation for positive radii. Let $\rho=g^2$. Wherever the value function is differentiable, the envelope theorem gives
$$
    \frac{\partial N_n(b;\gamma^\ast)}{\partial \rho}
    =
    -\lambda^\ast,
    \qquad
    \frac{\partial pAR\{b;\Gamma_n(g,W_n)\}}{\partial g}
    =
    -\frac{2\lambda^\ast g}{\widehat\sigma^2_n(b)}.
$$
Thus, $\lambda^\ast$ is the marginal value of relaxing the squared-radius bound. In the binding case, increasing $g$ lowers the pAR statistic locally at rate $2\lambda^\ast g/\widehat\sigma_n^2(b)$. In the unconstrained-feasible case, $\lambda^\ast=0$ and a small increase in $g$ does not change the statistic. At the exact-exclusion boundary, the derivative statement is replaced by
$$
    pAR\{b;\Gamma_n(0,W_n)\}=AR(b;0).
$$

\subsection{Spectral characterization of the pAR statistic}
\label{subsec:spectral}

This subsection rewrites the KKT problem in spectral coordinates. Whitening converts the admissible ellipsoid into a Euclidean ball, and diagonalizing the numerator metric expresses the binding case through a scalar secular equation. After a one-time eigendecomposition, each evaluation of this equation requires $O(r)$ operations.

For a fixed candidate value $b\in\mathbb R$, consider the projection problem from Section \ref{sec:kkt}:
$$
    \min_{\gamma\in\mathbb R^r}
    N_n(b;\gamma)
    \quad\text{such that}\quad
    \gamma'W_n\gamma\le g^2.
$$
Let $C=W_n^{1/2}$ denote the symmetric positive-definite square root of $W_n$, and define
$$
    u=C\gamma,
    \qquad
    \widehat u=C\widehat\gamma,
    \qquad
    \widetilde K=C^{-1}KC^{-1}.
$$
The constraint becomes the Euclidean ball
$$
    \gamma'W_n\gamma=\|u\|_2^2\le g^2,
$$
and the numerator becomes
$$
    N_n(b;\gamma)
    =
    (\widehat u-u)'\widetilde K(\widehat u-u).
$$
Thus, the transformed problem projects $\widehat u$ onto the Euclidean ball $\{u:\|u\|_2\le g\}$ in the positive-definite metric $\widetilde K$.

Diagonalize the whitened numerator metric as
$$
    \widetilde K=Q\Lambda Q',
    \qquad
    \Lambda=\operatorname{diag}(\Lambda_1,\ldots,\Lambda_r),
    \qquad
    \Lambda_j>0,
$$
and write
$$
    \tilde u=Q'\widehat u.
$$
In this basis, the KKT solution shrinks each spectral component by a multiplier-dependent factor. Proposition \ref{prop:spectral_par} gives the resulting secular equation.

\begin{prop}[Spectral characterization]
    \label{prop:spectral_par}
    For a fixed candidate value $b\in\mathbb R$, consider
    $$
        pAR\{b;\Gamma_n(g,W_n)\}
        =
        \inf_{\gamma\in\Gamma_n(g,W_n)}
        \frac{N_n(b;\gamma)}{\widehat\sigma^2_n(b)}.
    $$
    The unique minimizer satisfies the following.

    1. If $g=0$, then
    \[
        u^\ast=0,
        \qquad
        \gamma^\ast=0,
    \]
    and
    \[
        N_n(b;\gamma^\ast)
        =
        \widehat u'\widetilde K\widehat u
        =
        \sum_{j=1}^r\Lambda_j\tilde u_j^2.
    \]
    If $\widehat u\neq0$, no finite multiplier solves the secular equation. The boundary solution is obtained only as $\lambda\to\infty$. 
    
    2. For finite $\lambda\ge0$, define
    \[
        u(\lambda)
        =
        Q\operatorname{diag}\left(
        \frac{\Lambda_j}{\Lambda_j+\lambda}
        \right)Q'\widehat u
    \]
    and
    \[
        \phi(\lambda)
        =
        \sum_{j=1}^r
        \left(
        \frac{\Lambda_j}{\Lambda_j+\lambda}
        \right)^2
        \tilde u_j^2
        -
        g^2.
    \]
    Then
    \[
        \phi'(\lambda)
        =
        -2\sum_{j=1}^r
        \frac{\Lambda_j^2}{(\Lambda_j+\lambda)^3}
        \tilde u_j^2
        \le0,
    \]
    with strict inequality for every finite $\lambda\ge0$ whenever $\widehat u\neq0$.

    If $g>0$ and $\|\widehat u\|_2\le g$, then
    \[
        \lambda^\ast=0,
        \qquad
        u^\ast=\widehat u,
        \qquad
        \gamma^\ast=\widehat\gamma,
        \qquad
        N_n(b;\gamma^\ast)=0.
    \]

    If $g>0$ and $\|\widehat u\|_2>g$, there is a unique finite $\lambda^\ast>0$ satisfying
    \[
        \phi(\lambda^\ast)=0.
    \]
    The minimizer is
    \[
        \gamma^\ast
        =
        C^{-1}Q
        \operatorname{diag}\left(
        \frac{\Lambda_j}{\Lambda_j+\lambda^\ast}
        \right)
        Q'\widehat u.
    \]

    3. For $g>0$, the minimized numerator is
    \[
        N_n(b;\gamma^\ast)
        =
        \sum_{j=1}^r
        \Lambda_j
        \left(
        \frac{\lambda^\ast}{\Lambda_j+\lambda^\ast}
        \right)^2
        \tilde u_j^2,
    \]
    and
    \[
        \lim_{\lambda \rightarrow \infty}
        \sum_{j=1}^r
        \Lambda_j
        \left(
        \frac{\lambda}{\Lambda_j+\lambda}
        \right)^2
        \tilde u_j^2 = \sum_{j=1}^r \Lambda_j \tilde u_j^2
    \]
    coincides with the $g=0$ case.
\end{prop}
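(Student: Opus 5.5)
The plan is to reduce Proposition~\ref{prop:spectral_par} to Proposition~\ref{prop:kkt} by a change of variables, and then check that each formula in the KKT characterization takes the stated diagonal form. Set $C=W_n^{1/2}$, $u=C\gamma$, $\widehat u=C\widehat\gamma$ and $\widetilde K=C^{-1}KC^{-1}$. This map is a linear bijection of $\mathbb R^r$, so minimizers correspond one to one. It carries $\Gamma_n(g,W_n)$ onto the ball $\{\|u\|_2\le g\}$ and turns $N_n(b;\gamma)$ into $(\widehat u-u)'\widetilde K(\widehat u-u)$. Since $\widetilde K\succ0$, the eigenvalues satisfy $\Lambda_j>0$. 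With $\tilde u=Q'\widehat u$ we get $\widehat u'\widetilde K\widehat u=\sum_j\Lambda_j\tilde u_j^2$. This settles case~1 directly from case~1 of Proposition~\ref{prop:kkt}, because $\widehat u\neq0$ if and only if $\widehat\gamma\neq0$.

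Next, transform the KKT path. We have $K+\lambda W_n=C(\widetilde K+\lambda I)C$, so
\[
C\gamma(\lambda)=(\widetilde K+\lambda I)^{-1}\widetilde K\widehat u=Q\operatorname{diag}\{\Lambda_j/(\Lambda_j+\lambda)\}Q'\widehat u=u(\lambda).
\]
Hence $\gamma(\lambda)'W_n\gamma(\lambda)=\|u(\lambda)\|_2^2$. Orthogonality of $Q$ gives this norm as $\sum_j\Lambda_j^2\tilde u_j^2/(\Lambda_j+\lambda)^2$, which is the stated $\phi$. Differentiating term by term gives $\phi'$. The derivative is strictly negative at every finite $\lambda\ge0$ because some $\tilde u_j\neq0$ and every $\Lambda_j>0$. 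The interior case $\|\widehat u\|_2\le g$ and the binding case with a unique root $\lambda^\ast>0$ then follow from Proposition~\ref{prop:kkt}, parts~2 and~3. Alternatively, they follow directly: $\phi(0)>0$, $\phi\to-g^2<0$, $\phi$ is strictly monotone, and the intermediate value theorem applies. The minimizer formula is $\gamma^\ast=C^{-1}u(\lambda^\ast)$.

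For case~3, I would compute the residual directly rather than through the closed form in Proposition~\ref{prop:kkt}:
\[
\widehat u-u(\lambda)=Q\operatorname{diag}\{\lambda/(\Lambda_j+\lambda)\}Q'\widehat u,
\]
so $N_n=\sum_j\Lambda_j\{\lambda^\ast/(\Lambda_j+\lambda^\ast)\}^2\tilde u_j^2$. In the feasible case $\lambda^\ast=0$, and the formula correctly returns zero. As a consistency check, this agrees with the KKT expression once one notes that $\lambda W_n(K+\lambda W_n)^{-1}K\widehat\gamma=K\{\widehat\gamma-\gamma(\lambda)\}$. The limit as $\lambda\to\infty$ is immediate, since each factor $\lambda/(\Lambda_j+\lambda)$ tends to $1$ and the sum is finite. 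The result is $\sum_j\Lambda_j\tilde u_j^2$, which matches case~1.

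The only delicate points are bookkeeping. First, the factorization $K+\lambda W_n=C(\widetilde K+\lambda I)C$ must be correct. Second, we must confirm that $(\widetilde K+\lambda I)^{-1}\widetilde K$ diagonalizes in the basis $Q$, which it does because the two matrices commute. Third, the statement in case~1 that no finite multiplier solves the secular equation needs a short argument. When $g=0$ the equation reads $\phi(\lambda)=0$, that is, $\|u(\lambda)\|_2=0$. But if $\tilde u\neq0$, every component with $\tilde u_j\neq0$ has a factor $\Lambda_j/(\Lambda_j+\lambda)$ that is positive at every finite $\lambda$, so $\|u(\lambda)\|_2>0$. It reaches zero only in the limit $\lambda\to\infty$.
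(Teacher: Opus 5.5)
Your proposal is correct and follows essentially the same route as the paper: whiten with $C=W_n^{1/2}$, diagonalize $\widetilde K=Q\Lambda Q'$, differentiate $\phi$ term by term, use strict monotonicity and the intermediate value theorem for the root, and compute the residual $\widehat u-u(\lambda)=Q\operatorname{diag}\{\lambda/(\Lambda_j+\lambda)\}Q'\widehat u$ directly. The only difference is that the paper rederives the stationarity condition $2\widetilde K(u-\widehat u)+2\lambda u=0$ in $u$-coordinates, whereas you carry over the path $\gamma(\lambda)$ from Proposition~\ref{prop:kkt} via $K+\lambda W_n=C(\widetilde K+\lambda I)C$; this is equally valid and also shows that the two secular equations have the same root.
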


Proposition \ref{prop:spectral_par} has the same three branches as the KKT characterization. At $g=0$, the solution is $u^\ast=\gamma^\ast=0$ and the pAR statistic equals $AR(b;0)$. If $g>0$ and $\|\widehat u\|_2\le g$, the unconstrained minimizer is feasible and the statistic is zero. If $g>0$ and $\|\widehat u\|_2>g$, the constraint binds and the unique positive root of $\phi(\lambda)=0$ determines the optimizer. The minimized statistic is
$$
    pAR\{b;\Gamma_n(g,W_n)\}
    =
    \frac{1}{\widehat\sigma_n^2(b)}
    \sum_{j=1}^r
    \Lambda_j
    \left(
    \frac{\lambda^\ast}{\Lambda_j+\lambda^\ast}
    \right)^2
    \tilde u_j^2.
$$

The spectral algorithm separates one-time calculations from candidate-specific calculations. First compute $\widetilde K=C^{-1}KC^{-1}$ and its eigendecomposition $\widetilde K=Q\Lambda Q'$. For each candidate $b$, compute
$$
    \widehat\gamma=K^{-1}Z'(Y-Xb),
    \qquad
    \widehat u=C\widehat\gamma,
    \qquad
    \tilde u=Q'\widehat u.
$$
Then apply the ordered checks $g=0$, $g>0$ with $\|\widehat u\|_2\le g$, and $g>0$ with $\|\widehat u\|_2>g$. Only the last branch requires a root search. This ordering avoids a finite-multiplier approximation at the exact-exclusion boundary and unnecessary root finding when the unconstrained minimizer is feasible.

In the positive-radius binding case, let
$$
    R=\frac{\|\widehat u\|_2}{g}>1,
    \qquad
    \Lambda_{\min}=\min_j\Lambda_j,
    \qquad
    \Lambda_{\max}=\max_j\Lambda_j.
$$
Then
$$
    \Lambda_{\min}(R-1)
    \le \lambda^\ast
    \le \Lambda_{\max}(R-1),
$$
because
$$
    \frac{\Lambda_{\min}}{\Lambda_{\min}+\lambda}\|\widehat u\|_2
    \le \|u(\lambda)\|_2
    \le
    \frac{\Lambda_{\max}}{\Lambda_{\max}+\lambda}\|\widehat u\|_2.
$$
This finite bracket permits safeguarded Newton or bisection without an open-ended search for an upper bound.


The spectral representation also describes the projection geometry. The eigenvalue $\Lambda_j$ measures the cost of a discrepancy in the $j$th orthogonal direction, whereas $\tilde u_j$ measures the sample discrepancy in that direction. The optimizer multiplies this coordinate by $\Lambda_j/(\Lambda_j+\lambda^\ast)$. A larger multiplier therefore produces stronger shrinkage toward the feasible ball.

The main cost of the spectral method is the initial eigendecomposition and its memory requirement. For dense, moderate-dimensional problems with repeated evaluations, this cost can be offset by the $O(r)$ candidate-specific calculations. For very large sparse problems, or when $K$ or $W_n$ changes frequently, the direct KKT formulation with iterative linear solvers may be preferable. Both implementations should handle $g=0$ before calling a secular-equation routine.

\subsection{Closed-form solution under sample normalization}
\label{subsec:analytical}

The KKT and spectral characterizations apply to any positive-definite metric $W_n$. A closed form is available when the admissible-set metric is proportional to the AR numerator metric. In this subsection, assume
$$
    W_n=W_n(Z)=\frac{Z'Z}{n}=\frac{K}{n}\succ0.
$$
Under this normalization, the radius $g$ has a root mean-square (RMS) interpretation. Let $||x||_A = (x'Ax)^{1/2}$. Then
$$
    \|\gamma\|_{W_n}^2
    =
    \gamma'W_n\gamma
    =
    \frac{1}{n}\|Z\gamma\|_2^2.
$$
Thus, $\Gamma_n(g,W_n)$ bounds the per-observation RMS magnitude of the direct effect $Z\gamma$.

To derive the projection, define
\[
    u=W_n^{1/2}\gamma,
    \qquad
    \widehat u=W_n^{1/2}\widehat\gamma.
\]
Since $K=nW_n$, the AR numerator becomes
\[
    (\widehat\gamma-\gamma)'K(\widehat\gamma-\gamma)
    =
    n\|u- \widehat u\|_2^2,
\]
whereas the constraint becomes
\[
    \|u\|_2\le g.
\]
Therefore, the profiled numerator is $n$ times the squared Euclidean distance from $\widehat u$ to the ball of radius $g$. The projection is radial.

\begin{prop}[Closed-form solution]
    \label{prop:par_closed}
    Suppose $W_n=K/n\succ0$, and consider
    \[
        \Gamma_n(g,W_n)
        =
        \{\gamma:\gamma'W_n\gamma\le g^2\},
        \qquad
        g\ge0.
    \]
    The pAR statistic has the closed-form representation
    \begin{equation}
        \label{prop9:closed-form}
        pAR\{b;\Gamma_n(g,W_n)\}
        =
        \frac{\big(\|\widehat\gamma\|_K-\sqrt n\,g\big)_+^2}{\widehat\sigma^2_n(b)}
        =
        \frac{n\big(\|\widehat\gamma\|_{W_n}-g\big)_+^2}{\widehat\sigma^2_n(b)},
    \end{equation}
    where $(x)_+=\max\{x,0\}$. Using the exact-exclusion AR statistic as a benchmark, the same statistic can be written as
    \begin{equation}
        \label{prop9:benchmark}
        pAR\{b;\Gamma_n(g,W_n)\}
        =
        \left(
        \sqrt{AR(b;0)}
        -
        \frac{\sqrt n\,g}{\widehat\sigma_n(b)}
        \right)_+^2.
    \end{equation}
\end{prop}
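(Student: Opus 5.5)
The plan is to reduce the profiling problem to a Euclidean projection onto a ball, using the whitening introduced just before the statement. Since $\widehat\sigma_n^2(b)=D_n(b)$ does not depend on $\gamma$, it suffices to compute the minimized numerator $V_n(b,g)=\min_{\gamma'W_n\gamma\le g^2}(\widehat\gamma-\gamma)'K(\widehat\gamma-\gamma)$ and then divide by $\widehat\sigma_n^2(b)$. We assume $D_n(b)>0$, which holds for every $b$ almost surely by Lemma \ref{lem:distance-denominator}(ii).

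First, substitute $u=W_n^{1/2}\gamma$ and $\widehat u=W_n^{1/2}\widehat\gamma$. Because $K=nW_n$, this turns the problem into $n\min_{\|u\|_2\le g}\|\widehat u-u\|_2^2$. Next, compute the Euclidean distance from $\widehat u$ to the closed ball $B_g=\{\|u\|_2\le g\}$, which is $(\|\widehat u\|_2-g)_+$.

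1. If $\|\widehat u\|_2\le g$, take $u=\widehat u$, so the distance is zero.
2. If $\|\widehat u\|_2>g$, the lower bound follows from the triangle inequality: for any $u\in B_g$, $\|\widehat u-u\|_2\ge\|\widehat u\|_2-\|u\|_2\ge\|\widehat u\|_2-g$. The radial point $u^\ast=g\,\widehat u/\|\widehat u\|_2$ attains this bound.
3. The case $g=0$ is included automatically, since the ball is $\{0\}$ and the formula gives $\|\widehat u\|_2^2$, consistent with case 1 of Proposition \ref{prop:kkt}.

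As a consistency check, this radial solution matches the KKT path $\gamma(\lambda)=(K+\lambda W_n)^{-1}K\widehat\gamma=\frac{n}{n+\lambda}\widehat\gamma$ of Proposition \ref{prop:kkt}, with $\lambda^\ast=n(\|\widehat u\|_2/g-1)$. It is not needed for the proof.

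The main step is translating back. Since $\|\widehat u\|_2=\|\widehat\gamma\|_{W_n}$, we get $V_n=n(\|\widehat\gamma\|_{W_n}-g)_+^2$. Moreover $\|\widehat\gamma\|_K=\sqrt n\,\|\widehat\gamma\|_{W_n}$, so $\sqrt n\,(\|\widehat\gamma\|_{W_n}-g)_+=(\|\widehat\gamma\|_K-\sqrt n\,g)_+$. Dividing by $\widehat\sigma_n^2(b)$ gives both expressions in \eqref{prop9:closed-form}.

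For \eqref{prop9:benchmark}, recall that $AR(b;0)=\widehat\gamma'K\widehat\gamma/\widehat\sigma_n^2(b)$ from case 1 of Proposition \ref{prop:kkt}, so $\sqrt{AR(b;0)}=\|\widehat\gamma\|_K/\widehat\sigma_n(b)$. Since $\widehat\sigma_n(b)>0$, we can pull it inside the positive part: $(x)_+^2/s^2=(x/s)_+^2$ for $s>0$. This yields the benchmark form.

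There is no real obstacle. The only points needing care are the triangle-inequality lower bound, which is what makes the projection radial, and the positivity of $\widehat\sigma_n(b)$ used in the last rescaling.
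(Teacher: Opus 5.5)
Your proposal is correct and follows the paper's own proof: you whiten with $u=W_n^{1/2}\gamma$, use $K=nW_n$ to reduce the profiled numerator to $n$ times the squared Euclidean distance from $\widehat u$ to the ball $\{\|u\|_2\le g\}$, take the radial projection, and rescale by $\widehat\sigma_n(b)>0$ to get the benchmark form. Your triangle-inequality argument for the optimality of the radial point spells out a step that the paper simply asserts.
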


Proposition \ref{prop:par_closed} compares the sample discrepancy $\|\widehat\gamma\|_{W_n}$ with the admissible RMS radius $g$. If $\|\widehat\gamma\|_{W_n}\le g$, the unconstrained minimizer is feasible and the profiled numerator is zero:
$$
    pAR\{b;\Gamma_n(g,W_n)\}=0.
$$
A vector of admissible direct effects then exactly rationalizes the candidate value $b$.

If $\|\widehat\gamma\|_{W_n}>g$, the support cannot absorb the complete sample discrepancy. The optimizer is the radial projection of $\widehat\gamma$ onto the boundary of the RMS ball, leaving the discrepancy $\|\widehat\gamma\|_{W_n}-g$. Multiplication by $n$ converts this excess RMS discrepancy to the AR numerator scale in \eqref{prop9:closed-form},
where the second equality uses $\|\widehat\gamma\|_K=\sqrt n\,\|\widehat\gamma\|_{W_n}$.

Equation \eqref{prop9:benchmark} expresses the closed form relative to the exact-exclusion AR statistic. It uses $AR(b;0)$ only as a benchmark and does not impose exact exclusion when $g>0$. Profiling subtracts the admissible RMS radius after converting it to the square-root AR scale. When $g=0$, $\Gamma_n(0,W_n)=\{0\}$ and
\[
    pAR\{b;\Gamma_n(0,W_n)\}=AR(b;0).
\]
For $g>0$, the profiled statistic is weakly smaller because an admissible direct effect can absorb part of the sample discrepancy. The adjustment $\sqrt n\,g/\widehat\sigma_n(b)$ converts the RMS radius first to the unnormalized $K$-metric scale and then to the square-root statistic scale.

The closed form simplifies computation and interpretation. For each candidate $b$, evaluation requires only $\widehat\gamma$, $\widehat\sigma_n^2(b)$, and $\|\widehat\gamma\|_{W_n}$; no root search is needed. The candidate is fully accommodated when the sample discrepancy is no larger than $g$, and the statistic penalizes only the excess.

The result also clarifies the KKT characterization. When $W_n=K/n$, the support is aligned with the AR metric and the projection is radial. The KKT multiplier then has a closed form in the positive-radius binding case. At $g=0$, the solution remains a direct projection onto the singleton rather than a finite-multiplier solution.

The spectral representation in Section \ref{subsec:spectral} reduces to the same formula. When $W_n=K/n$, $C=W_n^{1/2}$ and
$$
    \widetilde K=C^{-1}KC^{-1}=nI.
$$
All eigenvalues therefore equal $n$. For $g>0$ and $\|\widehat u\|_2>g$, the secular equation becomes
\[
    \left(
    \frac{n}{n+\lambda}
    \right)^2
    \|\widehat u\|_2^2
    =
    g^2,
\]
which gives
\[
    \lambda^\ast
    =
    n\left(
    \frac{\|\widehat u\|_2}{g}-1
    \right).
\]
The shrinkage path is radial:
\[
    u(\lambda^\ast)
    =
    \frac{g}{\|\widehat u\|_2}\widehat u.
\]
Since $\|\widehat u\|_2=\|\widehat\gamma\|_{W_n}$, the minimized numerator is
$$
    N_n(b;\gamma^\ast)
    =
    n\left(\|\widehat\gamma\|_{W_n}-g\right)_+^2
    =
    \left(\|\widehat\gamma\|_K-\sqrt n\,g\right)_+^2.
$$
At $g=0$, the radial-thresholding expression is evaluated directly; the multiplier formula is not used because it divides by $g$.

\subsection{Population versus sample-dependent admissible sets}
\label{sec:supports}

This subsection distinguishes the sample-dependent support in Section \ref{subsec:analytical} from a fixed population support. The sample-dependent support aligns its metric with the AR numerator and gives exact design-conditional inference for the null in its associated identified set. Although a population RMS restricted admissible support is a different object, the two supports are connected through the asymptotic argument under Assumptions \ref{ass:dist} and \ref{ass:asymptotic} below.

First consider the sample-dependent support $\Gamma_n(g,W_n(Z))$ and its associated identified set:
\begin{equation*}
    B_{I,n}\{P;g,W_n(Z)\}
    =
    \left\{
        b\in\mathbb R:
        \gamma^\dagger(P,b)'W_n(Z)\gamma^\dagger(P,b)\le g^2
    \right\}.
\end{equation*}
Both objects depend on the realized instrument matrix $Z$. Under Assumption \ref{ass:dist}, condition on a realization of $Z$ for which
$$
    \gamma^\dagger(P,b)\in\Gamma_n(g,W_n(Z)).
$$
Then, the same profiling and centrality argument gives
\begin{equation}
    \Pr_P\!\left\{
        pAR\{b;\Gamma_n(g,W_n(Z))\}
        >
        c_{1-\alpha,n}
        \mid Z
    \right\}
    \le\alpha.
    \label{eq:design_conditional_level}
\end{equation}
Thus, $W_n(Z)$ is compatible with exact conditional inference, but it changes the null being tested. The procedure concerns a sample-dependent object rather than a fixed population object. More generally, the same conditional argument applies to $\Gamma_Z=\Gamma(g_Z,W_Z)$ when $g_Z$ and $W_Z$ are fixed or measurable with respect to $Z$.


For a fixed observable data-generating process $P$, define the population admissible set
\begin{equation*}
    \Gamma_{\mathrm{pop}}(P;g)
    =
    \left\{
        \gamma\in\mathbb R^r:
        \gamma'W_{ZZ}(P)\gamma\le g^2
    \right\}.
\end{equation*}
Since
\[
    \gamma'W_{ZZ}(P)\gamma
    =
    E_P[(z_i'\gamma)^2],
\]
the restriction bounds the population RMS of the direct effect $z_i'\gamma$. Its associated identified set is
\[
    B_{I,\mathrm{pop}}(P;g)
    =
    \left\{
        b\in\mathbb R:
        \gamma^\dagger(P,b)'W_{ZZ}(P)
        \gamma^\dagger(P,b)\le g^2
    \right\}.
\]
Although $\Gamma_{\mathrm{pop}}(P;g)$ is nonrandom for a fixed $P$, it varies across data-generating laws. The statements below are therefore pointwise in a fixed law $P$ and a fixed candidate $b$.

Write
\[
    \gamma_b=\gamma^\dagger(P,b),
    \qquad
    W=W_{ZZ}(P),
\]
and define the event
\[
    A_n(b)
    =
    \left\{
        \gamma_b'W_n(Z)\gamma_b\le g^2
    \right\}.
\]
Call $b$ strictly feasible for the population RMS restriction when
\[
    \gamma_b'W\gamma_b<g^2.
\]
This condition is stronger than requiring $b$ to be an interior point of $B_{I,\mathrm{pop}}(P;g)$, because interiority need not imply strict feasibility when $\pi(P)=0_r$.

If
\[
    \|W_n(Z)-W\|_{\mathrm{op}}
    \overset{p}{\longrightarrow}0,
\]
then
\[
    \left|
       \gamma_b'W_n(Z)\gamma_b-\gamma_b'W\gamma_b
    \right|
    \le
    \|\gamma_b\|^2
    \|W_n(Z)-W\|_{\mathrm{op}}
    =
    o_p(1).
\]
Strict feasibility therefore implies
\[
    \Pr_P\{A_n(b)\}\longrightarrow1.
\]
Let
\[
    R_n(b)
    =
    \left\{
        pAR\{b;\Gamma_n(g,W_n(Z))\}
        >
        c_{1-\alpha,n}
    \right\}.
\]
Since \(A_n(b)\) is measurable with respect to \(Z\), the conditional
bound in \eqref{eq:design_conditional_level} gives
\[
\begin{aligned}
    \Pr_P\{R_n(b)\}
    &\le
    \alpha\Pr_P\{A_n(b)\}
    +
    \Pr_P\{A_n(b)^c\} \\
    &=
    \alpha
    +
    (1-\alpha)\Pr_P\{A_n(b)^c\}.
\end{aligned}
\]
Consequently,
\[
    \limsup_{n\to\infty}\Pr_P\{R_n(b)\}\le\alpha.
\]
Under Assumptions \ref{ass:dist} and \ref{ass:asymptotic},
\[
    \widehat\gamma_n(b)'W_n(Z)\widehat\gamma_n(b)
    \overset{p}{\longrightarrow}
    \gamma_b'W\gamma_b.
\]
Hence, at a strictly feasible point,
\[
    \Pr_P\left\{
        pAR\{b;\Gamma_n(g,W_n(Z))\}=0
    \right\}
    \longrightarrow1.
\]
The design-dependent support eventually contains the unrestricted sample coefficient, so its rejection probability converges to zero rather than merely remaining below $\alpha$.

Suppose instead that the population support restriction binds:
\[
    \gamma_b'W\gamma_b=g^2.
\]
For a nontrivial boundary point with \(g>0\),
\[
    A_n(b)
    =
    \left\{
        \gamma_b'\{W_n(Z)-W\}\gamma_b\le0
    \right\}.
\]
Convergence of $W_n(Z)$ controls the magnitude of this expression but not its sign. Consistency of $W_n(Z)$ and the containment argument therefore do not establish exact or asymptotic size control at a positive-radius population boundary. A separate joint boundary limit or an outer-support construction is required. When $g=0$, population membership implies $\gamma_b=0_r$, so $A_n(b)$ occurs for every sample.

One outer construction replaces the sample support by
\[
    \Gamma_n^+(g;Z)
    =
    \left\{
        \gamma:
        \gamma'W_n(Z)\gamma\le g^2+\kappa_n
    \right\},
\]
where \(\kappa_n>0\) is deterministic,
\[
    \kappa_n\to 0,
    \qquad
    \|W_n(Z)-W\|_{\mathrm{op}}=o_p(\kappa_n).
\]
Because \(\Gamma_{\mathrm{pop}}(P;g)\) is compact,
\[
    \Pr_P\left\{
        \Gamma_{\mathrm{pop}}(P;g)
        \subseteq
        \Gamma_n^+(g;Z)
    \right\}
    \to 1.
\]
The enlarged support retains finite-sample conditional size control for its own sample-dependent membership null. For every fixed $b\in B_{I,\mathrm{pop}}(P;g)$,
\[
    \limsup_{n\to\infty}
    \Pr_P\left\{
        pAR\{b;\Gamma_n^+(g;Z)\}
        >
        c_{1-\alpha,n}
    \right\}
    \le\alpha.
\]
Thus, it also establish a pointwise asymptotic coverage. 
\section{Monte Carlo evidence}
\label{sec:montecarlo}

This section validates the theoretical results provided in Section \ref{sec:asymptotics} using Monte Carlo simulations. First, it evaluates pointwise rejection probabilities at fixed members of the population identified set and compares pAR with oracle AR, naive exact-exclusion AR, and a Wald support union. Second, it studies fixed and local membership alternatives. Third, it inverts pAR over the complete real line and compares the resulting tail behavior with its exact finite-sample benchmark. Finally, it examines convergence of the complete random set and the numerical implementations developed in Section \ref{sec:estimation}.

\subsection{Simulation Design}

Consider the model
\begin{equation*}
    X=Z\pi_n+V,
    \qquad
    Y=X\beta_0+Z\gamma_0+\epsilon,
\end{equation*}
where $\beta_0=1$ and
\[
    \begin{pmatrix}\epsilon_i\\V_i\end{pmatrix}
    \overset{\mathrm{iid}}{\sim}
    N\!\left[
      \begin{pmatrix}0\\0\end{pmatrix},
      \begin{pmatrix}1&\rho\\ \rho&1\end{pmatrix}
    \right],
    \qquad \rho=0.5.
\]
For each pair $(n,r)$, the instrument matrix is constructed once and held fixed across replications. The construction satisfies
\[
    \frac{Z'Z}{n}=\Omega_r,
    \qquad
    (\Omega_r)_{jk}=0.5^{|j-k|}.
\]
Thus, the simulations evaluate the conditional finite-sample result under a fixed full-rank design rather than averaging over different instrument matrices.

The admissible set is the nonrandom ellipsoid
\[
    \Gammao(g,\Omega_r)
    =
    \{\gamma\in\mathbb R^r:\gamma'\Omega_r\gamma\le g^2\}.
\]
Let $d_\pi'\Omega_r d_\pi=1$. The main experiments use the aligned direction $d_\gamma=d_\pi$ and set
\[
    \gamma_0=\tau g d_\gamma.
\]
For $g=0.20$, the values $\tau=0$ and $\tau=0.5$ place $\beta_0$ in the interior of the identified set, $\tau=1$ places $\beta_0$ on its boundary, and $\tau>1$ gives a membership alternative. The exact-exclusion design sets $g=0$ and $\gamma_0=0$.

For the pointwise and weak-identification exercises, first-stage strength is indexed by
\[
    F^\ast
    =
    \frac{\pi_n'Z'Z\pi_n}{r\sigma_V^2},
    \qquad
    \pi_n=\sqrt{\frac{rF^\ast}{n}}\,d_\pi,
\]
where $\sigma_V^2=1$. We use $n\in\{100,500\}$, $r\in\{1,4\}$, and $F^\ast\in\{0,1,10\}$ in the pointwise exercise, with $n=2000$ added at the boundary. The complete inversion exercise also includes $F^\ast\in\{5,20\}$. The local-boundary and set-convergence exercises instead hold the first stage fixed at $\pi=0.4d_\pi$, as required by Propositions \ref{prop:hausdorff} and \ref{prop:boundary-local-power} .

The comparisons include the fixed-support pAR test, the infeasible oracle AR test evaluated at $\gamma_0$, and the naive AR test that imposes $\gamma=0$. For $r=1$, we also report the CHR-style union of conventional Wald intervals over $\gamma\in[-g,g]$. The AR-based procedures use $c_{1-\alpha,n}=rF^{-1}_{r,n-r}(1-\alpha)$, while the Wald union uses the conventional normal critical value. Standard rejection-probability cells use 20,000 replications, exact-size and boundary-calibration cells use 100,000 replications, and the inversion and Hausdorff exercises use 5,000 replications. Common random numbers are used within each design cell.

\subsection{Pointwise validity and conservativeness}

Table \ref{tab:mc-pointwise} reports rejection probabilities at $b=\beta_0$. At this candidate value,
\[
    Y-X\beta_0=Z\gamma_0+\epsilon,
\]
so the pAR, oracle AR, and naive AR rejection probabilities do not depend on $\pi_n$. Thus, their entries are reported once for each $(r,n,g,\tau)$ cell. The benchmark is $0.05$ under exact exclusion, zero at a strict interior point as $n$ increases, and
\[
    1-\Phi\!\left(\sqrt{q_{r,0.95}}\right)
\]
at a regular boundary.

\begin{table}[t]
\centering
\caption{Pointwise rejection probabilities at $b=\beta_0$}
\label{tab:mc-pointwise}
\small
\setlength{\tabcolsep}{4pt}
\begin{tabular}{rrlccccc}
\toprule
$n$ & $\tau$ & Support location & Benchmark & pAR & Oracle AR & Naive AR & $\Pr(\mathrm{pAR}=0)$ \\
\midrule
\multicolumn{8}{l}{\textit{1. $r=1$}}\\
\addlinespace[2pt]
100 & -- & Singleton & 0.0500 & 0.0493 & 0.0493 & 0.0493 & 0.0000 \\
500 & -- & Singleton & 0.0500 & 0.0509 & 0.0509 & 0.0509 & 0.0000 \\
100 & 0.0 & Center & 0.0000 & 0.0001 & 0.0510 & 0.0510 & 0.9545 \\
500 & 0.0 & Center & 0.0000 & 0.0000 & 0.0486 & 0.0486 & 1.0000 \\
100 & 0.5 & Interior & 0.0000 & 0.0014 & 0.0495 & 0.1682 & 0.8393 \\
500 & 0.5 & Interior & 0.0000 & 0.0000 & 0.0510 & 0.6077 & 0.9875 \\
100 & 1.0 & Boundary & 0.0250 & 0.0249 & 0.0495 & 0.5075 & 0.5007 \\
500 & 1.0 & Boundary & 0.0250 & 0.0248 & 0.0508 & 0.9937 & 0.4947 \\
2000 & 1.0 & Boundary & 0.0250 & 0.0255 & 0.0504 & 1.0000 & 0.5032 \\
\midrule
\multicolumn{8}{l}{\textit{2. $r=4$}}\\
\addlinespace[2pt]
100 & -- & Singleton & 0.0500 & 0.0492 & 0.0492 & 0.0492 & 0.0000 \\
500 & -- & Singleton & 0.0500 & 0.0497 & 0.0497 & 0.0497 & 0.0000 \\
100 & 0.0 & Center & 0.0000 & 0.0000 & 0.0500 & 0.0500 & 0.5999 \\
500 & 0.0 & Center & 0.0000 & 0.0000 & 0.0532 & 0.0532 & 0.9994 \\
100 & 0.5 & Interior & 0.0000 & 0.0001 & 0.0494 & 0.0997 & 0.4679 \\
500 & 0.5 & Interior & 0.0000 & 0.0001 & 0.0510 & 0.3905 & 0.9628 \\
100 & 1.0 & Boundary & 0.0010 & 0.0046 & 0.0500 & 0.3058 & 0.2182 \\
500 & 1.0 & Boundary & 0.0010 & 0.0022 & 0.0503 & 0.9631 & 0.3658 \\
2000 & 1.0 & Boundary & 0.0010 & 0.0018 & 0.0512 & 1.0000 & 0.4319 \\
\bottomrule
\end{tabular}
\vspace{2pt}

\begin{minipage}{0.96\textwidth}
\footnotesize
\textit{Notes:} The singleton rows use $g=0$. All other rows use $g=0.20$. The benchmark is exact for the singleton and asymptotic for the interior and boundary rows. Singleton and boundary cells use 100,000 replications. Strict-interior cells use 20,000 replications. The largest Monte Carlo standard error for a pAR rejection probability is 0.0016, and the largest standard error among all reported proportions is 0.0036. Entries are rounded to four decimal places.
\end{minipage}
\end{table}

The singleton results track the exact 5 percent benchmark. Across $r\in\{1,4\}$ and $n\in\{100,500\}$, the pAR rejection probability ranges from 0.0492 to 0.0509 and coincides with the oracle and naive AR procedures. Positive-radius profiling changes this comparison. At the center and at $\tau=0.5$, the pAR rejection probability is at most 0.0014 and is essentially zero by $n=500$. The probability that the profiled statistic equals zero also rises toward one. For example, at $\tau=0.5$ it rises from 0.8393 to 0.9875 when $r=1$ and from 0.4679 to 0.9628 when $r=4$. These results are consistent with Propositions \ref{prop:fixed-consistency} and \ref{prop:strict-conservativeness}.

The boundary results isolate the source of conservativeness. When $r=1$, the rejection probabilities are 0.0249, 0.0248, and 0.0255 for $n=100$, $500$, and $2000$, respectively. These values closely match the limiting probability of 0.025. When $r=4$, the rejection probability falls from 0.0046 to 0.0022 and then to 0.0018 as the sample size increases. The sequence moves toward the limiting probability of 0.0010, although the approximation remains incomplete at $n=2000$. This difference across $r$ is consistent with the boundary theory because only the outward normal component enters the first-order distance while the test retains an $r$-degree-of-freedom critical value.

The oracle AR rejection probability remains close to 0.05 in every cell. In contrast, the naive AR procedure rejects admissible direct effects because it imposes exact exclusion. At $\tau=0.5$ and $n=500$, its rejection probability is 0.6077 for $r=1$ and 0.3905 for $r=4$. At the boundary, the corresponding probabilities are 0.9937 and 0.9631. The Wald support union behaves differently from both procedures. Under exact exclusion with $r=1$ and $n=100$, its rejection probability rises from 0.0156 to 0.0439 as $F^\ast$ increases from zero to 10. The analogous range is 0.0145 to 0.0445 when $n=500$. Hence, the Wald union is highly conservative under a weak first stage and its behavior depends on first-stage strength, whereas the pAR rejection probability at $\beta_0$ is invariant to $F^\ast$.

\subsection{Fixed and local membership alternatives}

Table \ref{tab:mc-fixed} reports the pAR rejection probabilities. The fixed-alternative exercise keeps $b=\beta_0$ and sets $\tau\in\{1.10,1.25,1.50\}$. Since $\gamma_0\notin\Gammao$ in each case, these cells test power against the membership null rather than power against the structural equality alone. 

\begin{table}[t]
\centering
\caption{Rejection probabilities under fixed membership alternatives}
\label{tab:mc-fixed}
\small
\setlength{\tabcolsep}{9pt}
\begin{tabular}{ccccc}
\toprule
$r$ & $n$ & $\tau=1.10$ & $\tau=1.25$ & $\tau=1.50$ \\
\midrule
1 & 100 & 0.0401 & 0.0691 & 0.1669 \\
1 & 500 & 0.0628 & 0.1970 & 0.6089 \\
1 & 2000 & 0.1440 & 0.6081 & 0.9938 \\
\addlinespace[2pt]
4 & 100 & 0.0085 & 0.0149 & 0.0441 \\
4 & 500 & 0.0083 & 0.0411 & 0.2635 \\
4 & 2000 & 0.0208 & 0.2377 & 0.9357 \\
\bottomrule
\end{tabular}
\vspace{2pt}

\begin{minipage}{0.88\textwidth}
\footnotesize
\textit{Notes:} Entries are pAR rejection probabilities based on 20,000 replications. The design uses $g=0.20$, $\gamma_0=\tau g d_\pi$, and $b=\beta_0$. The largest Monte Carlo standard error is 0.0035.
\end{minipage}
\end{table}

Rejection increases with the distance from the support and, for the more separated alternatives, with the sample size. At $\tau=1.50$, the rejection probability rises from 0.1669 to 0.9938 when $r=1$ and from 0.0441 to 0.9357 when $r=4$ as $n$ increases from 100 to 2000. Alternatives close to the boundary remain difficult. At $\tau=1.10$ and $n=2000$, the rejection probabilities are 0.1440 and 0.0208 for $r=1$ and $r=4$, respectively. In these aligned designs, rejection is lower when $r=4$, which is consistent with the more conservative boundary behavior in Table \ref{tab:mc-pointwise}.

The oracle AR statistic is not a power benchmark for this exercise. It evaluates the true direct effect and therefore continues to test the structural equality $b=\beta_0$, which is correct in the data-generating process. Its rejection probabilities remain between 0.0485 and 0.0539. The naive AR rejection probabilities are much larger, but they arise from imposing the false restriction $\gamma_0=0$. The relevant evidence for Proposition \ref{prop:fixed-consistency} is the increase in pAR rejection as a fixed compatible direct effect moves farther outside the support.

The local exercise starts from the regular boundary point $\gamma_0=gd_\pi$ and the fixed first stage $\pi=pd_\pi$, where $p=0.4$. For
\[
    d\in\{0,0.5,1,1.5,2,3\},
    \qquad
    b_n(d)=\beta_0-\frac{d}{p\sqrt n},
\]
Proposition \ref{prop:boundary-local-power} gives the limiting rejection probability
\[
    1-\Phi\!\left(\sqrt{q_{r,0.95}}-d\right).
\]
Figure \ref{fig:mc-local} compares this limit with the simulated rejection probabilities.

\begin{figure}[t]
\centering
\caption{Local rejection probabilities at a regular boundary}
\includegraphics[width=0.84\textwidth]{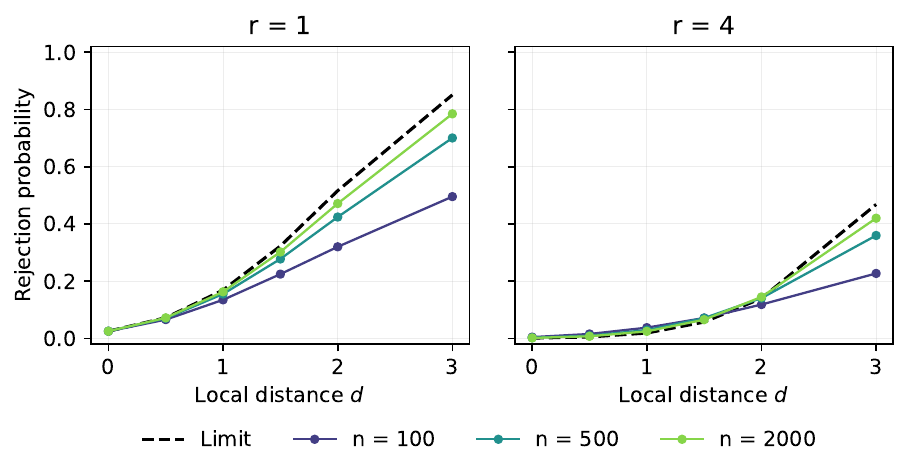}
\label{fig:mc-local}
\begin{minipage}{0.90\textwidth}
\footnotesize
\textit{Notes:} The candidate is $b_n(d)=\beta_0-d/(0.4\sqrt n)$. The dashed curve is $1-\Phi(\sqrt{q_{r,0.95}}-d)$. Each simulated point uses 100,000 replications. The largest Monte Carlo standard error is 0.0016.
\end{minipage}
\end{figure}

At $d=0$, the $r=1$ rejection probabilities range from 0.0248 to 0.0253 and reproduce the boundary benchmark. For $r=4$, the rejection probability falls from 0.0044 at $n=100$ to 0.0016 at $n=2000$, compared with a limit of 0.0010. Moving outward increases rejection in every sample. The discrepancy from the limiting curve generally narrows with $n$. At $d=3$, for example, the $r=1$ rejection probability rises from 0.4954 to 0.7847 as the limit remains 0.8508. The corresponding $r=4$ values rise from 0.2270 to 0.4200, compared with a limit of 0.4680. The figure therefore supports the direction and shape of the local-power result, while also showing that the approximation can remain conservative at distant local alternatives in moderate samples.

\subsection{Inversion under weak identification}

The inversion exercise sets $\gamma_0=0$, $g=0.20$, and $\pi_n=\sqrt{rF^\ast/n}d_\pi$. The pAR test is inverted over the complete real line. The algorithm retains all real roots of the quartic boundary equation, checks the original unsquared equality, and uses the statistic $L_n$ in Proposition \ref{prop:topology-boundedness} to classify the tails.

Corollary \ref{cor:tail-first-stage} gives an exact benchmark. If $F_{\mathrm{ncf}}(\cdot;\nu_1,\nu_2,\lambda)$ denotes the noncentral $F$-distribution function, then
\begin{equation}
\label{eq:mc-tail-benchmark}
    \Pr\!\left\{
       \cset(\Gammao)\text{ contains both tails}
       \mid Z
    \right\}
    =
    F_{\mathrm{ncf}}\!\left(
      \frac{c_{1-\alpha,n}}{r};
      r,n-r,rF^\ast
    \right).
\end{equation}
Figure \ref{fig:mc-tail} plots the empirical probability and this exact value, and Table \ref{tab:mc-inversion} reports the associated set topology. The reported acceptance probability concerns the fixed candidate $\beta_0$.

\begin{figure}[t]
\centering
\caption{Probability that the inverted pAR set contains both tails}
\includegraphics[width=0.84\textwidth]{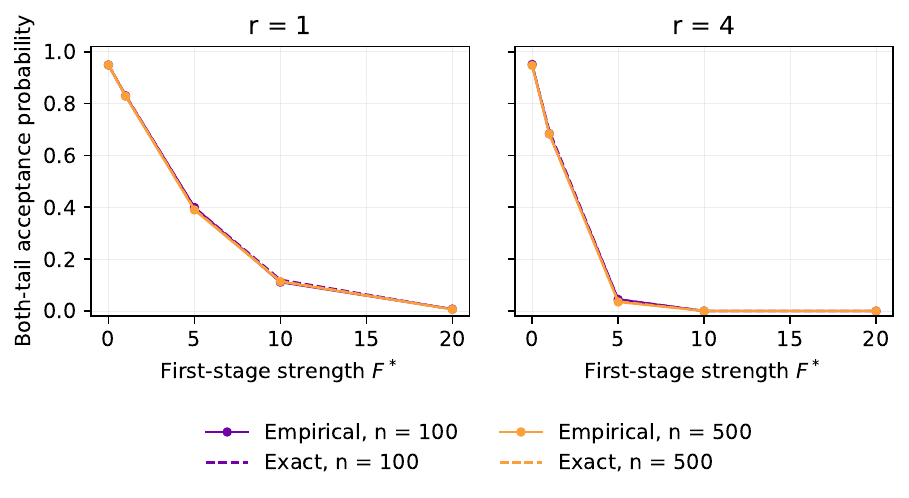}
\label{fig:mc-tail}
\begin{minipage}{0.90\textwidth}
\footnotesize
\textit{Notes:} The empirical probabilities use 5,000 replications. The exact curves use \eqref{eq:mc-tail-benchmark}. Monte Carlo standard errors are at most 0.0071.
\end{minipage}
\end{figure}

\begin{table}[t]
\centering
\caption{Topology of the inverted pAR set}
\label{tab:mc-inversion}
\small
\setlength{\tabcolsep}{6pt}
\begin{tabular}{rccccc}
\toprule
$F^\ast$ & Exact tail & Both tails & Nonempty compact & Disconnected & Whole line \\
\midrule
\multicolumn{6}{l}{\textit{1. $r=1$ and $n=100$}}\\
\addlinespace[2pt]
0 & 0.9500 & 0.9490 & 0.0510 & 0.0002 & 0.9488 \\
1 & 0.8323 & 0.8300 & 0.1700 & 0.0014 & 0.8286 \\
5 & 0.3996 & 0.3994 & 0.6006 & 0.0086 & 0.3908 \\
10 & 0.1207 & 0.1114 & 0.8886 & 0.0072 & 0.1042 \\
20 & 0.0068 & 0.0074 & 0.9926 & 0.0018 & 0.0056 \\
\midrule
\multicolumn{6}{l}{\textit{2. $r=1$ and $n=500$}}\\
\addlinespace[2pt]
0 & 0.9500 & 0.9488 & 0.0512 & 0.0000 & 0.9488 \\
1 & 0.8304 & 0.8280 & 0.1720 & 0.0000 & 0.8280 \\
5 & 0.3929 & 0.3898 & 0.6102 & 0.0000 & 0.3898 \\
10 & 0.1158 & 0.1136 & 0.8864 & 0.0000 & 0.1136 \\
20 & 0.0061 & 0.0062 & 0.9938 & 0.0000 & 0.0062 \\
\midrule
\multicolumn{6}{l}{\textit{3. $r=4$ and $n=100$}}\\
\addlinespace[2pt]
0 & 0.9500 & 0.9508 & 0.0492 & 0.0010 & 0.9498 \\
1 & 0.6945 & 0.6838 & 0.3162 & 0.0034 & 0.6804 \\
5 & 0.0450 & 0.0446 & 0.9554 & 0.0106 & 0.0340 \\
10 & 0.0003 & 0.0000 & 1.0000 & 0.0000 & 0.0000 \\
20 & 0.0000 & 0.0000 & 1.0000 & 0.0000 & 0.0000 \\
\midrule
\multicolumn{6}{l}{\textit{4. $r=4$ and $n=500$}}\\
\addlinespace[2pt]
0 & 0.9500 & 0.9468 & 0.0532 & 0.0000 & 0.9468 \\
1 & 0.6828 & 0.6828 & 0.3172 & 0.0000 & 0.6828 \\
5 & 0.0376 & 0.0350 & 0.9650 & 0.0000 & 0.0350 \\
10 & 0.0002 & 0.0002 & 0.9998 & 0.0000 & 0.0002 \\
20 & 0.0000 & 0.0000 & 1.0000 & 0.0000 & 0.0000 \\
\bottomrule
\end{tabular}
\vspace{2pt}

\begin{minipage}{0.96\textwidth}
\footnotesize
\textit{Notes:} Each cell uses 5,000 replications. ``Exact tail'' is the probability in \eqref{eq:mc-tail-benchmark}. Monte Carlo standard errors are at most 0.0071. The pointwise inclusion probability for $\beta_0$ is 0.9998 in the $r=1$, $n=100$ cells and 1.0000 in the remaining cells. No empty set occurs in these simulations. ``Whole line'' is a subset of ``Both tails.''
\end{minipage}
\end{table}

Figure \ref{fig:mc-tail} demonstrates that the empirical tail probabilities closely track the exact benchmark. The largest absolute discrepancy across the 20 cells is 0.0107. Under exact nonidentification, $F^\ast=0$, the empirical probability lies between 0.9468 and 0.9508, compared with the exact value of 0.95. The probability then declines with first-stage strength. When $r=1$, it is about 0.39 at $F^\ast=5$ and about 0.006 at $F^\ast=20$. When $r=4$, it is about 0.04 at $F^\ast=5$ and is essentially zero by $F^\ast=10$. 

The topology results reported in Table \ref{tab:mc-inversion} add information that cannot be obtained from pointwise rejection alone. At $F^\ast=0$, almost every set that contains both tails is the entire real line. As $F^\ast$ increases, the probability of a nonempty compact set rises toward one. Disconnected sets occur in finite samples, with a largest observed probability of 0.0106, and no empty set occurs. These findings support the tail classification in Proposition \ref{prop:topology-boundedness}.  

\subsection{Set convergence and computation}

The set-convergence exercise fixes $\pi=0.4d_\pi$, $\gamma_0=0$, and $g=0.20$. In the aligned design,
\[
    \BI(P;\Gammao)=[0.5,1.5].
\]
Table \ref{tab:mc-hausdorff} reports the probability that the pAR set is nonempty and compact and, conditional on this event, quantiles of
\[
    \sqrt n\,d_H\!\left\{\cset(\Gammao),\BI(P;\Gammao)\right\}.
\]

\begin{table}[t]
\centering
\caption{Convergence of the complete pAR set}
\label{tab:mc-hausdorff}
\small
\setlength{\tabcolsep}{7pt}
\begin{tabular}{rrccccc}
\toprule
$r$ & $n$ & Nonempty compact & Unbounded & Median & 75th percentile & 90th percentile \\
\midrule
1 & 100 & 0.9768 & 0.0232 & 11.724 & 21.537 & 40.002 \\
1 & 400 & 1.0000 & 0.0000 & 8.526 & 12.172 & 16.476 \\
1 & 1600 & 1.0000 & 0.0000 & 7.757 & 10.066 & 12.801 \\
\addlinespace[2pt]
4 & 100 & 0.8940 & 0.1060 & 20.693 & 40.853 & 91.828 \\
4 & 400 & 1.0000 & 0.0000 & 13.109 & 18.091 & 23.936 \\
4 & 1600 & 1.0000 & 0.0000 & 11.444 & 14.458 & 17.481 \\
\bottomrule
\end{tabular}
\vspace{2pt}

\begin{minipage}{0.96\textwidth}
\footnotesize
\textit{Notes:} Each row uses 5,000 replications. Monte Carlo standard errors for the two reported probabilities are at most 0.0044. The last three columns summarize $\sqrt n\,d_H$ conditional on a nonempty compact pAR set.
\end{minipage}
\end{table}

At $n=100$, the pAR set is unbounded in 0.0232 of the $r=1$ replications and 0.1060 of the $r=4$ replications. No unbounded set occurs among the 5,000 replications at $n=400$ or $n=1600$ in either design. The scaled Hausdorff quantiles also become smaller. For $r=1$, the median falls from 11.72 to 7.76 and the 90th percentile falls from 40.00 to 12.80. For $r=4$, the corresponding changes are from 20.69 to 11.44 and from 91.83 to 17.48. The results are consistent with the compactness and $O_p(n^{-1/2})$ conclusions in Proposition \ref{prop:hausdorff}. 

The numerical comparisons in Table \ref{tab:mc-computation} examine the singleton, interior, and binding branches under both generic and aligned metrics. Across 360 test problems, the maximum absolute difference between the KKT and spectral statistics is $1.96\times10^{-11}$. In the aligned cases, the maximum difference between the spectral and closed-form statistics is $1.91\times10^{-11}$. The largest difference between the spectral result and an independent generic solver is $1.53\times10^{-9}$, and the largest support-constraint violation is $9.19\times10^{-12}$. No algorithm failure occurs in these cases.

\begin{table}[t]
\centering
\caption{Numerical agreement and a matched timing comparison}
\label{tab:mc-computation}
\small
\setlength{\tabcolsep}{8pt}
\begin{tabular}{lrr}
\toprule
Diagnostic or method & Reported value & Scope \\
\midrule
\multicolumn{3}{l}{\textit{1. Numerical agreement}}\\
Maximum $|\mathrm{KKT}-\mathrm{spectral}|$ & $1.96\times10^{-11}$ & 360 cases \\
Maximum $|\mathrm{spectral}-\mathrm{closed}|$ & $1.91\times10^{-11}$ & 180 aligned cases \\
Maximum $|\mathrm{spectral}-\mathrm{generic}|$ & $1.53\times10^{-9}$ & 360 cases \\
Maximum constraint violation & $9.19\times10^{-12}$ & 360 cases \\
Algorithm failure rate & $0$ & all attempted methods \\
\midrule
\multicolumn{3}{l}{\textit{2. Total time for $r=100$ and 5,001 candidates}}\\
KKT, scalar & $5.9769$ seconds & $1195.15$ $\mu$s per candidate \\
Spectral, scalar & $0.5399$ seconds & $107.96$ $\mu$s per candidate \\
Closed form, scalar & $0.0232$ seconds & $4.64$ $\mu$s per candidate \\
Closed form, vectorized & $0.0030$ seconds & $0.61$ $\mu$s per candidate \\
\bottomrule
\end{tabular}
\vspace{2pt}

\begin{minipage}{0.94\textwidth}
\footnotesize
\textit{Notes:} Timing entries use the aligned case $K=nW$, 30 matched repetitions, and one BLAS thread. The reported times include the norm calculations used by each implementation. Absolute timings are machine dependent.
\end{minipage}
\end{table}

The timing comparison in Table \ref{tab:mc-computation} illustrates the value of separating preprocessing from candidate-specific evaluation. In the largest displayed aligned design, the spectral implementation reduces total time from 5.98 seconds to 0.54 seconds relative to the direct KKT implementation. The scalar closed form requires 0.023 seconds, and vectorizing the closed form reduces the reported time to 0.003 seconds. For a generic metric with $r=5$, the spectral calculation also reduces the reported time per candidate from about 154 microseconds to 71 microseconds when 1,001 candidates are evaluated. Note that these comparisons are specific to the implementation and machine. However, they support the computational advantage of the spectral and closed-form representations for repeated inversion.

Taken together, the simulations are consistent with the four theoretical implications within the maintained design: (i) the singleton experiment tracks the exact size benchmark, (ii) positive-radius profiling is conservative, (iii) local rejection probabilities move toward the boundary limit, and (iv) complete inversion follows the exact tail characterization. The numerical checks also indicate that the KKT, spectral, and closed-form implementations evaluate the same profiled statistic. 

\section{Empirical illustrations}
\label{sec:empirical}

This section analyzes two empirical examples to illustrate the proposed methods in practice. The first revisits the CHR analysis of 401(k) participation and net financial assets. Eligibility is a strong single instrument in that application, and the admissible direct effect has a fixed dollar scale. The second revisits the quarter-of-birth design for returns to schooling (see, \citet{angrist1991does, bound1995problems, bound1996validity}). The quarter-of-birth instruments explain little of the residualized variation in schooling, and the 30-instrument specification provides a useful weak-identification comparison. Thus, the two applications separate sensitivity to the exclusion restriction from sensitivity to the inference method used as a component.

\subsection{Specifications and inferential interpretation}
\label{sec:empirical_implementation}

All specifications include an intercept and the application-specific control variables described below. Let $C$ collect these controls, let $M_C=I-C(C'C)^{-1}C'$, and let $K_C=Z'M_CZ$. The calculations use the control-adjusted statistic developed in Section \ref{sec:uniform}, with denominator degrees of freedom $n-k-r$ and critical value $rF^{-1}_{r,n-k-r}(1-\alpha)$. 

The two applications use different admissible supports. In the 401(k) application, the instrument is binary eligibility and the outcome is measured in thousands of dollars. We use
\[
    \Gamma_{401}(g)=\{\gamma\in\mathbb R:|\gamma|\le g\},
\]
where $g$ is measured in thousands of dollars. This is a fixed coefficient support. In the quarter-of-birth application, a coefficientwise bound would depend on the coding and dimension of the instrument vector. We instead use
\[
    W_{n,C}=\frac{K_C}{n},
    \qquad
    \Gamma_{AK,n}(g)
    =
    \{\gamma:\gamma'W_{n,C}\gamma\le g^2\}.
\]

For comparison, we form the continuous union of Wald intervals over the same support. The exact-exclusion 2SLS standard error is held fixed as $\gamma$ varies. This Wald union with fixed standard error isolates the change from a Wald component to an pAR component. Note that every reported value of $g$ defines a separate pointwise sensitivity analysis, and the curves are not simultaneous confidence bands over $g$. 

Table~\ref{tab:empirical_diagnostics} reports the baseline diagnostics. The 401(k) values are in thousands of dollars. The quarter-of-birth coefficient estimates and the endpoints of exact-exclusion pAR set are multiplied by 100 and are reported in log points. Table~\ref{tab:empirical_sensitivity} reports the intervals and endpoints used in the analysis below. The 401(k) panel reports coefficients in thousands of dollars. The AK panels report $100\beta$ in log points.

\begin{table}[t]
\centering
\caption{Baseline identification diagnostics}
\label{tab:empirical_diagnostics}
\small
\resizebox{\textwidth}{!}{%
\begin{tabular}{lrrrrrrrrc}
\toprule
Specification
& $n$ & $r$ & Partial $R^2$
& First-stage $F$ & Reduced-form $F$
& OLS (SE) & 2SLS (SE) & LIML (SE)
& Exact-exclusion pAR set \\
\midrule
CHR 401(k)
& 9,915 & 1 & 0.557826
& 12,484.32 & 51.63
& 14.57 (1.37) & 13.22 (1.83) & -- 
& $[9.62,16.82]$ \\
AK-3
& 329,509 & 3 & 0.000294
& 32.27 & 9.32
& 7.11 (0.03) & 10.53 (2.01) & --
& $[6.34,15.31]$ \\
AK-30
& 329,509 & 30 & 0.000447
& 4.91 & 1.66
& 7.11 (0.03) & 8.91 (1.61) & 9.29 (1.95)
& $[1.41,17.94]$ \\
\bottomrule
\end{tabular}%
}
\begin{minipage}{0.96\textwidth}
\footnotesize
\textit{Notes:} Partial $R^2$ and the $F$ statistics are calculated after removing the listed controls. Parentheses contain classical homoskedastic standard errors. The LIML standard error is the reported conditional approximation. The exact-exclusion pAR sets invert tests at the 5 percent level. CHR entries are in thousands of dollars. AK point estimates, standard errors, and set endpoints equal $100\beta$ and are in log points.
\end{minipage}
\end{table}

\subsection{401(k) participation and net financial assets}
\label{sec:empirical_401k}
The 401(k) sample analyzed by CHR contains 9,915 observations. The outcome is net financial assets divided by 1,000, the endogenous regressor is 401(k) participation, and the instrument is 401(k) eligibility. The control variables include income category indicators, age and age squared, family size, education indicators, marital status, two earner status, defined benefit pension status, IRA participation, and homeownership. In particular, eligibility has a strong conditional first stage. The partial $R^2$ is 0.558, and the first-stage $F$ statistic is 12,484. 

\begin{table}[t]
\centering
\caption{pAR and Wald sensitivity sets}
\label{tab:empirical_sensitivity}
\small
\begin{tabular}{llcc}
\toprule
Specification & Support radius $g$ & 95\% pAR set & Wald set \\
\midrule
CHR 401(k) & 0     & $[9.62,16.82]$  & $[9.63,16.82]$ \\
           & 1     & $[8.19,18.25]$  & $[8.19,18.25]$ \\
           & 2.5   & $[6.04,20.41]$  & $[6.04,20.40]$ \\
           & 5     & $[2.45,23.99]$  & $[2.45,23.99]$ \\
           & 7.5   & $[-1.15,27.58]$ & $[-1.13,27.58]$ \\
           & 10    & $[-4.74,31.17]$ & $[-4.72,31.16]$ \\
\addlinespace
AK-3       & 0     & $[6.34,15.31]$   & $[6.59,14.46]$ \\
           & 0.005 & $[-4.36,26.93]$  & $[-2.32,23.37]$ \\
           & 0.010 & $[-15.51,38.62]$ & $[-11.22,32.27]$ \\
           & 0.020 & $[-39.24,62.77]$ & $[-29.03,50.08]$ \\
           & 0.030 & $[-63.66,87.31]$ & $[-46.84,67.89]$ \\
\addlinespace
AK-30      & 0     & $[1.41,17.94]$    & $[5.75,12.07]$ \\
           & 0.005 & $[-11.38,31.95]$  & $[-1.47,19.29]$ \\
           & 0.010 & $[-25.09,46.39]$  & $[-8.69,26.51]$ \\
           & 0.020 & $[-54.89,76.73]$  & $[-23.13,40.95]$ \\
           & 0.030 & $[-85.86,107.86]$ & $[-37.57,55.39]$ \\
\bottomrule
\end{tabular}
\begin{minipage}{0.92\textwidth}
\footnotesize
\textit{Notes:} Each row is a separate pointwise 5 percent sensitivity analysis. The Wald comparison holds the exact-exclusion 2SLS SE fixed while taking the continuous support union.
\end{minipage}
\end{table}

Table \ref{tab:empirical_sensitivity} shows that under exact exclusion, the 95 percent pAR set is $[9.62,16.82]$ thousand dollars. The corresponding Wald union is $[9.63,16.82]$. Figure \ref{fig:empirical_401k} demonstrates that the two boundaries remain nearly indistinguishable as the symmetric support expands. At $g=5$, the pAR set is $[2.45,23.99]$. At $g=10$, it is $[-4.74,31.17]$. The same-support Wald endpoints differ by less than 0.02 thousand dollars throughout the reported grid. In this design, replacing the Wald component interval with the pAR component changes little because the first stage is strong.

\begin{figure}[t]
\centering
\caption{Support sensitivity in the 401(k) application}
\includegraphics[width=0.7\textwidth]{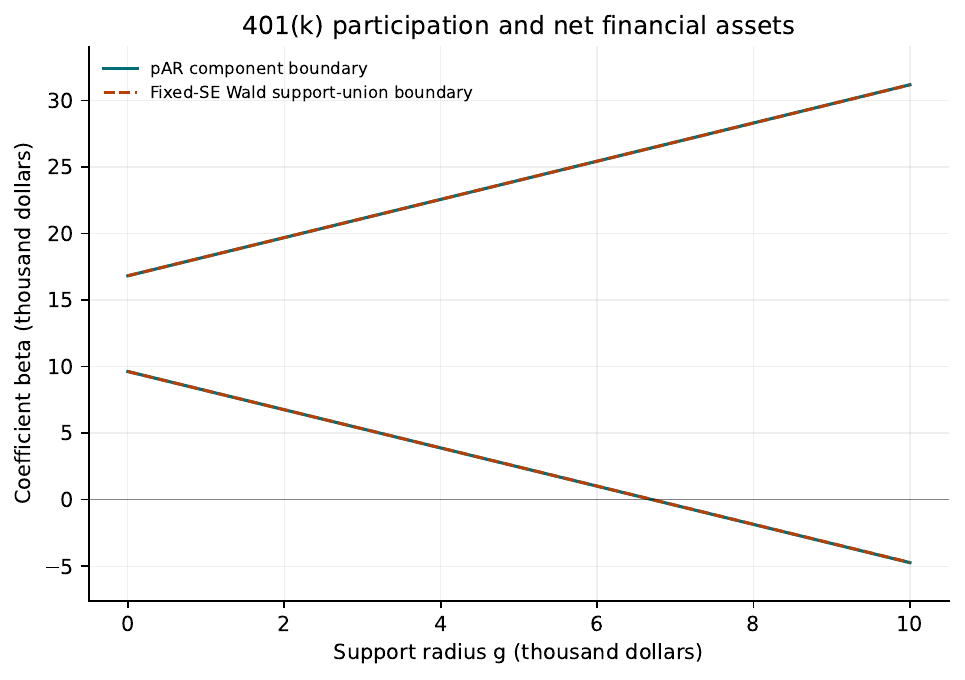}
\label{fig:empirical_401k}
\begin{minipage}{0.92\textwidth}
\footnotesize
\textit{Notes:} The solid curves trace the endpoints of the 95 percent pAR set under $\Gamma_{401}(g)$. The dashed curves trace the continuous Wald support union with fixed SE over the same support.
\end{minipage}
\end{figure}

The zero coefficient remains outside the pAR set at $g=5$, where its p-value is 0.0010, but enters by $g=7.5$, where its p-value is 0.181. The minimum symmetric support radius that retains zero is 6.70 thousand dollars. This value can be interpreted as a sensitivity threshold. It states how large the maintained bound must be before the sample no longer rejects zero at the 5 percent level. 


\subsection{Quarter of birth and returns to schooling}
\label{sec:empirical_qob}

The quarter-of-birth sample analyzed by \cite{angrist1991does} contains 329,509 men born from 1930 through 1939 in the 1980 Census extract. The outcome is log weekly earnings, the endogenous regressor is years of schooling, and the baseline controls are an intercept and nine year-of-birth indicators. The AK-3 specification uses indicators for the first three quarters of birth, with the fourth quarter omitted. The AK-30 specification interacts the three quarter indicators with the ten birth-year indicators. The coefficient is a historical linear return-to-schooling parameter for this sample and specification. 

The two specifications differ sharply in their identification diagnostics provided by Table \ref{tab:empirical_diagnostics}. For AK-3, the first-stage $F$ statistic is 32.27 and the 2SLS estimate is 10.53 log points. The exact-exclusion pAR set is $[6.34,15.31]$, compared with the Wald set $[6.59,14.46]$. For AK-30, the first-stage $F$ statistic falls to 4.91. The 2SLS estimate is 8.91, while the exact-exclusion pAR set widens to $[1.41,17.94]$. The corresponding Wald set is $[5.75,12.07]$. Thus, the pAR and Wald comparisons are modestly different in AK-3 but substantially different in AK-30. The latter difference is consistent with the weak first stage in the 30-instrument specification.

Figure \ref{fig:empirical_qob} traces the sensitivity boundaries using the intervals provided in Table \ref{tab:empirical_sensitivity}. At $g=0.005$, the AK-3 pAR set is $[-4.36,26.93]$ and the AK-30 set is $[-11.38,31.95]$. The respective Wald sets with fixed standard error are $[-2.32,23.37]$ and $[-1.47,19.29]$. The difference between the two procedures increases with $g$. This widening gap is consistent with the pAR inversion retaining weak-identification uncertainty at every admissible direct effect. Every baseline set remains a single bounded interval over the displayed grid.

\begin{figure}[t]
\centering
\caption{Support sensitivity in the quarter-of-birth application}
\includegraphics[width=\textwidth]{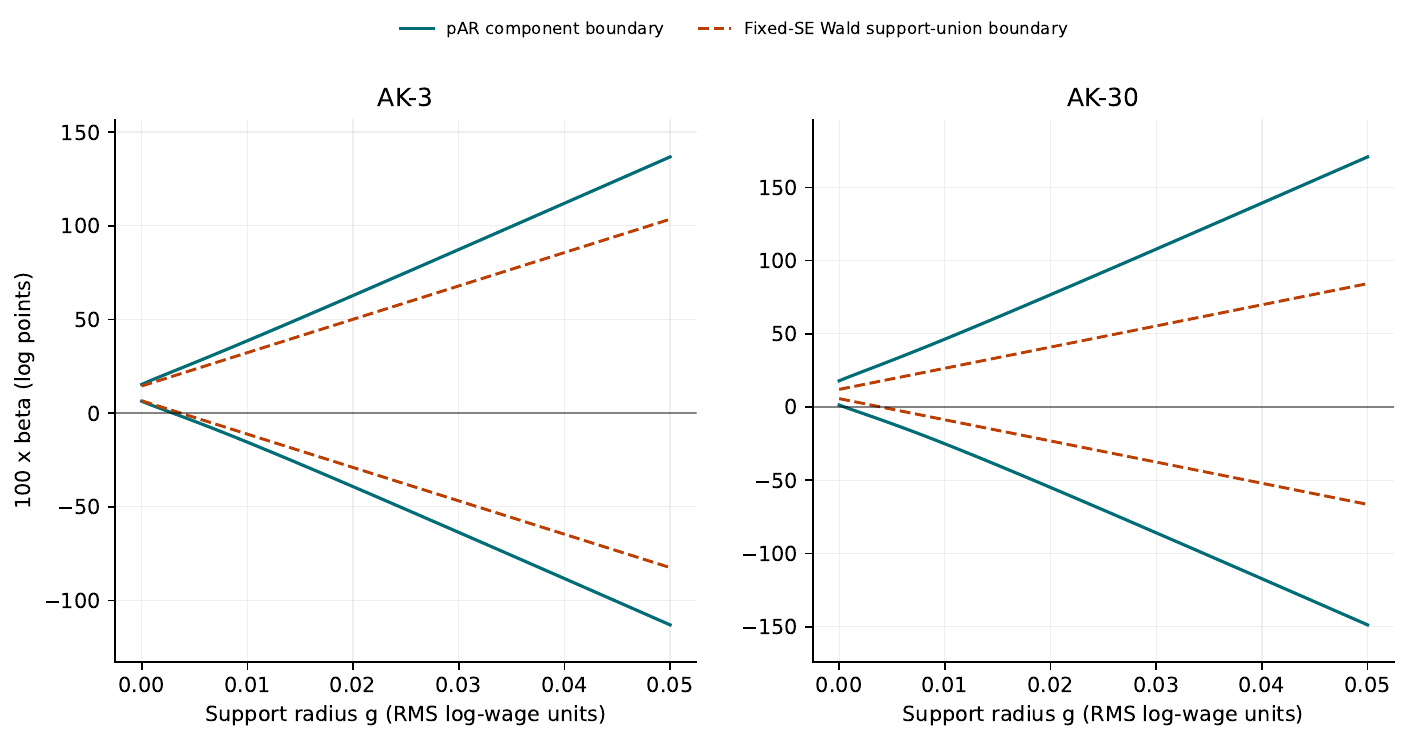}
\label{fig:empirical_qob}
\begin{minipage}{0.94\textwidth}
\footnotesize
\textit{Notes:} The AK support is $\Gamma_{AK,n}(g)=\{\gamma:\gamma'(K_C/n)\gamma\le g^2\}$. Hence, $g$ is a design-conditional sample RMS residualized exclusion departure in log weekly wages. The vertical axis reports $100\beta$ in log points. Solid curves are pAR boundaries, and dashed curves are Wald support union boundaries with fixed SE.
\end{minipage}
\end{figure}

\section{Conclusion}
\label{sec:conclusion}

This paper relates CHR's support restriction under the violation of exclusion restriction to the Anderson--Rubin test. For each candidate treatment effect, pAR profiles over the admissible direct effects of instruments and retains the candidate whenever at least one component test accepts it. Under conditional Gaussian model, the rejection probability does not exceed the nominal level for each fixed member of the identified set, regardless of first-stage strength. If a designated structural coefficient has an admissible direct effect, inversion covers that coefficient. These guarantees are pointwise rather than simultaneous.

The interpretation depends on the exclusion support. A fixed support yields a population membership statement. A support normalized by the realized instrument and control design yields an exact design-conditional statement and has a population root-mean-square interpretation asymptotically. The control-adjusted formulation accommodates standard exogenous covariates. For computation, the KKT and spectral representations reduce the generic binding problem to a scalar search, while sample normalization yields a closed form. These calculations make repeated inversion feasible without a finite grid.

The Monte Carlo results are consistent with the theory within the maintained Gaussian fixed-design setting. Exact-exclusion designs track the nominal benchmark, positive-radius profiling is conservative for compatible candidates, and rejection rises against fixed membership alternatives. Complete inversion follows the predicted tail behavior, and the alternative numerical implementations agree to tolerance. The empirical illustrations clarify when the component test matters. In the strong retirement-saving design based on the 401(k) example, pAR and the Wald union produce nearly the same sensitivity curve. In the weaker quarter-of-birth design, especially under the richer instrument specification, pAR retains substantially more uncertainty even before the exclusion support is enlarged.

The resulting confidence sets remain conditional on the maintained exclusion support. It would be of interest to validate the exclusion restriction, or determine which support is substantively appropriate. Extending the method to heteroskedastic or clustered settings and profiling more powerful weak-instrument tests are natural next steps. Each extension would require a separate validity argument rather than a mechanical substitution of the component statistic.

\newpage
\newpage
\section{Appendix}
\subsection{Algorithms}

\Needspace{0.78\textheight}
\noindent
\begin{minipage}{\linewidth}
\captionsetup{type=algorithm}
\caption{KKT algorithm for computing \(pAR(b;\Gamma)\)}
\label{alg:kkt_pAR_nobreak}

\small
\begin{algorithmic}[1]
\Require Candidate value \(b\), data \((Y,X,Z)\), matrices \(K=Z'Z\succ0\), \(W\succ0\), radius \(g\ge0\)
\Ensure Optimizer \(\gamma^\ast\) and profiled statistic \(pAR(b;\Gamma)\)

\State Compute $\widehat\gamma \gets \widehat\gamma(b) = K^{-1}Z'(Y-Xb)$

\State Compute
\[
    \widehat\sigma_n^2(b)
    \gets
    \frac{(Y-Xb)'M_Z(Y-Xb)}{n-r},
    \qquad
    M_Z=I-ZK^{-1}Z'
\]

\If{\(\widehat\sigma_n^2(b)\le0\)}
    \State \Return undefined statistic
\EndIf

\If{\(g=0\)}
    \State Set \(\gamma^\ast\gets0\)
    \State Set
    \[
        pAR(b;\Gamma(0,W))
        \gets
        \frac{\widehat\gamma'K\widehat\gamma}{\widehat\sigma_n^2(b)}
        =
        AR(b;0)
    \]
    \State \Return \(\gamma^\ast,\ pAR(b;\Gamma(0,W))\)
\EndIf

\If{\(\widehat\gamma'W\widehat\gamma\le g^2\)}
    \State Set \(\gamma^\ast\gets\widehat\gamma\)
    \State Set \(pAR(b;\Gamma)\gets0\)
    \State \Return \(\gamma^\ast,\ pAR(b;\Gamma)\)
\EndIf

\State Find the unique \(\lambda^\ast>0\) solving
\[
    \phi(\lambda)
    =
    \gamma(\lambda)'W\gamma(\lambda)-g^2
    =
    0,
    \qquad
    \gamma(\lambda)
    =
    (K+\lambda W)^{-1}K\widehat\gamma
\]

\State Set
\[
    \gamma^\ast
    \gets
    \gamma(\lambda^\ast)
    =
    (K+\lambda^\ast W)^{-1}K\widehat\gamma
\]

\State Set
\[
    pAR(b;\Gamma)
    \gets
    \frac{
    (\widehat\gamma-\gamma^\ast)'K(\widehat\gamma-\gamma^\ast)
    }
    {\widehat\sigma_n^2(b)}
\]

\State \Return \(\gamma^\ast,\ pAR(b;\Gamma)\)

\end{algorithmic}
\end{minipage}

\Needspace{0.85\textheight}
\noindent
\begin{minipage}{\linewidth}
\captionsetup{type=algorithm}
\caption{Spectral algorithm for computing \(pAR(b;\Gamma)\)}
\label{alg:spectral_par}

\small
\begin{algorithmic}[1]
\Require Candidate value \(b\), data \((Y,X,Z)\), matrices \(K=Z'Z\succ0\), \(W\succ0\), radius \(g\ge0\)
\Ensure Optimizer \(\gamma^\ast\) and profiled statistic \(pAR(b;\Gamma)\)

\State Let \(C=W^{1/2}\)
\State Precompute \(\widetilde K \gets C^{-1}KC^{-1}\)
\State Compute \(\widetilde K=Q\Lambda Q'\), where
\(\Lambda=\operatorname{diag}(\Lambda_1,\ldots,\Lambda_r)\)

\State Compute $\widehat\gamma \gets K^{-1}Z'(Y-Xb)$

\State Compute
\[
    \widehat\sigma_n^2(b)
    \gets
    \frac{(Y-Xb)'M_Z(Y-Xb)}{n-r},
    \qquad
    M_Z=I-ZK^{-1}Z'
\]

\If{\(\widehat\sigma_n^2(b)\le0\)}
    \State \Return undefined statistic
\EndIf

\State Compute \(\widehat u\gets C\widehat\gamma\) and \(\tilde u\gets Q'\widehat u\)

\If{\(g=0\)}
    \State Set \(\gamma^\ast\gets0\)
    \State Set
    \[
        pAR(b;\Gamma(0,W))
        \gets
        \frac{\widehat\gamma'K\widehat\gamma}{\widehat\sigma_n^2(b)}
        =
        AR(b;0)
    \]
    \State \Return \(\gamma^\ast,\ pAR(b;\Gamma(0,W))\)

\ElsIf{\(\|\widehat u\|_2\le g\)}
    \State Set \(\gamma^\ast\gets\widehat\gamma\)
    \State Set \(pAR(b;\Gamma)\gets0\)
    \State \Return \(\gamma^\ast,\ pAR(b;\Gamma)\)

\Else
    \State Find the unique \(\lambda^\ast>0\) solving
    \[
        \phi(\lambda)
        =
        \sum_{j=1}^r
        \left(
            \frac{\Lambda_j}{\Lambda_j+\lambda}
        \right)^2
        \tilde u_j^2
        -
        g^2
        =
        0
    \]

    \State Set
    \[
        \gamma^\ast
        \gets
        C^{-1}Q
        \operatorname{diag}
        \left(
            \frac{\Lambda_j}{\Lambda_j+\lambda^\ast}
        \right)
        Q'\widehat u
    \]

    \State Compute
    \[
        N^\ast
        \gets
        \sum_{j=1}^r
        \Lambda_j
        \left(
            \frac{\lambda^\ast}{\Lambda_j+\lambda^\ast}
        \right)^2
        \tilde u_j^2
    \]

    \State Set \(pAR(b;\Gamma)\gets N^\ast/\widehat\sigma_n^2(b)\)
    \State \Return \(\gamma^\ast,\ pAR(b;\Gamma)\)
\EndIf

\end{algorithmic}
\end{minipage}

\noindent
\begin{minipage}{\linewidth}
\captionsetup{type=algorithm}
\caption{Spectral algorithm for repeated test inversion}
\label{alg:spectral_par_grid_nobreak}

\small
\begin{algorithmic}[1]
\Require Candidate set \(\mathcal B\), data \((Y,X,Z)\), matrices \(K=Z'Z\succ0\), \(W\succ0\), radius \(g\ge0\)
\Ensure Values \(\{pAR(b;\Gamma):b\in\mathcal B\}\)

\State Let \(C\gets W^{1/2}\)
\State Precompute \(\widetilde K\gets C^{-1}KC^{-1}\)
\State Compute $\widetilde K=Q\Lambda Q'$, where $\Lambda=\operatorname{diag}(\Lambda_1,\ldots,\Lambda_r)$.

\State Set \(M_Z\gets I-ZK^{-1}Z'\)

\For{\(b\in\mathcal B\)}
    \State Compute
    \[
        \widehat\gamma
        \gets
        K^{-1}Z'(Y-Xb),
        \qquad
        \widehat\sigma_n^2(b)
        \gets
        \frac{(Y-Xb)'M_Z(Y-Xb)}{n-r}.
    \]

    \If{\(\widehat\sigma_n^2(b)\le0\)}
        \State Set \(pAR(b;\Gamma)\gets\) undefined
        \State \textbf{continue}
    \EndIf

    \State Compute $\widehat u\gets C\widehat\gamma$, $\tilde u\gets Q'\widehat u$.
    
    \If{\(g=0\)}
        \State Set
        \[
            pAR(b;\Gamma(0,W))
            \gets
            \frac{\widehat\gamma'K\widehat\gamma}{\widehat\sigma_n^2(b)}
            =
            AR(b;0).
        \]

    \ElsIf{\(\|\widehat u\|_2\le g\)}
        \State Set $pAR(b;\Gamma)\gets0$.

    \Else
        \State Find the unique \(\lambda^\ast>0\) solving
        \[
            \phi(\lambda)
            =
            \sum_{j=1}^r
            \left(
                \frac{\Lambda_j}{\Lambda_j+\lambda}
            \right)^2
            \tilde u_j^2
            -
            g^2
            =
            0.
        \]

        \State Compute
        \[
            N^\ast
            \gets
            \sum_{j=1}^r
            \Lambda_j
            \left(
                \frac{\lambda^\ast}{\Lambda_j+\lambda^\ast}
            \right)^2
            \tilde u_j^2.
        \]

        \State Set
        \[
            pAR(b;\Gamma)
            \gets
            \frac{N^\ast}{\widehat\sigma_n^2(b)}.
        \]
    \EndIf
\EndFor

\State \Return \(\{pAR(b;\Gamma):b\in\mathcal B\}\)

\end{algorithmic}
\end{minipage}

\newpage
\subsection{Proofs}
 
\begin{proof}[\textbf{Proof of Lemma \ref{lem:par_minimum}}]
By definition,
\[
    pAR(b;\Gamma)
    =
    \inf_{\gamma\in\Gamma}AR(b;\gamma).
\]
Every $\widetilde\gamma\in\Gamma$ is feasible in this infimum. Therefore,
\[
    pAR(b;\Gamma)
    \le
    AR(b;\widetilde\gamma).
\]
Under $H_{0,\mathrm{pop}}^I(b;\Gamma_0)$, take $\widetilde\gamma=\gamma^\dagger(P,b)\in\Gamma_0$.
\end{proof}

\begin{proof}[\textbf{Proof of Proposition \ref{prop:pointwise_validity}}]
Fix $b$ and suppose $\gamma^\dagger(P,b)\in\Gamma_0$. From the reduced-form model,
\[
    Y-Xb-Z\gamma^\dagger(P,b)
    =
    U-bV
    =
    \epsilon_b.
\]
Conditional on $Z$, Assumption \ref{ass:dist} gives $\epsilon_b\sim N(0,\sigma_b^2I_n)$ with $\sigma_b^2>0$. Since $P_Z$ and $M_Z$ are orthogonal projections of ranks $r$ and $n-r$,
\[
    \frac{\epsilon_b'P_Z\epsilon_b}{\sigma_b^2}\mid Z\sim\chi_r^2,
    \qquad
    \frac{\epsilon_b'M_Z\epsilon_b}{\sigma_b^2}\mid Z\sim\chi_{n-r}^2,
\]
and the two quadratic forms are independent. Therefore,
\[
    \frac{AR\{b;\gamma^\dagger(P,b)\}}{r}\mid Z
    \sim F_{r,n-r}.
\]
Lemma \ref{lem:par_minimum} yields
\[
    pAR(b;\Gamma_0)
    \le
    AR\{b;\gamma^\dagger(P,b)\}.
\]
Hence
\[
\{pAR(b;\Gamma_0)>c_{1-\alpha,n}\}
 \subseteq
\{AR\{b;\gamma^\dagger(P,b)\}>c_{1-\alpha,n}\},
\]
and taking conditional probabilities gives 
\[
    \Pr_P\!\left
      \{pAR(b;\Gamma_0)>c_{1-\alpha,n}\}
      \mid Z
    \right)
    \le \alpha.
\]
The coverage statement
\[
    \Pr_P\{b\in \mathcal C_{n,\mathrm{pAR}}(\Gamma_0)\mid Z\}
    \ge 1-\alpha.
\]
is its complement. The unconditional result follows by iterated expectations.
\end{proof}

\begin{proof}[\textbf{Proof of Lemma \ref{lem:distance-denominator}}]
For a fixed candidate \(b\), the numerator of the \(\gamma\)-specific AR statistic is
\[
    \{\ghat(b)-\gamma\}'K\{\ghat(b)-\gamma\}.
\]
Since \(K=n\Qn(Z)\) and \(\Dn(b)\) does not depend on \(\gamma\), profiling over \(\Gammao\) gives \eqref{eq:distance-representation} whenever \(\Dn(b)>0\).

Let \(m=n-r \ge  2\) and let \(H\in\R^{n\times m}\) have orthonormal columns spanning the range of \(M_Z\), so that \(M_Z=HH'\) and $H'H = I_m$.  Define
\[
    \widetilde U=H'U,
    \qquad
    \widetilde V=H'V.
\]
Conditional on \(Z\), the vector \((\widetilde U',\widetilde V')'\) is Gaussian with covariance \(\Sigma_{UV}\otimes I_m\), which is positive definite.  It therefore has a density on \(\R^{2m}\).  Moreover,
\[
    \Dn(b)
    =\frac{1}{m}\|\widetilde U-b\widetilde V\|^2.
\]
If \(\Dn(b)=0\) for some \(b\), then \(\widetilde U\) and \(\widetilde V\) are collinear.  When \(m\ge2\), the set of collinear pairs in \(\R^m\times\R^m\) is a proper algebraic subset of \(\R^{2m}\) and has Lebesgue measure zero.  Hence \eqref{eq:global-denominator-positive} holds.  On this probability-one event, multiplying the ratio inequality by \(\Dn(b)\) gives \eqref{eq:acceptance-inequality} simultaneously for every \(b\in\R\).
\end{proof}

\begin{proof}[\textbf{Proof of Proposition \ref{prop:fixed-consistency}}]
For every fixed \(b \in \mathbb R\),
\[
    \ghat(b)
    =\gdag(P,b)+K^{-1}Z'\epsb.
\]
Conditional on \(Z\),
\[
    K^{-1}Z'\epsb
    \sim N\!\left(0,\sigma_b^2K^{-1}\right).
\]
Because \(K=n\Qn(Z)\) and \(\Qn(Z) \overset{p}{\longrightarrow} W_{ZZ}(P)\succ0\),
\begin{equation}\label{eq:ghat-root-n}
    \ghat(b)-\gdag(P,b)=\Op(n^{-1/2}).
\end{equation}
Also, by conditional Gaussianity given $Z$,
\[
    \frac{(n-r)\Dn(b)}{\sigma_b^2} \mid Z
    \sim\chi^2_{n-r},
\]
and so
\[
    E[D_n(b) \mid Z] = \sigma_b^2, \quad \mathrm{Var} (D_n(b) \mid Z) = \frac{2 \sigma^4(b)}{n-r}.
\]
Therefore,
\begin{equation}
\label{eq:sigma-consistency}
    \Dn(b) \overset{p}{\longrightarrow}\sigma_b^2.
\end{equation}

For fixed $A \succ 0$, the function $f(x,A,\gamma) = (x-\gamma)'A(x-\gamma)$ is jointly continuous in $(x,A,\gamma)$. Because the ellipsoid \(\Gammao\) is compact, the infimum over $\gamma \in \Gammao$ is attained. Thus, the map
\[
    (x,A)\longmapsto d_A^2(x,\Gammao)
\]
is continuous on bounded sets of \(x\) and sets of positive-definite matrices whose eigenvalues are bounded above and away from zero.  Hence \eqref{eq:ghat-root-n} and \(\Qn(Z) \overset{p}{\longrightarrow} W_{ZZ}\) imply
\[
    d_{\Qn(Z)}^2\{\ghat(b),\Gammao\}
    \overset{p}{\longrightarrow}
    d_{W_{ZZ}}^2\{\gdag(P,b),\Gammao\}.
\]
Combining this convergence with Lemma~\ref{lem:distance-denominator} and \eqref{eq:sigma-consistency} proves 
\[
    \frac{1}{n} pAR(b;\Gammao)
    \overset{p}{\longrightarrow}
    \frac{
      d_{W_{ZZ}(P)}^2\{\gdag(P,b),\Gammao\}
    }{
      \sigma_b^2
    }.
\]

If \(b\notin\BI(P;\Gammao)\), then \(\gdag(P,b)\notin\Gammao\).  Closedness of \(\Gammao\) and positive definiteness of \(W_{ZZ}\) imply
\[
    d_{W_{ZZ}(P)}^2\{\gdag(P,b),\Gammao\}>0.
\]
Since \(c_{1-\alpha,n} \longrightarrow q_{r,1-\alpha}<\infty\) where $q_{r,1-\alpha}$ is the $(1-\alpha)$-quantile of $\chi^2_r$, part (i) follows.

If \(\gdag(P,b)\in\operatorname{int}(\Gammao)\), there is an \(\epsilon>0\) such that the Euclidean ball of radius \(\epsilon\) centered at \(\gdag(P,b)\) is contained in \(\Gammao\).  By \eqref{eq:ghat-root-n},
\[
    \Pr_P\!\left\{
       \|\ghat(b)-\gdag(P,b)\|<\epsilon
       \mid Z
    \right\}\to1.
\]
On this event, \(\ghat(b)\in\Gammao\). Thus, the profiled numerator is zero.  This proves part (ii).
\end{proof}

\begin{proof}[\textbf{Proof of Proposition \ref{prop:strict-conservativeness}}]
Define
\[
    N_0(\ghat(b))
    =(\ghat(b) -\gdag(P,b))'K
      (\ghat(b) -\gdag(P,b)),
\]
\[
    N_*(\ghat(b))
    =\inf_{\gamma\in\Gamma}
      (\ghat(b) -\gamma)'K
      (\ghat(b)-\gamma).
\]
Since $\gdag(P,b) \in \Gamma$, $N_*(\widehat \gamma_n(b)) \le N_0(\widehat \gamma_n(b))$ for every realization. Conditional on \(Z\), \(\ghat(b) \) has a nonsingular Gaussian density on \(\R^r\), \(D_n(b)\) has a strictly positive density on \((0,\infty)\), and \(\ghat(b)\) and \(D_n(b)\) are independent.  Moreover,
\[
    \frac{N_0(\ghat(b))}{D_n(b)}
    =\AR(b; \gdag(P,b)),
    \qquad
    \Pr_P\!\left\{
       \frac{N_0}{D_n(b)}>c_{1-\alpha,n}
       \mid Z
    \right\}=\alpha.
\]

If \(\Gamma=\{\gdag(P,b)\}\), then \(N_*=N_0\) because the pAR statistic equals the exact compatible-nuisance AR statistic for every sample realization and the rejection probability is $\alpha$.

Suppose instead that \(\Gamma\) contains \(\gamma_1\ne \gdag(P,b) \).  Let
\[
    \Delta
    =(\gamma_1-\gdag(P,b))'K(\gamma_1- \gdag(P,b))>0,
    \qquad
    d_0=\frac{\Delta}{2c_{1-\alpha,n}}>0.
\]
At \((\ghat(b),D_n(b))=(\gamma_1,d_0)\),
\[
    N_*(\gamma_1)=0,
    \qquad
    c_{1-\alpha,n}d_0=\frac{\Delta}{2},
    \qquad
    N_0(\gamma_1)=\Delta.
\]
The function \(N_0\) is continuous.  The function \(N_*\) is also continuous because it is the minimum of a jointly continuous function over the compact set \(\Gamma\).  Hence, there is an open neighborhood \(\mathcal O\) of \((\gamma_1,d_0)\) on which
\[
    N_*\le c_{1-\alpha,n} D_n(b) <N_0.
\]
The joint density of \((\ghat(b),D_n(b))\) is strictly positive on \(\R^r\times(0,\infty)\). Thus, 
\[
    \Pr_P\{(\ghat(b),D_n(b))\in\mathcal O\mid Z\}>0.
\]
Therefore,
\[
\begin{split}
    \Pr_P\!\left\{\frac{N_*}{D_n(b)}>c_{1-\alpha,n}\mid Z\right\}
    &\le
    \Pr_P\!\left\{\frac{N_0}{D_n(b)}>c_{1-\alpha,n}\mid Z\right\}\\
    &\quad-
    \Pr_P\!\left\{N_*\le c_{1-\alpha,n} D_n(b)<N_0\mid Z\right\}\\
    &<\alpha.
\end{split}
\]
For \(\Gammao(g,W_0)\), \(g=0\) gives the singleton \(\{0_r\}\) because positive definiteness of $W_0$ implies $\gamma'W_0\gamma \le 0 \iff \gamma = 0_r$, while every \(g>0\) gives a set containing more than one point.  The membership-null statement follows.
\end{proof}

\begin{proof}[\textbf{Proof of Proposition \ref{prop:topology-boundedness}}]
Define
\[
    N_n^*(b)
    =\inf_{\gamma\in\Gammao}
       (\dhat-\pihat b-\gamma)'K
       (\dhat-\pihat b-\gamma).
\]
By Lemma \ref{lem:distance-denominator},
\[
    \cset(\Gammao)
    =\{b \in \mathbb R:N_n^*(b)\le c_{1-\alpha,n}\Dn(b)\}.
\]
Equivalently, it is the projection onto the \(b\)-coordinate of
\[
\begin{split}
    \mathcal S_n=\biggl\{(b,\gamma)\in\R\times\R^r:
      &\ \gamma'W_0\gamma\le g^2,\\
      &(\dhat-\pihat b-\gamma)'K
       (\dhat-\pihat b-\gamma)
       \le c_{1-\alpha,n}\Dn(b)
    \biggr\}.
\end{split}
\]
A semialgebraic set is a finite Boolean combination of sets defined by polynomial equalities and inequalities. Here, $\gamma'W_0\gamma-g^2$ is a quadratic polynomial in $\gamma$. Also, $(\dhat-\pihat b-\gamma)'K (\dhat-\pihat b-\gamma) - c_{1-\alpha,n}\Dn(b)$ is a polynomial of degree at most two in $(b,\gamma)$, because $D_n(b)$ is quadratic in $b$. Hence, the set \(\mathcal S_n\) is defined by polynomial inequalities and is semialgebraic. By the Tarski--Seidenberg theorem, its projection is again semialgebraic. 

To establish closedness, let \(b_m\in\cset(\Gammao)\) and \(b_m\to b\).  Choose \(\gamma_m\in\Gammao\) satisfying the acceptance inequality.  Because \(\Gammao\) is compact, $\{\gamma_m\}$ has a convergent subsequence \(\gamma_{m_k}\to\gamma\in\Gammao\).  Passing to the limit in the continuous inequality gives \(b\in\cset(\Gammao)\).  Every closed semialgebraic subset of \(\R\) has finitely many connected components and has the form stated in part (i).

For the tail limit, compactness of \(\Gammao\) gives
\[
    \frac{N_n^*(b)}{b^2}
    =\inf_{\gamma\in\Gammao}
      \left\{
       \frac{\dhat-\gamma}{b}-\pihat
      \right\}'K
      \left\{
       \frac{\dhat-\gamma}{b}-\pihat
      \right\} 
    \longrightarrow
    \pihat'K\pihat 
\]
as \(|b|\to\infty\), because 
$$
    \sup_{\gamma \in \Gamma_0} \bigg\|\frac{\widehat\delta_n - \gamma}{b} \bigg\|
    \longrightarrow 0.
$$
Also,
\[
    \frac{\Dn(b)}{b^2}
    \longrightarrow
    \frac{X'M_ZX}{n-r}=s_{X,n}^2>0.
\]
Therefore,
\[
    \lim_{|b|\to\infty} pAR(b;\Gammao)=L_n.
\]
Suppose first that \(L_n>c_{1-\alpha,n}\). By convergence of the statistic to $L_n$ as $|b| \to \infty$, there exist $M < \infty$ such that
$$
    |b| > M \quad \implies \quad pAR(b;\Gamma_0) > c_{1-\alpha,n}.
$$
Thus, no accepted value lies outside $[-M,M]$. Since the confidence set is closed, it is a closed subset of a compact interval and hence compact and bounded. If \(L_n<c_{1-\alpha,n}\), the same convergence gives an $M$ such that
$$
    |b| > M \quad \implies \quad pAR(b;\Gamma_0) < c_{1-\alpha,n}.
$$
Thus, both rays $(-\infty, -M]$ and $[M, \infty)$ lie in the acceptance set.
\end{proof}

\begin{proof}[\textbf{Proof of Corollary \ref{cor:tail-first-stage}}]
Under Assumption \ref{ass:dist},
\[
    \pihat=\pi(P)+K^{-1}Z'V.
\]
Conditional on \(Z\),
\[
    \frac{\pihat'K\pihat}{\sigma_V^2}
    \sim\chi_r^2(\lambda_{\pi,n}),
    \qquad
    \lambda_{\pi,n}
    =\frac{\pi(P)'K\pi(P)}{\sigma_V^2}.
\]
Because \(M_ZX=M_ZV\),
\[
    \frac{(n-r)s_{X,n}^2}{\sigma_V^2}
    =\frac{V'M_ZV}{\sigma_V^2}
    \sim\chi_{n-r}^2.
\]
The two quadratic forms are independent because they depend on the orthogonal Gaussian projections \(P_ZV\) and \(M_ZV\).  Consequently,
$$
    \frac{L_n}{r} \sim F_{r,n-r}(\lambda_{\pi,n}).
$$

If \(\pi(P)\ne0_r\), then \(\pihat\to_p\pi(P)\), \(s_{X,n}^2\to_p\sigma_V^2\), and
\[
    \frac{L_n}{n}
    =\frac{\pihat'\Qn(Z)\pihat}{s_{X,n}^2}
    \overset{p}{\longrightarrow}
    \frac{\pi(P)'W_{ZZ}\pi(P)}{\sigma_V^2}>0.
\]
Since \(c_{1-\alpha,n}=O(1)\), Proposition~\ref{prop:topology-boundedness} implies bounded inversion with probability approaching one.  If \(\pi(P)=0_r\), then $L_n/r \sim F_{r,n-r}$.  Continuity of this distribution gives
\[
    \Pr_P\{L_n<c_{1-\alpha,n}\mid Z\}=1-\alpha,
\]
and the equality event has probability zero.  Proposition~\ref{prop:topology-boundedness} completes the proof.
\end{proof}

\begin{proof}[\textbf{Proof of Proposition \ref{prop:hausdorff}}]
Define the sample plug-in identified set
\[
    \widehat B_{I,n}
    =\{b\in\R:\ghat(b)\in\Gammao\}
    =\{b:(\dhat-\pihat b)'W_0(\dhat-\pihat b)\le g^2\}.
\]
Conditional normality and \(\Qn(Z) \rightarrow W_{ZZ}\succ0\) give
\[
    \dhat-\delta(P)=\Op(n^{-1/2}),
    \qquad
    \pihat-\pi(P)=\Op(n^{-1/2}).
\]
Set
\[
    \widehat a_{W_0}=\pihat'W_0\pihat,
    \qquad
    \widehat b_{W_0}=\pihat'W_0\dhat,
    \qquad
    \widehat d_{W_0}=\dhat'W_0\dhat.
\]
Because \(a_{W_0}>0\) and \(g^2-g_*^2>0\), with probability approaching one \(\widehat a_{W_0}>0\) and
\[
    g^2-
    \left(
       \widehat d_{W_0}
       -\frac{\widehat b_{W_0}^2}{\widehat a_{W_0}}
    \right)>0.
\]
On this event, \(\widehat B_{I,n}=[\widehat b_L,\widehat b_U]\), where
\[
    \widehat b_{L,U}
    =\frac{\widehat b_{W_0}}{\widehat a_{W_0}}
     \mp
     \frac{
       \sqrt{
          g^2-\widehat d_{W_0}
          +\widehat b_{W_0}^2/\widehat a_{W_0}
       }
     }{
       \sqrt{\widehat a_{W_0}}
     }.
\]
The endpoint map is continuously differentiable function of $(\widehat a_{W_0}, \widehat b_{W_0}, \widehat d_{W_0})$ in a neighborhood of the population values whenever $\widehat a_{W_0 } > 0$ and the expression under the square root is strictly positive. The vector of sample coefficients is root-$n$ consistent. Hence, a first-order Taylor expansion of each endpoint around its population value gives $\widehat b_L - b_L = O_p(n^{-1/2})$ and $\widehat b_U - b_U = O_p(n^{-1/2})$. For intervals, the Hausdorff distance equals the maximum absolute endpoint error. Thus, the delta method gives
\[
    d_H\{\widehat B_{I,n},\BI(P;\Gammao)\}
    =\Op(n^{-1/2}).
\]
If \(b\in\widehat B_{I,n}\), then $\widehat \gamma_n(b) \in \Gamma_0$. The profiling problem can choose $\gamma = \widehat \gamma_n(b)$ which yields a zero profiled numerator. Thus,
\begin{equation}
    \label{eq:plugin-subset}
    \widehat B_{I,n}\subseteq\cset(\Gammao).
\end{equation}
Hence,
\[
    d\{b, \mathcal C_{n,\rm pAR}(\Gamma_0) \} \le d\{ b, \widehat B_{I,n} \}.
\]
It follows that
\begin{equation}\label{eq:inner-Hausdorff}
    \sup_{b\in\BI(P;\Gammao)}
    d\{b,\cset(\Gammao)\}
    =\Op(n^{-1/2}).
\end{equation}

We next establish compact containment.  Let
\[
    R_\Gamma=\sup_{\gamma\in\Gammao}\|\gamma\|<\infty.
\]
The model and the Gaussian projection identities imply
\[
    \dhat \overset{p}{\longrightarrow} \delta(P),
    \qquad
    \pihat \overset{p}{\longrightarrow} \pi(P),
\]
\[
    \frac{\|M_ZY\|^2}{n-r}
    =\frac{U'M_ZU}{n-r}
    \overset{p}{\longrightarrow} \sigma_U^2,
\]
\[
    \frac{\|M_ZX\|^2}{n-r}
    =\frac{V'M_ZV}{n-r}
    \overset{p}{\longrightarrow} \sigma_V^2,
\]
and \(c_{1-\alpha,n} \to q_{r,1-\alpha}\).  Choose fixed constants
\[
    \underline q>0,
    \quad p_0>0,
    \quad D_0>0,
    \quad A_0>0,
    \quad B_0>0,
    \quad C_0>0
\]
such that the event
\[
\begin{split}
    \mathcal E_n=\biggl\{&
       \lambda_{\min}\{\Qn(Z)\}\ge\underline q,
       \ \|\pihat\|\ge p_0,
       \ \|\dhat\|\le D_0,\\
       &\quad \frac{\|M_ZY\|}{\sqrt{n-r}}\le A_0,
       \ \frac{\|M_ZX\|}{\sqrt{n-r}}\le B_0,
       \ c_{1-\alpha,n}\le C_0
    \biggr\}
\end{split}
\]
satisfies \(\Pr_P(\mathcal E_n\mid Z)\to1\).

On \(\mathcal E_n\),
\[
\begin{split}
    d_{\Qn(Z)}\{\ghat(b),\Gammao\}
    &\ge
      \sqrt{\underline q}\,
      d\{\dhat-\pihat b,\Gammao\}\\
    &\ge
      \sqrt{\underline q}\,
      \bigl(p_0|b|-D_0-R_\Gamma\bigr)_+,
\end{split}
\]
whereas by the triangular inequality,
\[
    \sighat(b)
    \le A_0+B_0|b|.
\]
If \(b\in\cset(\Gammao)\), then
\[
    \sqrt n\,
    d_{\Qn(Z)}\{\ghat(b),\Gammao\}
    \le
    \sqrt{c_{1-\alpha,n}}\,\sighat(b).
\]
On $\mathcal E_n$, any accepted $b$ with $p_0 |b| - D_0 - R_\Gamma > 0$ must satisfy
\[
    \sqrt{n \underline q} \{ p_0 |b| - D_0 - R_\Gamma \} \le \sqrt{C_0} (A_0 + B_0 |b|).
\]
Rearranging the coefficients of $|b|$ gives
\[
    \{ \sqrt{n \underline q} p_0 - \sqrt{C_0}B_0 \} |b| \le \sqrt{n \underline q} (D_0 + R_\Gamma) + \sqrt{C_0}A_0.
\]
For all sufficiently large \(n\), this inequality and the preceding bounds imply \(|b|\le M\) for a fixed finite \(M\) on $\mathcal E_n$.  Since $\Pr_P (\mathcal E_n \mid Z) \to 1$,
\begin{equation}\label{eq:compact-containment}
    \Pr_P\!\left\{
       \cset(\Gammao)\subset[-M,M]
       \mid Z
    \right\}\longrightarrow1.
\end{equation}

On \([-M,M]\),
\[
    \sup_{|b|\le M}
    \|\ghat(b)-\gdag(P,b)\|
    \le
    \|\dhat-\delta(P)\|
    +M\|\pihat-\pi(P)\|
    =\Op(n^{-1/2}),
\]
and \(\sup_{|b|\le M}\sighat(b)=\Op(1)\). Therefore, the acceptance inequality gives
\[
    d_{\Qn(Z)}\{\ghat(b),\Gammao\} \le \frac{\sqrt{c_{1-\alpha,n}}\widehat \sigma_n(b)}{\sqrt{n}}
    =\Op(n^{-1/2})
\]
uniformly over accepted \(b\in[-M,M]\).  The lower eigenvalue bound and the preceding uniform approximation imply
\begin{equation}\label{eq:population-line-distance}
    d\{\gdag(P,b),\Gammao\}
    =\Op(n^{-1/2})
\end{equation}
uniformly over accepted $b$.

It remains to convert nuisance-space distance into scalar-parameter distance.  Define
\[
    \rho(b)=d\{\gdag(P,b),\Gammao\},
    \qquad
    \varphi(b)
    =\gdag(P,b)'W_0\gdag(P,b)-g^2.
\]
$\rho(b)$ is zero when $\gamma^\dagger (P,b) \in \Gamma_0$, so the zero set of \(\rho(b)\) is \([b_L,b_U]\).  At either endpoint \(b_e\), let \(\gamma_e=\gdag(P,b_e)\) and \(a_e=W_0\gamma_e\).  Since \(g>g_*\),
\[
    \varphi'(b_e)=-2a_e'\pi(P)\ne0.
\]
For \(b\) just outside the interval,
\[
    \varphi(b)
    \ge c\,d\{b,[b_L,b_U]\}
\]
for some \(c>0\).  For every \(\gamma\in\Gammao\), boundedness on \([-M,M]\) gives
\[
\begin{split}
    \varphi(b)
    &\le
      \gdag(P,b)'W_0\gdag(P,b)-\gamma'W_0\gamma\\
    &=\{\gdag(P,b)-\gamma\}'W_0
      \{\gdag(P,b)+\gamma\}\\
    &\le C\|\gdag(P,b)-\gamma\|
\end{split}
\]
for a finite \(C\).  Taking the infimum over \(\gamma\in\Gammao\) yields a linear error bound near the endpoints.  On the compact portion of \([-M,M]\setminus[b_L,b_U]\) away from the endpoints, continuity gives the same bound with a possibly smaller constant.  Thus, there is \(\kappa_M>0\) such that
\begin{equation}\label{eq:error-bound}
    \rho(b)
    \ge
    \kappa_Md\{b,\BI(P;\Gammao)\}
    \qquad\text{for all }|b|\le M.
\end{equation}
Combining \eqref{eq:population-line-distance} and \eqref{eq:error-bound} gives
\begin{equation}\label{eq:outer-Hausdorff}
    \sup_{b\in\cset(\Gammao)}
    d\{b,\BI(P;\Gammao)\}
    =\Op(n^{-1/2}).
\end{equation}
The plug-in interval is nonempty with probability approaching one. Then  \eqref{eq:plugin-subset} makes \(\cset(\Gammao)\) nonempty.  \eqref{eq:inner-Hausdorff}, \eqref{eq:compact-containment},  and \eqref{eq:outer-Hausdorff} prove
\[
    d_H^{\mathrm{ex}}\!\left(
      \cset(\Gammao),\BI(P;\Gammao)
    \right)
    =\Op(n^{-1/2}).
\]
\end{proof}

\begin{proof}[\textbf{Proof of Lemma \ref{lem:tangent-distance}}]
Because \(W_0\succ0\), every \(v\in\mathcal T_n\) satisfies \(a_0'v\le0\). Hence, \(\mathcal T_n\subseteq\mathcal T\) and the approximating sets never extend outside the tangent half-space.  Conversely, fix \(R<\infty\). We want to show that there is a constant \(C_R<\infty\) such that, for every \(v\in\mathcal T\cap\{\|v\|\le R\}\), the inward shift
\[
    v_n=v-\frac{C_R}{\sqrt n}a_0
\]
belongs to \(\mathcal T_n\) for all sufficiently large \(n\).  Substituting $v_n = v -C_r a_0 / \sqrt{n}$ into the linear term gives
\[
    2a_0'v_n = 2a_0'v - \frac{2C_R}{\sqrt n} \| a_0 \|^2.
\]
Because $v \in \mathcal T$, $a_0'v \le 0$. Thus, the inward shift yields a negative margin of order $n^{-1/2}$. On the ball $\| v\| \le R$, the shifted vectors $v_n$ remain in a slightly larger fixed ball for large $n$. Hence, for a fixed constant $B_R$,
\[
    v_n' W_0 v_n \le B_R.
\]
Choosing $C_R$ so that
\[
    2C_R \| a_0 \|^2 > B_R
\]
ensures
\[
    2a_0'v_n + n^{-1/2}v_n' W_0 v_n \le 0.
\]
Therefore, $v_n \in \mathcal T_n$ and the quadratic term is uniformly bounded by a constant times \(n^{-1/2}\) on the relevant compact set. This shows that every bounded point of the tangent half-space is within order $n^{-1/2}$ of the curved set. Therefore,
\[
    \sup_{v\in\mathcal T\cap\{\|v\|\le R\}}
    d(v,\mathcal T_n)
    =O(n^{-1/2}).
\]
The feasible sets converge locally in Hausdorff distance on every bounded ball. Together with \(\mathcal T_n\subseteq\mathcal T\), this gives the local inner and outer approximations needed for convergence of the minimized value functions.

Let \(v_n^*\) minimize 
\[
    \inf_{v \in \mathcal T_n} (t_n - v)' W_n(Z) (t_n - v).
\]
Since \(0\in\mathcal T_n\),
\[
    (t_n-v_n^*)'\Qn(Z)(t_n-v_n^*)
    \le t_n'\Qn(Z)t_n.
\]
The eigenvalues of \(\Qn(Z)\) are bounded above and away from zero. Let $\underline q$ and $\overline q$ be uniform eigenvalue bounds. Then,
\[
    \underline q \|t_n - v_n^* \|^2 \le (t_n - v_n^*)' W_n(Z) (t_n - v_n^*) \le \overline q \| t_n \|^2.
\]
Thus, 
\[
    \| t_n - v_n^* \| \le \sqrt{ \overline q / \underline q }  \|t_n\|.
\]
Since \(t_n=O_p(1)\), the difference $t_n - v_n^*$ is $O_p(1)$. The triangular inequality gives $\| v_n^*\| \le \| t_n \| +  \| t_n - v_n^* \| =  O_p(1)$.  

For a fixed large radius $R < \infty$, stochastic boundedness of $v_n^*$ and $t_n$ makes the event that all relevant points lie in the ball of radius $R$ arbitrarily likely. On that ball, $\mathcal T_n$ converges to $\mathcal T$ in local Hausdorff distance, $W_n(Z) \to W_{ZZ}(P)$, and the map $(t,v,A) \mapsto (t-v)' A (t-v) $ is uniformly continuous. These facts imply that the value function
\[
    \psi_n(t) = \inf_{v \in \mathcal T_n} (t-v)' W_n(Z) (t-v)
\]
converge uniformly on compact sets to
\[
    \psi(t) =  \inf_{v \in \mathcal T} (t-v)' W_{ZZ}(P) (t-v).
\]
Furthermore, the function $\psi$ is continuous. Therefore,
\[
    \psi_n(t_n) - \psi(t_n) \overset{p}{\longrightarrow} 0
\]
and
\[
    \psi(t_n) \Rightarrow \psi(t)
\]
by the continuous mapping theorem. Combining the two statements proves convergence of the minimum values.

\end{proof}

\begin{proof}[\textbf{Proof of Proposition \ref{prop:boundary-local-power}}]
Because \(b_0\) is an endpoint of the nondegenerate interval, $\gamma_0'W_0\gamma_0=g^2$. The derivative of
\[
    b\mapsto\gdag(P,b)'W_0\gdag(P,b)
\]
at \(b_0\) is \(-2a_0'\pi(P)\).  The intersection is transverse when \(g>g_*\), so \(a_0'\pi(P)\ne0\).

For \(b_n=b_0+h/\sqrt n\),
\[
    \gdag(P,b_n)
    =\gamma_0-\frac{\pi(P)h}{\sqrt n}.
\]
For each $n$,
\[
    \widehat \gamma_n(b_n) - \gamma^\dagger(P,b_n) = K^{-1}Z' \epsilon_{b_n}.
\]
Conditional on $Z$, this vector is Gaussian with covariance
\[
    \sigma^2_{b_n}K^{-1} = \frac{\sigma^2_{b_n}}{n} w_n(Z)^{-1}.
\]
Therefore,
\[
    \sqrt{n} \{\ghat(b_n)-\gdag(P,b_n)\} \mid Z \sim N(0, \sigma^2_{b_n} W_n(Z)^{-1} ).
\]
Since $b_n \to b_0$, the variance function is continuous and $\sigma^2_{b_n} \to \sigma^2_0$. Also, $W_n(Z)^{-1} \to W_{ZZ}(P)^{-1}$. Convergence of the Gaussian covariance matrices gives the weak limit 
\[
    \sqrt n\{\ghat(b_n)-\gdag(P,b_n)\} \mid Z 
    \Rightarrow
    \xi,
    \qquad
    \xi\sim N(0,\sigma_0^2W_{ZZ}^{-1}).
\]
Hence, adding and subtractracting $\gamma^\dagger(P,b_n)$ gives
\[
    t_n:=\sqrt n\{\ghat(b_n)-\gamma_0\} \mid Z
    \Rightarrow
    t:=\xi-\pi(P)h.
\]

The rescaled feasible set is
\[
    \sqrt n(\Gammao-\gamma_0)
    =\left\{
       v:
       2a_0'v+n^{-1/2}v'W_0v\le0
     \right\}
    =\mathcal T_n.
\]
Therefore,
\[
    n\,d_{\Qn(Z)}^2\{\ghat(b_n),\Gammao\}
    =\inf_{v\in\mathcal T_n}
      (t_n-v)'\Qn(Z)(t_n-v).
\]
Because $t_n \Rightarrow t$, $W_n(Z) \to W_{ZZ}(P) \succ 0$ by the design assumption, and $\mathcal T_n$ is the curved local set from Lemma \ref{lem:tangent-distance}, 
\begin{equation}
    \label{lem:replace_curved_set_tangent_half}
    n\,d_{\Qn(Z)}^2\{\ghat(b_n),\Gammao\}
    \Rightarrow
    d_{W_{ZZ}}^2(t,\mathcal T),
\end{equation}
where \(\mathcal T=\{v:a_0'v\le0\}\).

We want to show that for this half-space, the \(W_{ZZ}\)-metric projection gives
\begin{equation}
    \label{lem:half-space-formula}
    d_{W_{ZZ}(P)}^2(x,\mathcal T)
    =\frac{\{a_0'x\}_+^2}{a_0'W_{ZZ}^{-1}a_0}
    =\frac{\{a_0'x\}_+^2}{v_0}.
\end{equation}
If $a_0'x \le 0$, then $x \in \mathcal T$. Hence, the distance is zero and $\{ a_0'x \}_+ = 0$. Suppose $a_0'x > 0$. The closest point lies on the boundary $a_0'v = 0$. Consider the minimization problem
\[
    \min_v (x-v)' W_{ZZ}(P) (x-v) \quad \text{subject to} \quad a_0'v = 0.
\]
The Lagrangian first-order condition is
\[
    -2 W_{ZZ}(P)(x-v) + 2 \lambda a_0 = 0.
\]
Thus,
\[
    v = x - \lambda W_{ZZ}(P)^{-1} a_0.
\]
Imposing $a_0'v = 0$ gives
\[
    \lambda = \frac{a_0'x}{a_0' W_{ZZ}(P)^{-1} a_0 }.
\]
Substitution into the quadratic objective yields
\[
    \frac{(a_0'x)^2}{a_0' W_{ZZ}(P)^{-1} a_0}.
\]
Combining the feasible and infeasible cases gives the positive part.

Because $\text{Var}(a_0'\xi) = \sigma^2_0  v_0$, 
\[
    G = \frac{a_0' \xi}{\sigma_0 \sqrt{v_0}}
\]
is standard normal. Since $t = \xi - \pi(P)h$,
\[
    \frac{a_0't}{\sigma_0 \sqrt{v_0}} = \frac{a_0'\xi}{\sigma_0 \sqrt{v_0}} - \frac{h a_0' \pi(P)}{\sigma_0 \sqrt{v_0}} = G - \kappa(h).
\]
Then, (\ref{lem:half-space-formula}) gives
\begin{equation}    \label{lem:standardize_gaussian_component}
    \frac{d^2_{W_{ZZ}(P)}(t, \mathcal T)}{\sigma^2_0} = \{ G - \kappa(h) \}^2_+.
\end{equation}
Combining (\ref{lem:replace_curved_set_tangent_half}) and (\ref{lem:standardize_gaussian_component}) gives
\[
    \frac{
      n\,d_{\Qn(Z)}^2\{\ghat(b_n),\Gammao\}
    }{
      \sigma_0^2
    }
    \Rightarrow
    \{G-\kappa(h)\}_+^2.
\]
Also, \(\Dn(b_n) \overset{p}{\longrightarrow}\sigma_0^2\).  Lemma \ref{lem:distance-denominator} and Slutsky's theorem prove 
\begin{equation}
\label{lem:boundary-local-limit}
    pAR(b_n;\Gammao)
    \Rightarrow
    \{G-\kappa(h)\}_+^2,
    \qquad G\sim N(0,1),
\end{equation}

The critical value satisfies \(c_{1-\alpha,n}\to q_{r,1-\alpha}>0\).  The limiting distribution in \eqref{lem:boundary-local-limit} has no atom at this positive threshold. Thus, convergence of rejection probabilities follows.  Moreover,
\[
    \{G-\kappa(h)\}_+^2>q_{r,1-\alpha}
    \quad\Longleftrightarrow\quad
    G>\sqrt{q_{r,1-\alpha}}+\kappa(h),
\]
which proves 
\[
    \lim_{n\to\infty}
    \Pr_P\!\left\{
       pAR(b_n;\Gammao)>c_{1-\alpha,n}
       \mid Z
    \right\}
    =1-\Phi\!\left(
       \sqrt{q_{r,1-\alpha}}+\kappa(h)
     \right),
\]
The direction \(-\pi(P)h\) is outward precisely when
\[
    a_0'\{-\pi(P)h\}>0
    \quad\Longleftrightarrow\quad
    h\,a_0'\pi(P)<0.
\]
At \(h=0\), the limiting rejection probability is \(1-\Phi\{\sqrt{q_{r,1-\alpha}}\}\).  For integer $r \ge 1$, a $\chi^2_r$ random variable can be represented as a $\chi^2_1$ plus an independent nonnegative $\chi^2_{r-1}$. Hence, \(\chi_r^2\) stochastically dominates \(\chi_1^2\) and its $(1-\alpha)$-quantile is at least as large. A $\chi^2_1$ random variable is the square of a standard normal random variable. Thus,
\[
    \Pr(\chi^2_1 \le z^2_{1-\alpha/2}) = \Pr(|G| \le z_{1-a/2}) = 1- \alpha.
\]
Thus, $q_{1,1-\alpha} = z^2_{1-\alpha/2}$. Then, monotonicity of $1 - \Phi(x)$ gives
\[
    1-\Phi\{\sqrt{q_{r,1-\alpha}}\}
    \le
    1 - \Phi(z_{1-\alpha/2}) = \frac{\alpha}{2}.
\]
Equality holds for $r=1$ and the inequality is strict for $r > 1$.
\end{proof}

\begin{proof}[\textbf{Proof of Proposition \ref{prop:kkt}}]
If $g=0$, positive definiteness of $W_n$ gives $\Gamma_n(0,W_n)=\{0\}$. The unique feasible point is therefore $\gamma^\ast=0$, and
\[
    N_n(b;\gamma^\ast)=\widehat\gamma'K\widehat\gamma,
    \qquad
    pAR\{b;\Gamma_n(0,W_n)\}
    =\frac{\widehat\gamma'K\widehat\gamma}{\widehat\sigma_n^2(b)}
    =AR(b;0).
\]
If $\widehat\gamma\ne0$, no finite multiplier yields $\gamma(\lambda)=0$. For every finite $\lambda\ge0$, the matrix $(K+\lambda W_n)^{-1}K$ is nonsingular, so
\[
    \gamma(\lambda)
    =(K+\lambda W_n)^{-1}K\widehat\gamma
    \ne0.
\]
Hence $\phi(\lambda)>0$ when $g=0$, and the boundary solution is reached only as $\lambda\to\infty$.

Suppose now that $g>0$. The zero vector is strictly feasible, so Slater's condition holds and the KKT conditions are necessary and sufficient. Stationarity gives
\begin{equation}
    2K(\gamma-\widehat\gamma)+2\lambda W_n\gamma=0
    \quad\Longrightarrow\quad
    \gamma(\lambda)=(K+\lambda W_n)^{-1}K\widehat\gamma.
    \label{gamma_lambda}
\end{equation}
Feasibility and complementary slackness require
\[
    \gamma(\lambda)'W_n\gamma(\lambda)\le g^2,
    \qquad
    \lambda\ge0,
    \qquad
    \lambda\{\gamma(\lambda)'W_n\gamma(\lambda)-g^2\}=0.
\]

If $\widehat\gamma\in\Gamma_n(g,W_n)$, the unconstrained minimizer is feasible. Thus, $\gamma^\ast=\widehat\gamma$, $\lambda^\ast=0$, and $pAR\{b;\Gamma_n(g,W_n)\}=0$.

If $\widehat\gamma\notin\Gamma_n(g,W_n)$, the constraint binds and $\lambda^\ast>0$ solves
\begin{equation}
    \phi(\lambda)
    =\gamma(\lambda)'W_n\gamma(\lambda)-g^2
    =0,
    \qquad \lambda>0.
    \label{phi_lambda}
\end{equation}
The endpoint values satisfy $\phi(0)=\widehat\gamma'W_n\widehat\gamma-g^2>0$ and $\lim_{\lambda\to\infty}\phi(\lambda)=-g^2<0$. Moreover,
\[
    \phi'(\lambda)
    =-2\gamma(\lambda)'W_n
      (K+\lambda W_n)^{-1}W_n\gamma(\lambda)<0
\]
for every finite $\lambda$ in the binding case. Continuity and strict monotonicity therefore imply that the root exists and is unique.

Finally,
\[
    \widehat\gamma-\gamma(\lambda)
    =\lambda(K+\lambda W_n)^{-1}W_n\widehat\gamma.
\]
Substitution of $\lambda=\lambda^\ast$ gives
\[
    N_n(b;\gamma^\ast)
    =\lambda^{\ast2}\widehat\gamma'W_n
      (K+\lambda^\ast W_n)^{-1}K
      (K+\lambda^\ast W_n)^{-1}W_n\widehat\gamma.
\]
Division by $\widehat\sigma_n^2(b)$ yields the stated expression for $pAR\{b;\Gamma_n(g,W_n)\}$.
\end{proof}

\begin{proof}[\textbf{Proof of Corollary \ref{cor:dual}}]
Define the Lagrangian
\[
    \mathcal L(\gamma,\lambda)
    =N_n(b;\gamma)+\lambda(\gamma'W_n\gamma-g^2),
    \qquad \lambda\ge0.
\]
For a finite $\lambda\ge0$,
\[
    N_n(b;\gamma)+\lambda\gamma'W_n\gamma
    =\gamma'(K+\lambda W_n)\gamma
     -2\gamma'K\widehat\gamma
     +\widehat\gamma'K\widehat\gamma.
\]
Let $A=K+\lambda W_n$ and $c=K\widehat\gamma$. Since $A\succ0$,
\[
    \inf_{\gamma\in\mathbb R^r}
    \{\gamma'A\gamma-2c'\gamma\}
    =-c'A^{-1}c.
\]
It follows that
\[
    q(\lambda)
    =\inf_{\gamma\in\mathbb R^r}\mathcal L(\gamma,\lambda)
    =\widehat\gamma'K\widehat\gamma
     -\widehat\gamma'K(K+\lambda W_n)^{-1}K\widehat\gamma
     -\lambda g^2.
\]

For any feasible $\gamma$,
\[
    q(\lambda)
    =\inf_{\widetilde\gamma\in\mathbb R^r}
      \mathcal L(\widetilde\gamma,\lambda)
    \le \mathcal L(\gamma,\lambda)
    \le N_n(b;\gamma),
\]
because $\lambda\ge0$ and $\gamma'W_n\gamma-g^2\le0$. Therefore, $q(\lambda)\le V_n(b,g)$ for every $\lambda\ge0$.

If $g>0$, the zero vector is strictly feasible and Slater's condition gives strong duality. Hence
\[
    V_n(b,g)=q(\lambda^\ast),
\]
where $\lambda^\ast$ is the finite optimal multiplier from Proposition \ref{prop:kkt}. Differentiation, or the envelope theorem, gives
\[
    q'(\lambda)
    =\gamma(\lambda)'W_n\gamma(\lambda)-g^2.
\]

If $g=0$, the primal feasible set is $\{0\}$ and
\[
    q(\lambda)
    =\widehat\gamma'K\widehat\gamma
     -\widehat\gamma'K(K+\lambda W_n)^{-1}K\widehat\gamma.
\]
When $\widehat\gamma\ne0$,
\[
    q'(\lambda)=\gamma(\lambda)'W_n\gamma(\lambda)>0,
\]
so $q$ is strictly increasing and
\[
    \sup_{\lambda\ge0}q(\lambda)
    =\widehat\gamma'K\widehat\gamma.
\]
The supremum is approached only as $\lambda\to\infty$. If $\widehat\gamma=0$, both the primal and dual values equal zero for every $\lambda\ge0$.
\end{proof}

\begin{proof}[\textbf{Proof of Proposition \ref{prop:spectral_par}}]
    With $C=W_n^{1/2}$, write $u=C\gamma$ and $\widehat u=C\widehat\gamma$. Then
    \[
         N_n(b;\gamma)=(\widehat u-u)'\tilde K(\widehat u-u),
        \qquad
        \gamma'W_n\gamma=\|u\|_2^2,
    \]
    where $\tilde K=C^{-1}KC^{-1}\succ0$.

    If $g=0$, the feasible set in $u$-coordinates is $\{0\}$. Hence $u^*=0$ and $\gamma^*=0$, and
    \[
        N_n(b;\gamma^*)=\widehat u'\tilde K\widehat u.
    \]
    Using $\tilde K=Q\Lambda Q'$ and $\tilde u=Q'\widehat u$ gives $\widehat u'\tilde K\widehat u=\sum_{j=1}^r\Lambda_j\tilde u_j^2$. If $\widehat u\ne0$, $u(\lambda)=(\tilde K+
    \lambda I)^{-1}\tilde K\widehat u$ is nonzero for every finite $\lambda$. Thus, the secular equation with $g=0$ has no finite root.

    Suppose $g>0$. The KKT stationarity condition in $u$-coordinates is
    \[
        2\tilde K(u-\widehat u)+2\lambda u=0,
        \qquad
        u(\lambda)=(\tilde K+
        \lambda I)^{-1}\tilde K\widehat u .
    \]
    Diagonalizing $\tilde K=Q\Lambda Q'$ yields
    \[
        u(\lambda)=Q\,\mathrm{diag}\!\left(\frac{\Lambda_j}{\Lambda_j+\lambda}\right)Q'\widehat u,
    \]
    and therefore
    \[
        \phi(\lambda)=\|u(\lambda)\|_2^2-g^2
        =\sum_{j=1}^r\left(\frac{\Lambda_j}{\Lambda_j+\lambda}\right)^2\tilde u_j^2-g^2 .
    \]
    Differentiation gives
    \[
        \phi'(\lambda)
        =-2\sum_{j=1}^r \frac{\Lambda_j^2}{(\Lambda_j+\lambda)^3}\tilde u_j^2\le0,
    \]
    with strict inequality when $\widehat u\ne0$.

    If $\|\widehat u\|_2\le g$, the unconstrained minimizer is feasible. Thus, $u^*=\widehat u$, $\gamma^*=\widehat\gamma$, $\lambda^*=0$, and $ N_n(b;\gamma^*)=0$. If $\|\widehat u\|_2>g$, then $\phi(0)>0$ and $\lim_{\lambda\to\infty}\phi(\lambda)=-g^2<0$. Since $\phi$ is continuous and strictly decreasing in this case, there is a unique finite root $\lambda^*>0$, and $u^*=u(\lambda^*)$.

    Finally,
    \[
        \widehat u-u(\lambda)=Q\,\mathrm{diag}\!\left(\frac{\lambda}{\Lambda_j+\lambda}\right)Q'\widehat u.
    \]
    Hence, for finite $\lambda^*$,
    \[
        N_n(b;\gamma^*)=(\widehat u-u(\lambda^*))'\tilde K(\widehat u-u(\lambda^*))
        =\sum_{j=1}^r\Lambda_j\left(\frac{\lambda^*}{\Lambda_j+\lambda^*}\right)^2\tilde u_j^2.
    \]
    Furthermore,
    \[
        \lim_{\lambda \rightarrow \infty}\sum_{j=1}^r\Lambda_j\left(\frac{\lambda}{\Lambda_j+\lambda}\right)^2\tilde u_j^2 = \sum_{j=1}^r \Lambda_j \tilde u^2_j = \widehat \gamma' K \widehat \gamma
    \]
    coincides with the $g=0$ case.
\end{proof}

\begin{proof}[\textbf{Proof of Proposition \ref{prop:par_closed}}]
Let $u=W_n^{1/2}\gamma$ and $\widehat u=W_n^{1/2}\widehat\gamma$. Since $K=nW_n$,
\[
    (\widehat\gamma-\gamma)'K(\widehat\gamma-\gamma)
    =n\|\widehat u-u\|_2^2,
    \qquad
    \gamma'W_n\gamma=\|u\|_2^2.
\]
The profiled numerator is therefore the Euclidean projection problem
\[
    \min_{u} n\|\widehat u-u\|_2^2
    \quad\text{subject to}\quad
    \|u\|_2\le g.
\]
Projection onto the ball gives
\[
    \gamma^\ast=
    \begin{cases}
        \widehat\gamma, & \|\widehat\gamma\|_{W_n}\le g, \\
        \dfrac{g}{\|\widehat\gamma\|_{W_n}}\widehat\gamma,
          & \|\widehat\gamma\|_{W_n}>g.
    \end{cases}
\]
Hence
\[
    pAR\{b;\Gamma_n(g,W_n)\}
    =\frac{n(\|\widehat\gamma\|_{W_n}-g)_+^2}{\widehat\sigma_n^2(b)}
    =\frac{(\|\widehat\gamma\|_K-\sqrt n\,g)_+^2}{\widehat\sigma_n^2(b)}.
\]
The exact-exclusion benchmark satisfies
\[
    AR(b;0)=\frac{\|\widehat\gamma\|_K^2}{\widehat\sigma_n^2(b)}.
\]
Substituting $\|\widehat\gamma\|_K=\widehat\sigma_n(b)\sqrt{AR(b;0)}$ yields
\[
    pAR\{b;\Gamma_n(g,W_n)\}
    =\left(
        \sqrt{AR(b;0)}
        -\frac{\sqrt n\,g}{\widehat\sigma_n(b)}
      \right)_+^2.
\]
\end{proof}


\newpage
\bibliographystyle{unsrtnat}
\bibliography{pAR_references}
\end{document}